\newcolumntype{g}{>{$}c<{$}}
\newtheorem{theorem}{\textbf{Theorem}}[section]
\newtheorem{proposition}{\textbf{Proposition}}[section]
\newtheorem{remark}{\textbf{Remark}}[section]
\newcommand{\cF}{\mathcal{F}}
\newcommand{\R}{\mathbb{R}}
\newcommand{\E}{\mathbb{E}}
\newcommand{\N}{\mathbb{N}}
\newcommand{\dd}{\textup{d}}
\newcommand{\epsibis}{\nu}
\newcommand{\phis}{\varphi_\textup{s}}
\newcommand{\phic}{\varphi_\textup{c}}
\newcommand{\ios}{\iota_\textup{s}}
\newcommand{\ioc}{\iota_\textup{c}}
\newcommand{\phicos}{\phi_\textup{s}}
\newcommand{\phicoc}{\phi_\textup{c}}
\newcommand{\midprice}{M}
\newcommand{\resfact}{\gamma}
\newcommand{\adjlag}{L_\text{adj}}
\newcommand{\regwin}{\Delta_\textup{RW}}
\newcommand{\tp}{^\top} 
\newcommand{\indi}[1]{\mathbbm{1}_{\{#1\}}}
\newcommand{\cov}{\mathscr{C} }
\begin{document}

\title{Extension and calibration \\ of a Hawkes-based optimal execution model
}
\author{Aur\'elien Alfonsi, Pierre Blanc\thanks{Universit\'e Paris-Est, CERMICS, Projet MATHRISK
    ENPC-INRIA-UMLV, 6 et 8 avenue Blaise Pascal, 77455 Marne La Vall\'ee, Cedex
    2, France, e-mails : alfonsi@cermics.enpc.fr, blancp@cermics.enpc.fr. P. Blanc is grateful to Fondation Natixis for his Ph.D. grant. This research also benefited
    from the support of the ``Chaire Risques Financiers'', Fondation du Risque.
}} 

\date{\today}
\maketitle

 \begin{abstract}
We provide some theoretical extensions and a calibration protocol for our former dynamic optimal execution model. The Hawkes parameters and the propagator are estimated independently on financial data from stocks of the CAC40. Interestingly, the propagator exhibits a smoothly decaying form with one or two dominant time scales, but only so after a few seconds that the market needs to adjust after a large trade. Motivated by our estimation results, we derive the optimal execution strategy for a multi-exponential Hawkes kernel and backtest it on the data for round trips. We find that the strategy is profitable on average when trading at the midprice, which is in accordance with violated martingale conditions. However, in most cases, these profits vanish  when we take bid-ask costs into account. 


{\bf Keywords:} Calibration,  Hawkes Processes, Backtest, Market Impact Model, Optimal Execution, Market Microstructure, High-frequency Trading, Price Manipulations. 

{\bf AMS (2010)}: 91G99, 91B24, 91B26, 60G55, 49J15.

\end{abstract}



\section{Introduction}

In the last fifteen years, the literature in quantitative finance has been enriched by many studies on optimal execution problems. The principle is as follows: one considers a particular trader who wants to liquidate a quantity $x_0$ of assets on the time interval~$[0,T]$. Thus, if $X_t$ is the position at time~$t$, one has $X_0=x_0$ and $X_{T+}=0$: $x_0>0$ (resp. $x_0<0$) corresponds to to a sell (resp. buy) program. The trader uses an execution strategy of minimal expected cost, which should take into account the fact that large trades have an impact on the market price. The works of Bertsimas and Lo~\cite{BL} and Almgren and Chriss~\cite{AC} are pioneers in this area. They have been followed by several authors who suggested extensions of their framework, such as Obizhaeva and Wang~\cite{OW} who considered a model that includes transient price impact. This feature allows to reproduce the mean-reversion that is observed in intra-day prices. On average, when a large trade impacts the market price, a fraction of this impact vanishes over time. 

In Alfonsi and Blanc~\cite{AB_DynHawkes}, we introduce a model where other liquidity takers trade the same asset as the large trader, and share the same price impact profile as her.
In this model, the volumes of incoming trades is described by a c\`adl\`ag (right continuous left limits) pure jump process $N_t$, and the market price $P_t$ at time~$t$ is given by
\begin{equation}\label{Price_1} P_t=\sum_{\tau < t} \Delta N_\tau \times \left[\frac \nu q +\frac{1 - \nu}q e^{-\rho (t-\tau)} \right],
\end{equation}
where the times $\tau$ are the jump times of the process~$N$ and $\Delta N_\tau=N_\tau-N_{\tau-}$ is the signed volume of the order at time~$\tau$. Thus,  $q>0$ is a measure of market liquidity, $\nu \in [0,1]$ the proportion of permanent impact, and $\rho>0$ the resilience speed of the transient part of the price.
In~\cite{AB_DynHawkes}, the order flow is modeled by a two-dimensional Hawkes process, which allows self and mutual excitation between buy and sell orders. An interesting feature of this model is that it accounts for herding behavior and meta-orders splitting, see Bacry and Muzy~\cite{BacryMuzyQF}. Namely, let $N^+$ and $N^-$ be two nondecreasing c\`adl\`ag pure jump processes that describe respectively the volumes of incoming buy and sell orders. We have $N=N^+-N^-$, and we proposed in~\cite{AB_DynHawkes} the following model for the respective jump intensities of $N^\pm$
\begin{align}\label{kappa_expo}
\kappa^+_t  &= \kappa_{\infty} + \sum_{\tau < t}  \left[ \indi{\Delta N_\tau>0} \phis\left( \frac{\Delta N_\tau}{m_1}\right) + \ \indi{\Delta N_\tau<0} \phic\left(- \frac{\Delta N_\tau}{m_1}\right) \right]e^{-\beta (t-\tau)}, \\
\kappa^-_t  &= \kappa_{\infty} + \sum_{\tau < t}  \left[\indi{\Delta N_\tau<0}\phis\left(- \frac{\Delta N_\tau}{m_1}\right) + \  \indi{\Delta N_\tau>0}\phic\left( \frac{\Delta N_\tau}{m_1}\right) \right]e^{-\beta (t-\tau)},
\end{align}
where $\kappa_\infty \geq 0$ is the  common baseline intensity of $N^+$ and $N^-$, $\beta$ is the resilience speed of the intensity and $\phis, \phic: \mathbb{R}^+ \rightarrow \mathbb{R}^+$ are measurable positive functions that encode intensity feedback. We assume that the sizes of orders are independent variables distributed according to a square integrable probability law $\mu$ on $\R_+$, and $m_1=\int_0^\infty x \mu(dx)$ is the average amplitude of the jumps of $N$. This price model is called MIH, as Mixed-Impact Hawkes.
In this model, we provide a closed-form solution for the optimal liquidation strategy, and determine a set of conditions on $\nu,\rho,\beta,\phis,\phic$ that exclude Price Manipulation Strategies (as defined in~\cite{HS}) from the model. These are referred to as the MIHM (Mixed Impact Hawkes Martingale) conditions.

One of the benefits of the framework introduced in~\cite{AB_DynHawkes} is that it is possible to calibrate the model on financial data, without effectively trading (which can be costly). One only has to observe the order flow and price process of the market, and to estimate the price impact of trades issued by other participants, which is expected to be similar to the impact that the liquidating trader would have.
The aim of the present paper is to conduct such a calibration on real stock data. This enables us to evaluate the realism of the theoretical price model of~\cite{AB_DynHawkes}, as well as the performance of the optimal strategy in a practical context. 
Since our main goal is to confront the model to market data, we test the validity of our calibration protocol on simulations and we leave its mathematical justification for further research.

Many studies have explored the estimation of Hawkes parameters in various contexts (see for instance Bacry et al.~\cite{BDM_NP}, Bouchaud and Hardiman~\cite{HBB}, Reynaud-Bouret~\cite{R-B}, Lemonnier and Vayatis~\cite{LV}). The present paper focuses on marked Hawkes processes used to model price jumps triggered by transactions in financial markets, where the marks of the jumps are either the traded volumes or the price jumps. As opposed to most Hawkes models in finance, price moves which do not correspond to trades are treated separately through the propagator function. Propagator price models have been studied extensively in theoretical frameworks such as Gatheral~\cite{Gatheral}, Alfonsi et al.~\cite{ASS} and Gatheral et al.~\cite{GSS}, Bouchaud et al.~\cite{BGPW} and Farmer et al.~\cite{FGLM}. However, to the best of our knowledge, very few empirical studies have described the form of the propagator curve, or only asymptotically. Here, we suggest an estimation protocol for the propagator and discuss the quality of fit of exponential and multi-exponential decays. We also describe the  behavior of the curve on the first seconds, where it is found to have an increasing part.

The paper is structured as follows. First, we present the model in Section~\ref{sec_model}. It extends the one considered in~\cite{AB_DynHawkes} to general decay kernels, while preserving most of its properties. Then, in Section~\ref{sec_cal} we describe our dataset and our calibration method. In particular, we explain how we slightly modify the original model to be in accordance with practical considerations. Section~\ref{section:calib_results} validates our calibration procedure with simulations and discusses the calibration results on real stock data. Eventually, we test in Section~\ref{section:eval_model} the relevance of the optimal execution strategy described in Section~\ref{sec_model} and discuss whether it may constitute Price Manipulation Strategies, i.e. round trips that are profitable in average.

\section{Model settings}\label{sec_model}

In view of its estimation to market data, we make the model of~\cite{AB_DynHawkes} more general by adding further parameters. First, even if it is appealing to see the price as the pure result of past trades, equation~\eqref{Price_1} is probably too restrictive and one should add some noise. Besides, we know that adding a martingale to the price process does not change the main results on the model, see Remark 2.6 in~\cite{AB_DynHawkes}. Second, we chose the resilience on the price and on the intensity to be exponential, and one may like to consider a priori more general decay functions. Thus, we consider the following  propagator model for the price:
\begin{equation}\label{Propag_model}
P_t= \frac{1}{q} \sum_{\tau < t} \Delta N_\tau G(t-\tau) + \sigma W_t. 
\end{equation}
The process $W$ is a Brownian motion independent of $N$ that takes into account the non-deterministic noise in limit orders and cancellations. The parameter $\sigma>0$ tunes the volatility of this noise. The function $G : \R_+ \mapsto \R$ is the propagator function of the market, that encodes the average evolution of the price between two market orders, which takes form through limit orders and cancellations. As before, $q>0$ describes the market liquidity and allows to normalize $G$ such that $G(0)=1$.  The propagator model is the same as the one considered by Alfonsi et al~\cite{ASS} and Gatheral et al.~\cite{GSS} and generalizes~\eqref{Price_1}. Similar models have been considered for instance by Bouchaud et al.~\cite{BGPW} and Gatheral~\cite{Gatheral}. In the same way, we consider a general decay function $K:\R_+ \mapsto \R_+$ for the intensities of $N^+$ and $N^-$. Namely, we assume that the jump intensities of  $N^+$ and $N^-$ are respectively given by
\begin{align}
\kappa^+_t  &= \kappa_{\infty} + \sum_{\tau < t}  \left[ \indi{\Delta N_\tau>0} \phis\left( \frac{\Delta N_\tau}{m_1}\right) + \ \indi{\Delta N_\tau<0} \phic\left(- \frac{\Delta N_\tau}{m_1}\right) \right]K (t-\tau), \label{kappa_gen} \\
\kappa^-_t  &= \kappa_{\infty} + \sum_{\tau < t}  \left[\indi{\Delta N_\tau<0}\phis\left(- \frac{\Delta N_\tau}{m_1}\right) + \  \indi{\Delta N_\tau>0}\phic\left( \frac{\Delta N_\tau}{m_1}\right) \right]K (t-\tau). \nonumber
\end{align}
with $K(0)=1$. We also introduce the average self-excitation $\ios$ and the average cross-excitation $\ioc$ $$\ios = \int_0^\infty \phis(v/m_1) \mu(\dd v) \text{ and } \ioc =\int_0^\infty \phic(v/m_1) \mu(\dd v).$$ Therefore, the model presented in~\cite{AB_DynHawkes} corresponds to the exponential decay functions $G(t)=e^{-\rho t}$ and $K(t)=e^{-\beta t}$. By estimating more general functions $G(t)$ and $K(t)$ in the sequel, we are able to assess the relevance of the exponential decay assumption. 

\subsection{Markovian specification of the model}\label{section:markov_spec}

Considering general decay kernels is very natural from a modeling point of view. Unfortunately, it generally leads to drop the Markov property of the price process, which is important in the context of optimal execution. Still, for completely monotone decay kernels, it is possible to get back Markovian dynamics for the price. This has already been studied in Alfonsi and Schied~\cite{AS_SICON} for the price propagator model. Considering completely monotone kernels amounts to assume the existence of probability measures $\tilde{\lambda}(\dd \rho)$ and $\tilde{w}(\dd \rho)$ on $\R_+^*$ such that 
\begin{equation}\label{comp_monotone_kernels}
G(t)=\nu+ (1-\nu) \int_{\R_+} e^{-\rho t} \tilde{\lambda}(\dd \rho), \ K(t)=\int_{\R_+} e^{-\rho t} \tilde{w}(\dd \rho).
\end{equation}
Here, for the sake of simplicity, we consider probability measures with finite support. We can then assume without loss of generality\footnote{Note that $G$ and $K$ may still include different decay speeds: one only has to include all the speeds in the $\rho_i$'s and to set some weights $\lambda_i, w_i$ to zero if necessary.}
that
\begin{equation}\label{expo_mixture}
G(u) = \nu + \sum_{i=1}^p \lambda_i \exp(-\rho_i u),
\quad
K(u) = \sum_{i=1}^p w_i \exp(-\rho_i u),
\end{equation}
with $0<\rho_1<\dots <\rho_p$, $\nu, \lambda_i, w_i \ge 0$ such that $\nu + \sum_{i=1}^p \lambda_i = 1$ and $\sum_{i=1}^p w_i=1$. For $i\in \{1,\dots, p\}$, we introduce the following processes
\begin{align}
\dd D_t^i &= -\rho_i \ D_t^i \ \dd t \ + \ \frac{\lambda_i} q \ \dd N_t, \label{def_Di} \\
\textup{d}{\kappa_t^+}^{(i)}  &= \ - \rho_i \ ({\kappa_t^+}^{(i)} - \kappa_{\infty}/p) \ \textup{d}t
\ + \ w_i [\phis(\textup{d}N^+_t/m_1) + \ \phic(\textup{d}N^-_t/m_1)],
\label{def_kpi}  \\
\textup{d}{\kappa_t^-}^{(i)}  &= \ - \rho_i \ ({\kappa_t^-}^{(i)} - \kappa_{\infty}/p) \ \textup{d}t
\ + \ w_i[\phic(\textup{d}N^+_t/m_1) + \ \phis(\textup{d}N^-_t/m_1)].\label{def_kmi}
\end{align}
We also define the process $\dd S_t = \frac\nu q \ \dd N_t$ that describes the permanent impact component of the price. Then, it is easy to check from~\eqref{expo_mixture}, \eqref{Propag_model} and~\eqref{kappa_gen} that
\begin{equation}
P_t=S_t+\sum_{i=1}^p D^i_t +\sigma W_t, \ \kappa_t^\pm=\sum_{i=1}^p{\kappa_t^\pm}^{(i)}, \label{price_kappa_mkv}
\end{equation}
and the process $(P,S,D^i,{\kappa^\pm}^{(i)})$ satisfies the Markov property. 

\begin{remark}\label{rem_mkv}
In the general setting~\eqref{Propag_model} and~\eqref{kappa_gen}, we implicitly assume that the stationarity conditions $(\ios+\ioc)\int_0^\infty K(s)ds<1, \ G$ integrable are satisfied, so that the sums are well-defined. This is no longer required in the Markovian case since the law of $(P_t,S_t,D^i_t,{\kappa^\pm_t}^{(i)};t\ge 0)$ is determined by the initial condition $(P_0,S_0,D^i_0,{\kappa^\pm_0}^{(i)})$. In the particular case $D^i_0=0$ for all~$i$, and only in this case, we have $P_t=P_0+\frac{1}{q} \sum_{0<\tau < t} \Delta N_\tau G(t-\tau) + \sigma W_t$. Thus, if $|D^i_0| [G(t)-G(\infty)] \ll P_t$ for all $i\in \{1,\cdots,p\}$ and all $t\geq t_0$, then the approximation $P_t \approx P_0+\frac{1}{q} \sum_{0<\tau < t} \Delta N_\tau G(t-\tau) + \sigma W_t$ is reasonable for $t\geq t_0$. 
\end{remark}

Besides the Markov property, the particular form~\eqref{expo_mixture} enables us to calculate explicitly the auto-covariance function of the number of jumps as explained by Hawkes in~\cite{H71}, Section~3. This auto-covariance structure is of empirical interest, and serves as a starting point for our calibration procedure, see Section~\ref{section:calib_method_int}. 
The total intensity $\Sigma_t = \kappa^+_t + \kappa^-_t$ has the dynamics
\begin{equation}
\Sigma_t = 2 \kappa_\infty + \iota \int_{-\infty}^t K(t-s) \ \dd J_s,
\nonumber
\end{equation}
where $\iota = \ios+\ioc $ is the average jump size of $\Sigma_t$, and $\dd J_t= [(\phis+\phic)(\textup{d}N^+_t/m_1) + \ (\phis+\phic)(\textup{d}N^-_t/m_1)]/\iota$ has jumps normalized to unity. 
We assume that the stationarity condition $\iota \int_0^\infty K(s) ds <1$ holds, see Theorem~1 in~\cite{BMH2}, and that the intensity process $(\kappa^+_t, \kappa^-_t)$ in its stationary state. We consider the symmetric auto-covariance function $\cov$ of the infinitesimal increments of $J$. It is defined for $\tau > 0$ by
\begin{equation}
\cov(\tau) \ = \ \underset{h\rightarrow0^+}{\lim} \frac1{h^2} \E[ (J_{t+h}-J_t)(J_{t-\tau+h}-J_{t-\tau})]- 4\overline{\kappa}^2
\ = \ \underset{h\rightarrow0^+}{\lim} \frac1h \E[ \Sigma_t \  (J_{t-\tau+h}-J_{t-\tau})]- 4\overline{\kappa}^2,
\label{eqn:autocov_Hawkes_def}
\end{equation}
where $\overline{\kappa} = \kappa_\infty / (1-\iota/\beta)$ is the common stationary mean of $\kappa^+$ and $\kappa^-$. As derived in~\cite{H71}, one gets the self-consistent equation on $\cov$: for $\tau>0$,
\begin{equation}
\cov(\tau) = 2 \overline{\kappa} \iota K(\tau) + \iota \int_{-\infty}^\tau K(\tau-u) \cov(u) \dd u.
\label{eqn:autocov_Hawkes_consist_eq}
\end{equation}

\begin{proposition}\label{prop_autocov} Let us assume that $K$ satisfies~\eqref{expo_mixture} with $w_1,\dots,w_p>0$ and the stationarity condition $\iota \sum_{i=1}^p \frac{w_i}{\rho_i} <1$. Then, the autocovariance function is given by
\begin{equation}
\cov(\tau) = \sum_{j=1}^p a_j \exp(-b_j |\tau|).  \ \tau \in \R^*.
\label{eqn:multi_autocorrel}
\end{equation}
The coefficients $a_1, \cdots, a_p$ and $b_1, \cdots, b_p$ are positive and determined as follows: $b_1<\dots<b_p$ are the distinct roots of the polynomial functions $P(X) = \prod_{i=1}^p (\rho_i - X) - \iota \sum_{i=1}^p w_i \prod_{k\neq i} (\rho_k-X)$ and $(a_1b_1, \cdots, a_pb_p)\tp=\overline{\kappa} \ B^{-1} \ (1,\cdots,1)\tp$, where $B$ is the Cauchy matrix $B_{i,j}=\frac{1}{\rho_i^2-b_j^2}$. 
\end{proposition}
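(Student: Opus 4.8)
The plan is to solve the self-consistent equation~\eqref{eqn:autocov_Hawkes_consist_eq} explicitly by an exponential ansatz, exploiting that $K$ is itself a finite sum of exponentials, and to read off the coefficients by matching. Throughout I use that $\cov$ is symmetric, so that for $\tau>0$ the two-sided integral splits, after the change of variable $u=-v$ on the negative half-line, as
\[
\int_{-\infty}^\tau K(\tau-u)\cov(u)\,\dd u=\int_0^\tau K(\tau-u)\cov(u)\,\dd u+\int_0^\infty K(\tau+v)\cov(v)\,\dd v.
\]

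First I would localise the roots $b_j$. Setting $f(X)=\iota\sum_{i=1}^p w_i/(\rho_i-X)$, the equation $P(X)=0$ is equivalent, after clearing denominators, to $f(X)=1$ away from the poles $X=\rho_i$. Since $f'(X)=\iota\sum_i w_i/(\rho_i-X)^2>0$, the function $f$ is strictly increasing on each interval between consecutive poles; it runs from $0^+$ to $+\infty$ on $(-\infty,\rho_1)$, from $-\infty$ to $+\infty$ on each $(\rho_i,\rho_{i+1})$, and from $-\infty$ to $0^-$ on $(\rho_p,\infty)$. Hence $f=1$ has exactly one solution in $(-\infty,\rho_1)$ and one in each $(\rho_i,\rho_{i+1})$, i.e. exactly the $p$ roots of the degree-$p$ polynomial $P$; the stationarity condition $\iota\sum_i w_i/\rho_i<1$ reads $f(0)<1$, which forces the first root into $(0,\rho_1)$. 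This yields the strict interlacing $0<b_1<\rho_1<b_2<\dots<b_p<\rho_p$, so the $b_j$ are positive, distinct, and distinct from every $\rho_i$.

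Next I would insert the ansatz $\cov(\tau)=\sum_{j=1}^p a_j e^{-b_j|\tau|}$ into~\eqref{eqn:autocov_Hawkes_consist_eq}. Computing the elementary integrals and collecting terms, the right-hand side becomes a linear combination of the exponentials $e^{-b_j\tau}$ and $e^{-\rho_i\tau}$, which are linearly independent thanks to the interlacing. Matching the coefficients of $e^{-b_j\tau}$ gives $a_j=\iota a_j\sum_i w_i/(\rho_i-b_j)$, i.e. $f(b_j)=1$, which holds identically by our choice of $b_j$ and so imposes no constraint. Matching the coefficients of $e^{-\rho_i\tau}$ gives, after using $\frac{1}{\rho_i+b_j}-\frac{1}{\rho_i-b_j}=\frac{-2b_j}{\rho_i^2-b_j^2}$, the linear system $\sum_{j}a_jb_j/(\rho_i^2-b_j^2)=\overline{\kappa}$ for $i=1,\dots,p$, that is $B\,(a_1b_1,\dots,a_pb_p)\tp=\overline{\kappa}\,(1,\dots,1)\tp$ with the Cauchy matrix $B_{i,j}=1/(\rho_i^2-b_j^2)$.

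Finally I would solve and check signs. As the $\rho_i^2$ are distinct, the $b_j^2$ are distinct, and $\rho_i^2\neq b_j^2$ (interlacing), $B$ is an invertible Cauchy matrix, giving the announced formula $(a_1b_1,\dots,a_pb_p)\tp=\overline{\kappa}B^{-1}(1,\dots,1)\tp$. To obtain $a_j>0$ I would avoid inverting $B$ directly and instead use a partial-fraction identity: the rational function $z\mapsto\overline{\kappa}-\sum_j a_jb_j/(z-b_j^2)$ has denominator $\prod_j(z-b_j^2)$, a degree-$p$ numerator, and vanishes at every $z=\rho_i^2$, hence equals $\overline{\kappa}\prod_i(z-\rho_i^2)/\prod_j(z-b_j^2)$; taking the residue at $z=b_j^2$ yields $a_jb_j=(-1)^{p+1}\overline{\kappa}\prod_i(\rho_i^2-b_j^2)/\prod_{k\neq j}(b_j^2-b_k^2)$, whose sign is $(-1)^{p+1}(-1)^{j-1}/(-1)^{p-j}=1$ by counting the $j-1$ negative numerator factors and the $p-j$ negative denominator factors. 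Since $b_j>0$, this gives $a_j>0$. Exhibiting a solution of exponential form then concludes, because~\eqref{eqn:autocov_Hawkes_consist_eq} has a unique solution vanishing at infinity under the stationarity condition, by the standard renewal/Laplace-transform argument of Hawkes~\cite{H71}. I expect the two genuinely delicate points to be the root-localisation giving the interlacing (on which linear independence of the exponentials and invertibility of $B$ both depend) and the positivity of the $a_j$, for which the residue computation is the clean route where a brute-force inversion of $B$ would be far messier.
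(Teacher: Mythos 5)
Your proposal is correct and follows essentially the same route as the paper: plug the exponential ansatz into~\eqref{eqn:autocov_Hawkes_consist_eq}, match the coefficients of $e^{-b_j\tau}$ and $e^{-\rho_i\tau}$ to get $P(b_j)=0$ and the Cauchy system $B(a_1b_1,\dots,a_pb_p)\tp=\overline{\kappa}(1,\dots,1)\tp$, and deduce positivity from the interlacing $0<b_1<\rho_1<\dots<b_p<\rho_p$ (which you obtain by monotonicity of $f(X)=\iota\sum_i w_i/(\rho_i-X)$ where the paper tracks the sign of $P$ at $0,\rho_1,\dots,\rho_p$ -- the same fact). The only cosmetic difference is that your partial-fraction/residue computation of $a_jb_j$ rederives by hand the row-sum formula $v_i=-A(b_i^2)/B'(b_i^2)$ for the inverse Cauchy matrix that the paper simply cites from Schechter, making your sign count self-contained.
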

\begin{proof}
 Equation~\eqref{eqn:autocov_Hawkes_consist_eq} then yields for $\tau>0$
\begin{align}
\sum_{j=1}^p a_j \exp(-b_j \tau) 
&= 2 \iota \sum_{i=1}^p w_i \left[ \overline{\kappa} - \sum_{j=1}^p \frac{a_j b_j}{(\rho_i+b_j)(\rho_i-b_j)} \right] \exp(-\rho_i \tau) 
\ + \ \iota \sum_{j=1}^p a_j \left[ \sum_{i=1}^p \frac{w_i}{\rho_i-b_j} \right] \exp(-b_j \tau).
\nonumber 
\end{align}
Therefore, \eqref{eqn:autocov_Hawkes_consist_eq} holds if we have
\begin{align*}
\forall j,  \ \iota  \left[ \sum_{i=1}^p \frac{w_i}{\rho_i-b_j} \right]=1, \ \ \forall i,   \sum_{j=1}^p \frac{a_j b_j}{\rho_i^2-b_j^2}=\overline\kappa.
\end{align*}
The first equation gives precisely $P(b_j)=0$. Since $P(0)>0$ from the stationarity condition and $P(\rho_l)=-\iota w_l \prod_{k \not = l}(\rho_k-\rho_l)$ has the same sign as $(-1)^l$,  we have by the intermediate value theorem that $0<b_1<\rho_1<b_2<\rho_2<\dots<\rho_{p-1}<b_p<\rho_{p}$. These coefficients are distincts and therefore the Cauchy matrix $B$ is invertible. Let $v=B^{-1} \ (1,\cdots,1)\tp$: $v_i$ is the $i^th$ row sum of $B^{-1}$. By Theorem~2 in~\cite{Schechter}, $v_i= -A(b_i^2)/B'(b_i^2)$, where $A(x)=\prod_i (x-\rho_i^2)$, $B(x)=\prod_i (x-b_i^2)$. 
This gives in particular $v_i>0$ and thus $a_i>0$. Last, it is easy to check~\eqref{eqn:multi_autocorrel} is the unique function satisfying~\eqref{eqn:autocov_Hawkes_consist_eq}.
\end{proof}
In the mono-exponential case $p=1$, Proposition~\ref{prop_autocov} gives $\iota = \rho - b, \ a b = (\rho+b)(\rho-b) \overline{\kappa}$,
which yields
\begin{equation}
\cov(\tau) = \frac{\iota(2\rho-\iota)}{2\rho} \times \frac{2 \kappa_\infty}{(1-\iota/\rho)^2} \times \exp(-(\rho-\iota)|\tau|),
\nonumber
\end{equation}
as found by Hawkes in~\cite{H71}.

\subsection{Trading strategies and a generalized no-arbitrage condition}\label{subsec_noarb}

We now specify the trading rules in our model. We denote by $(\mathcal{F}_t)$ the natural filtration generated by the process $(P,S,D^i,{\kappa^\pm}^{(i)})$. As in~\cite{AB_DynHawkes}, we consider a particular trader called ``strategic trader'' and denote by $X_t$ the number of assets she holds at time~$t$.  We assume that the strategy $X$ is $(\mathcal{F}_t)$-adapted, c\`agl\`ad, square integrable and with bounded variations. The c\`agl\`ad (left continuous - right limits) assumption means that the strategic trader is able to react instantly to the flow of trades. For simplicity and tractability, we assume that the trades of the strategic trader affect the price in the same fashion as other trades, but leave unchanged the flow of orders~$N$. To be more precise, we now assume that
$$ dS_t= \frac\nu q  ( \dd N_t \ + \  \dd X_t), \  \  \dd D_t^i = -\rho_i \ D_t^i \ \dd t \ + \ \frac{\lambda_i} q ( \dd N_t \ + \  \dd X_t), $$
but the intensities ${\kappa_t^+}^{(i)}$ and ${\kappa_t^-}^{(i)}$ remain as defined by~\eqref{def_kpi} and~\eqref{def_kmi}. The price as well as the  intensities  $\kappa^+_t$ and $\kappa^-_t$ of buy and sell orders are still defined by~\eqref{price_kappa_mkv}. Last, the cost of the trade $\Delta X_t=X_{t+}-X_t$ at time $t$ is assumed to be given by
$$  \frac{P_t+P_{t+}}{2} \Delta X_t =  P_t \Delta X_t+ \frac{1}{2q} (\Delta X_t)^2.$$
This yields the following cost for a liquidation strategy $X$ on $[0,T]$ (i.e. such that $X_{T+}=0$)
\begin{equation}\label{costX} C(X)=\int_{[0,T)} P_u \ \textup{d}X_u 
\ + \ \frac1{2q} \underset{\tau \in \mathcal{D}_X \cap [0,T)}{\sum} (\Delta X_\tau)^2
 \ - \ P_T X_T \ + \ \frac1{2q} \ X_T^2, 
\end{equation}
where $\mathcal{D}_X$ is the (countable) set of discontinuities of~$X$. 

When considering high-frequency trading, a standard approach is to define arbitrages as strategies that can make money on average, with no specific exogenous signal. Roughly speaking, one may expect that by repeating such strategies one obtains a classical almost sure arbitrage. Thus, Huberman and Stanzl have proposed in~\cite{HS} the following definition of a {\it Price Manipulation Strategy}: this is a strategy $X$ such that $X_0=X_{T+}=0$ and $\E[C(X)]<0$. 
\begin{theorem}\label{thm_MIHMext} The model excludes Price Manipulation Strategies if, and only if $P_t$ is a $(\cF_t)$-martingale when $X_t=0$ for any~$t$. In this case, the optimal strategy is the one given by Theorem 2 (see also Section 1.3) in Alfonsi and Schied~\cite{AS_SICON}.

Besides, under the specification~\eqref{def_kpi}, \eqref{def_kmi} and~\eqref{price_kappa_mkv} of the order flow~$N=N^+-N^-$, the model does not admit PMS if, and only if,
\begin{equation}\label{cond_MIHM_ext}
\forall i \in \{1,\dots,p\}, \quad  (\ios-\ioc) w_i = \lambda_i \rho_i, 
\quad  \frac{m_1}q ({\kappa_0^+}^{(i)}-{\kappa_0^-}^{(i)}) - \rho_i D_0^i=0,
\end{equation}
and $\phis\left(y/m_1\right)-\phic\left(y/m_1\right)=(\ios-\ios)y/m_1$ for all $y\ge 0$ such that $\forall \epsilon>0, \ \mu((y-\epsilon,y+\epsilon))>0$. 
\end{theorem}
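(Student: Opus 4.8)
The plan is to prove the two parts of Theorem~\ref{thm_MIHMext} in sequence. For the first part, the characterization ``no PMS $\iff$ $P_t$ is a $(\cF_t)$-martingale when $X_t\equiv 0$,'' I would follow the classical Huberman--Stanzl argument. First I would show the ``if'' direction: assuming $P$ is a martingale in the no-trade dynamics, I would compute $\E[C(X)]$ for an arbitrary admissible round trip ($X_0=X_{T+}=0$) using the cost formula~\eqref{costX}. The key manipulation is integration by parts on $\int_{[0,T)} P_u\,\dd X_u$ to re-express the cost in terms of $\int X_u\,\dd P_u$ plus quadratic-covariation and boundary terms; since the strategic trader's own trades feed back into $P$ through $S$ and the $D^i$, I would separate the price into the exogenous (martingale) part and the self-impact part. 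The martingale part contributes zero in expectation, and the self-impact part should organize into a manifestly nonnegative quadratic form, yielding $\E[C(X)]\ge 0$. For the ``only if'' direction, I would argue the contrapositive: if $P$ fails to be a martingale when $X\equiv 0$, then there is a compensator/drift of definite sign on some event, and one can construct a small round trip that exploits it to make $\E[C(X)]<0$. The identification of the optimal strategy with Theorem~2 of~\cite{AS_SICON} should then follow by observing that, under the martingale condition, the expected cost reduces exactly to the deterministic propagator functional analyzed there.

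For the second part, I would translate the abstract martingale condition into the explicit constraints~\eqref{cond_MIHM_ext} using the Markovian specification~\eqref{def_Di}--\eqref{def_kmi} and~\eqref{price_kappa_mkv}. The concrete step is to compute the drift (predictable finite-variation part) of $P_t=S_t+\sum_i D^i_t+\sigma W_t$ when $X\equiv 0$. Using $\dd S_t=\frac\nu q\dd N_t$ and $\dd D^i_t=-\rho_i D^i_t\,\dd t+\frac{\lambda_i}q\dd N_t$, and compensating $\dd N_t=\dd N^+_t-\dd N^-_t$ by its intensity $(\kappa^+_t-\kappa^-_t)\,\dd t$, I would collect the $\dd t$ terms. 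The martingale requirement is that this total drift vanish identically. The drift splits into a term proportional to $\kappa^+_t-\kappa^-_t$ (coming from the jump compensators of $S$ and the $D^i$) and the resilience term $-\sum_i\rho_i D^i_t$. I would then express $\kappa^+_t-\kappa^-_t=\sum_i({\kappa^+_t}^{(i)}-{\kappa^-_t}^{(i)})$ and find, from~\eqref{def_kpi}--\eqref{def_kmi}, that the signed intensity difference is itself driven by $(\phis-\phic)$ applied to the order flow with weights $w_i$ and decay $\rho_i$.

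The heart of the argument is matching the dynamics of the signed price impact $\frac{m_1}q(\kappa^+-\kappa^-)$-type quantity against $\sum_i\rho_i D^i$ so that the drift is zero for all realizations of the order flow. I expect that requiring the drift to vanish pathwise forces, mode by mode in~$i$ (because the $\rho_i$ are distinct, so the exponential decay modes are linearly independent), the relation $(\ios-\ioc)w_i=\lambda_i\rho_i$ together with the initial-condition matching $\frac{m_1}q({\kappa_0^+}^{(i)}-{\kappa_0^-}^{(i)})-\rho_i D_0^i=0$. The linear-feedback condition $\phis(y/m_1)-\phic(y/m_1)=(\ios-\ioc)y/m_1$ on the support of~$\mu$ arises because the drift must vanish not merely in expectation over jump sizes but for each realized jump: the contribution of a jump of size $y$ to the signed-intensity dynamics is $(\phis-\phic)(y/m_1)$, whereas its contribution to the price impact is linear in~$y$, so the two can cancel for every admissible~$y$ only if $(\phis-\phic)$ is exactly linear there.

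The main obstacle I anticipate is the bookkeeping in the first part: carefully handling the self-impact feedback through~$S$ and the $D^i$ when integrating by parts, and verifying that the resulting self-impact quadratic form is genuinely nonnegative (this is where the structure of the propagator~$G$, in particular its complete monotonicity via~\eqref{expo_mixture}, should enter, matching the positive-definiteness hypotheses of~\cite{AS_SICON}). The translation in the second part is more mechanical once the drift is written out, but one must be careful to distinguish the ``for all $t$'' pathwise requirement, which pins down the $w_i,\lambda_i,\rho_i$ relation and the initial conditions, from the ``for all admissible jump sizes $y$'' requirement, which pins down the linearity of $\phis-\phic$ on the support of~$\mu$.
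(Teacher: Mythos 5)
Your proposal follows essentially the same route as the paper: the cost is split into a part linear in $X$ driven by the exogenous order-flow component of the price (whose expectation vanishes for all round trips exactly when that component is a martingale, a small scaled round trip exploiting any drift otherwise) plus the nonnegative deterministic self-impact cost of~\cite{AS_SICON}, and the martingale condition is then translated by writing the drift as $\sum_i A^i_t$ with $A^i_t=\frac{m_1}{q}\delta^i_t-\rho_i D^i_t$, separating the modes via the distinctness of the $\rho_i$ (a Vandermonde argument between consecutive jumps) and reading off the linearity of $\phis-\phic$ on the support of $\mu$ from the requirement that the cancellation hold for every realized jump size. This matches the paper's proof in both structure and substance.
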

This theorem extends Theorem~2.1 and Proposition~5.1 of~\cite{AB_DynHawkes} to completely monotone kernels $G$ and $K$. Its proof relies on the same arguments that we recall briefly in Appendix~\ref{appendix:proof_MIHMext}. An interesting consequence of~\eqref{cond_MIHM_ext} is the connection made between the price propagator and the decay kernel of the intensity. For general completely monotone functions~\eqref{comp_monotone_kernels}, this yields in particular the following condition:
\begin{equation}
\forall \rho>0, \quad (\ios-\ioc) \ \tilde{w}(\dd \rho) \ = \ (1-\nu) \ \rho\  \tilde{\lambda}(\dd \rho).
\label{eqn:MIHM_cond_measure}
\end{equation}
Thus, to exclude PMS,  $\tilde{w}(\dd \rho)$ has to be proportional to $\rho\  \tilde{\lambda}(\dd \rho)$ and therefore the decay speed of $K$ should be  higher than that of $G$, whatever their functional form (as soon as they are completely monotone). Besides, we can make the two  following comments. 

First, by dividing both sides of equation~\eqref{eqn:MIHM_cond_measure} by $\rho$, integrating on $(0,+\infty)$ and using Fubini's theorem, one gets the necessary (but not sufficient) martingale price condition
\begin{align}
1-\nu \ &= \ (\ios-\ioc) \int_0^\infty \frac{\tilde{w}(\dd \rho)}\rho = (\ios-\ioc) \int_0^\infty \left( \int_0^\infty \exp(-\rho t) \ \dd t \right) \tilde{w}(\dd \rho)
\nonumber \\
&= (\ios-\ioc) \int_0^\infty \left( \int_0^\infty \exp(-\rho t) \ \tilde{w}(\dd \rho) \right) \dd t
\nonumber \\
&=  (\ios-\ioc) \int_0^\infty K(t) \ \dd t
\ =: \ \text{DBR}.
\label{eqn:DBR_transient}
\end{align}
This equation means that the proportion of transient impact should be equal to the \textit{directional branching ratio}, which we define as
\begin{equation}
\text{DBR} \ = \ (\ios-\ioc) \int_0^\infty K(t) \ \dd t \ = \ \frac{\ios-\ioc}{\ios+\ioc} \times \text{BR},
\label{eqn:DBR_BR}
\end{equation}
where $\text{BR}$ is the usual branching ratio for Hawkes-based models that count positively price changes of both signs (see for instance Hardiman and Bouchaud~\cite{HB_Branching}). This result is intuitive since the DBR represents the average number of \enquote{children trades of the same sign} for each trade, which, to obtain a diffusive price process, should be equal to the proportion of price impact that vanishes over time. Although it is only a necessary condition, equation~\eqref{eqn:DBR_transient} gives a quite general numerical criterion to assess empirically whether an observed price process is compatible with the martingale property, or rather persistent ($\text{DBR}>1-\nu$) or mean-reverting ($\text{DBR}<1-\nu$).

Second, the power-law kernels
\begin{equation}
G(u) = \nu + (1-\nu) (1+c_G \times t)^{-a},
\quad
K(u) = (1+c_K \times t)^{-(1+\epsilon)}
\nonumber
\end{equation}
are particular cases of~\eqref{comp_monotone_kernels}, with
\begin{equation}
\tilde{\lambda}(\dd \rho) = \frac{\rho^{a-1} \exp(-\rho/c_G)}{\Gamma(a) \ c_G^a} \ \dd \rho,
\quad
\tilde{w}(\dd \rho) = \frac{\rho^\epsilon \exp(-\rho/c_K)}{\Gamma(1+\epsilon) \ c_K^{1+\epsilon}} \ \dd \rho.
\nonumber
\end{equation}
Equation~\eqref{eqn:MIHM_cond_measure} then yields
\begin{equation}
a=\epsilon, \quad c_G = c_K = c, \quad \frac{\ios-\ioc}{\epsilon c} = 1-\nu.
\nonumber
\end{equation}
Let us recall that if $K$ is a power-law, one must have $\epsilon>0$ to obtain integrability, which is a necessary condition for the Hawkes process to be stationary. Also, in that case, the process can only have long-memory (i.e. non-integrable auto-covariance) if the Hawkes norm is equal to one\footnote{We refer to Hardiman et al.~\cite{HBB} for a test of this property on market data, and to Jaisson and Rosenbaum~\cite{JaissonRosenbaum} for a study of Hawkes processes with an Hawkes norm close to one.} and if $\epsilon \in (0,1/2)$, see Br\'emaud and Massouli\'e, Theorem 1 in~\cite{BMH}. In that case, the auto-covariance decays asymptotically as $t^{-(1-2\epsilon)}$. We thus reach exactly the same conclusion as Bouchaud et al.~\cite{BGPW}, who give the diffusive price condition $\beta = (1-\gamma)/2$, where $\gamma$ is the decay exponent of the auto-correlation of trade signs, and $\beta = a$ is the decay exponent of the propagator. Note that we used a totally different approach (absence of Price Manipulation Strategies), and that equation~\eqref{eqn:MIHM_cond_measure} is a possible generalization of their result to a wider class of kernels, within the Hawkes framework.

The calibration results presented in Section~\ref{section:calib_results} allow us to confront real stock data to the martingale price condition obtained above. In particular, it is easy to check whether the proportion of transient impact $1-\nu = \sum \lambda_i$ is smaller, equal or greater than the directional branching ratio $\text{DBR}$.
 Although we do not expect the condition to be exactly satisfied in practice, we find it interesting to evaluate how much (and which way) real data deviate from the theoretical equilibrium.

\subsection{The optimal execution strategy}\label{section:opt_strat}

In~\cite{AB_DynHawkes}, we obtained an explicit characterization of the optimal execution strategy that minimzes $\E[C(X)]$ among strategies such that $X_0 \in \R$ and $X_{T+}=0$ when $G(t)=e^{-\rho t}$ and $K(t)=e^{-\beta t}$. It is of interest to generalize this result to multi-exponential kernels~\eqref{expo_mixture}. This is in principle possible. In fact, the model is still Markovian and Affine with respect to the state variable  $(X_t,P_t,S_t,D^i_t,{\kappa^\pm}^{(i)}_t)$, and the cost is still quadratic. As in~\cite{AB_DynHawkes}, one should first guess the quadratic form of the cost function, then derive necessary conditions on its coefficients, and last run a verification argument.  However, we know from Alfonsi and Schied~\cite{AS_SICON} that the optimal strategy without the flow of trades (i.e. $N\equiv 0$) is already quite involved and is characterized through a matrix Riccati equation. In our context, the system of ordinary differential equations that characterize the cost function would be much more intricated, and one would presumably have to solve it with numerical methods, which are less efficient than closed formulas for high-frequency trading. However, in the particular case where the propagator is kept exponential
\begin{equation}\label{expo_mixture_forK}
G(u) = \nu +  (1-\nu) \exp(-\rho u), \quad K(u) = \sum_{i=1}^p w_i \exp(-\beta_i u),
\end{equation}
with $0<\beta_1<\dots<\beta_p$ and $w_1,\dots,w_p>0$, it is still possible to derive explicitly the optimal execution strategy. In fact, we can handle the same arguments as in~\cite{AB_DynHawkes} and obtain the following result, proved in Appendix~\ref{appendix:proof_opt_strat}. 
\begin{theorem}\label{thm_opt_strat}Let $\alpha_i=w_i(\ios-\ioc)$ and H, the  square matrix of order~$p$ defined by
\begin{equation}\label{def_H}
1\le i,j \le p, \  H_{i,j} \ = \ \indi{i=j}\beta_i -\alpha_j.
\end{equation}
%
We also define the two continuous matrix functions $\zeta, \omega$ by\footnote{When $M$ is invertible, $\zeta(M)=M^{-1}[I_p-\exp(-M)]$ and $\omega(M)=M^{-2}[\exp(-M)-I_p+M]$. } 
\begin{equation}\label{eqn:def_omega_zeta} \zeta(M)=\sum_{k\ge 0} (-1)^k \frac{M^k}{(k+1)!} \text{ and } \omega(M)=\sum_{k\ge 0} (-1)^k \frac{M^k}{(k+2)!}.
\end{equation}
Then, the strategy~$X^*$ that minimizes the expected cost $\E[C(X)]$ satisfies a.s. and $dt$-a.e on $(0,T)$,
\begin{align}
(1-\epsibis)X^*_t=&-[1+\rho(T-t)]D_t+ \frac{m_1}{2\rho}[2+\rho(T-t)]  \label{opt_strat} \\
&\times \delta_t\tp \left\{ I_p
+ \frac{\rho (T-t)}{2+\rho (T-t)} \times [\zeta((T-t) H)+ \nu \rho (T-t) \ \omega((T-t) H)]
\right\} \ . \ (1,\cdots,1)\tp, \nonumber
\end{align}
where $\delta_t^i = {\kappa_t^+}^{(i)}-{\kappa_t^-}^{(i)}$ for $i \in \{1,\cdots,p\}$ are intensity imbalances. Moreover, the optimal strategy is fully characterized by equation~\eqref{opt_strat}. 
\end{theorem}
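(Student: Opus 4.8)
The plan is to treat \eqref{opt_strat} as the solution of a linear--quadratic stochastic control problem and to follow the three-step scheme announced above: guess a quadratic cost-to-go, derive necessary conditions on its coefficients, and close with a verification argument, exactly as was done for $p=1$ in \cite{AB_DynHawkes}.

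First I would reduce the state. Since $W$ is a Brownian motion independent of $N$ and $X$ has bounded variation, an integration by parts gives $\E\big[\int_{[0,T)}\sigma W_u\,\dd X_u-\sigma W_T X_T\big]=0$, so the additive noise $\sigma W$ leaves $\E[C(X)]$ unchanged and may be dropped (this is the content of Remark~2.6 in \cite{AB_DynHawkes}). Moreover, under the liquidation constraint $X_{T+}=0$ the permanent component $S$ contributes to the cost only through a term that is pinned by the current position $X_t$; it therefore affects the value function by an additive, control-irrelevant amount and can be discarded from the feedback. The effective controlled Markov state is thus $(X_t,D_t,\delta_t)$ with $D_t\in\R$ (the propagator being mono-exponential) and $\delta_t=\big({\kappa^+_t}^{(i)}-{\kappa^-_t}^{(i)}\big)_{1\le i\le p}\in\R^p$. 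The decisive computation at this stage is the conditional forward law of the imbalances: taking conditional expectations in \eqref{def_kpi}--\eqref{def_kmi} and using $\E[\phis(\dd N^\pm_t/m_1)\mid\cF_t]=\ios\,\kappa^\pm_t\,\dd t$ and $\E[\phic(\dd N^\pm_t/m_1)\mid\cF_t]=\ioc\,\kappa^\pm_t\,\dd t$, one finds $\E[\dd\delta^i_t\mid\cF_t]=\big(-\beta_i\delta^i_t+\alpha_i\sum_{j}\delta^j_t\big)\,\dd t$, that is $\frac{\dd}{\dd s}\E[\delta_{t+s}\mid\cF_t]=-H\tp\,\E[\delta_{t+s}\mid\cF_t]$ with $H$ as in \eqref{def_H}, whence $\E[\delta_{t+s}\mid\cF_t]=\exp(-sH\tp)\,\delta_t$. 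Similarly $\E[\dd D_t\mid\cF_t]=-\rho D_t\,\dd t+\tfrac{1-\nu}{q}m_1(1,\cdots,1)\,\delta_t\,\dd t+\tfrac{1-\nu}{q}\,\dd X_t$. This is precisely the mechanism by which the matrix $H$ and the exponentials of $(T-t)H$ will enter the answer.

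Next I would posit that the expected cost-to-go from time $t$ is a quadratic form in the effective state $(X_t,D_t,\delta_t)$ with time-dependent scalar, vector and matrix coefficients, and write the associated dynamic-programming equation. Because $\dd X$ enters the dynamics affinely and enters the cost through the instantaneous penalty $\tfrac1{2q}(\Delta X)^2$ (together with the self-impact normalised by $G(0)=1$), the first-order condition in $\dd X$ is linear in the state and produces the optimal trade as a feedback control; substituting it back yields a coupled Riccati-type system of ordinary differential equations for the coefficients, integrated backward from the terminal data carried by $-P_TX_T+\tfrac1{2q}X_T^2$. The crux is to solve this system in closed form: using the forward law $\exp(-sH\tp)\delta_t$ one integrates the resilience against the remaining horizon, and the single and double time-integrals $\int_0^1\exp(-uM)\,\dd u$ and $\int_0^1\!\int_0^s\exp(-uM)\,\dd u\,\dd s$ for $M=(T-t)H$ reproduce exactly the operators $\zeta$ and $\omega$ of \eqref{eqn:def_omega_zeta}; the splitting $\zeta((T-t)H)+\nu\rho(T-t)\,\omega((T-t)H)$ reflects the decomposition $G(u)=\nu+(1-\nu)e^{-\rho u}$, the constant permanent part contributing the extra time-integration that produces the $\omega$-term. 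Assembling these pieces gives \eqref{opt_strat}.

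I expect the real difficulties to be twofold. The main obstacle is the algebra of the Riccati system: one must control the interaction between the scalar resilience $\rho$ of $D$ and the $p\times p$ generator $H$ of $\delta$, and recognise that the contractions of the form $\delta_t\tp[\,\cdots\,](1,\cdots,1)\tp$ collapse to the compact matrix-function expression stated; in particular the transpose bookkeeping between the forward generator $H\tp$ and the operator $H$ appearing in \eqref{opt_strat} is where sign and ordering errors hide. The second, more routine, step is verification: one checks that the feedback strategy defined by \eqref{opt_strat}, together with $X^*_{T+}=0$ and the prescribed initial position, is $(\cF_t)$-adapted, c\`agl\`ad, square integrable and of bounded variation, and that, the cost being a convex quadratic functional of the control, the first-order condition is in fact sufficient, so that $X^*$ is the unique minimiser. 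This reproduces, for the multi-exponential kernel \eqref{expo_mixture_forK}, the verification carried out for $p=1$ in \cite{AB_DynHawkes}.
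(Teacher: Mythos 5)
Your plan follows essentially the same route as the paper's proof in Appendix~C.2: a quadratic ansatz for the cost-to-go, identification of the resulting system of ODEs driven by the compensated dynamics $\E[\dd\delta_t\mid\cF_t]=-H\tp\delta_t\,\dd t$ (the key one being the linear equation for the vector coefficient $b$ multiplying $\delta$, whose variation-of-constants solution with $\exp(-uH)$ produces exactly $\zeta$ and $\omega$ as single and double integrals of the semigroup), and a verification via the drift of the running-cost process being a perfect square that vanishes exactly along~\eqref{opt_strat}. The one point to handle with care is your dismissal of $S$: the paper keeps $z=S_t$ in the ansatz, and while the optimal feedback indeed does not depend on $S_t$, the $\nu$-driven drift of $S$ through $\delta$ is control-relevant and is precisely what generates the $\nu\rho(T-t)\,\omega((T-t)H)$ term --- which you do attribute to the permanent part later, so your bookkeeping ends up consistent with the paper's.
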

Though restricted to~\eqref{expo_mixture_forK}, we believe that this extension of the result of~\cite{AB_DynHawkes} may be relevant for applications. In fact, on our dataset, there is not much gain to use the multi-exponential price propagator rather than the mono-exponential one, see Figure~\ref{fig:LR_propag_BNPP}. Instead, for the decay kernel of the intensity, considering an exponential mixture allows to produce a richer variety of autocovariance functions, see Figure~\ref{fig:Int_BNPP}.

\section{Calibration method}\label{sec_cal}

\subsection{Description of the dataset}

We consider tick-by-tick data provided by the French investment bank Natixis, to which we are grateful. The data contains all the changes in prices and volumes of the best bid and best ask, for two actively traded French stocks: BNP Paribas and Total.

The data is selected between 11a.m. and 1p.m., for every trading day between January and September 2012 and 2013. We exclude the three last months of the year, where activity decreases on average, along with the months where the tick size deviates from $0.005$ euros. The two-hour window around noon is chosen to obtain a rather stable and uniform behavior of market activity, see e.g. Lehalle and Laruelle~\cite{LehalleLaruelle}, p.~112.
This way, for each stock separately and with minimal data treatment, we can reasonably assume that each two-hour window of trading is a realization of the same random price process.

In the initial dataset, for each stock separately, each line corresponds either to an update in price and/or quantity at one of the best queues (triggered by a market event such as a market order, a limit order or a cancellation), or a new trade executed  for a given volume at a given price. The time stamps for these updates are precise to the millisecond. We reduce this data by aggregating the events happening on the same millisecond: we only keep track of the best prices at the beginning and at the end of each time stamp, which yields the aggregated price impact of the events that happened \enquote{simultaneously}, i.e. on the same millisecond. Similarly, we sum all the volumes that were executed on the same time stamp. We obtain a simplified sequence of market events, among which a minority is associated to a traded volume and/or to a price change.

A correspondence should be clarified between the theoretical items of the models of~\cite{AB_DynHawkes} and Section~\ref{sec_model}, and actual financial data. Different possibilities may be relevant, but our choices are the following:

\begin{itemize}
\item We define the \enquote{market price} $P_t$ as the midpoint price, i.e. the average of the best bid price and the best ask price at any time $t$.

\item We only consider time stamps where the midpoint price jumps. In other words, we ignore the trades and cancellations that do not empty either the best bid or the best ask, as well as the passive limit orders that do not define a new best price. For the stocks that we consider, this gives an average latency of one to four seconds between two consecutive time stamps. This is in agreement with the time scale that is thought of in the theoretical model of~\cite{AB_DynHawkes}, which is not of ultra-high frequency.

\item We express the time in hours, and note $T = 2$ the length of the window that we consider for each trading day. Throughout the paper, we note $\tau \in (0,T)$ the time stamps which correspond to midpoint jumps triggered by trades, i.e. by limit orders that cross the spread or by market orders. These correspond to the jumps of the process $N$ of the theoretical model: they are marked by both a price jump $\Delta \midprice_\tau$ (of one or several half-ticks), and an executed volume $\Delta V_\tau>0$ expressed in number of shares. The time stamps of other price jumps  are noted $\theta \in (0,T)$. They are triggered by cancellations and passive limit orders, with no executed volume, and they are assumed to enforce on average the deterministic resilience effect as in~\cite{BGPW}. Between two trades, the deviation of the price from this deterministic average is considered as a noise process, modeled using an arithmetic Brownian motion.

\end{itemize}

Some key statistics for these items are given in Table~\ref{table:BNPP_stats_2012-2013} for BNP Paribas and Total.

\begin{table}[!h]
\center
\begin{tabular}{|c||c|c||c|c|}\hline
	\text{Stock}				& 	\multicolumn{2}{c||}{BNP Paribas}	& 	\multicolumn{2}{c|}{Total}		\\ \hline
	\text{Year}				& 	2012			& 	2013		& 	2012		& 	2013		\\ \hline
	Average midprice			&  	$32.4$		& 	$44.9$ 	&  	$38.2$	& 	$39.0$ 	\\
	Tick size				&  	$0.005$		& 	$0.005$ 	&  	$0.005$	& 	$0.005$ 	\\
	Number of mid. changes per hour& 	$1909$ 		& 	$1699$ 	& 	$1209$ 	& 	$929$ 	\\
	Proportion due to transactions	& 	$10.0\%$ 		& 	$7.9\%$ 	& 	$7.6\%$ 	& 	$6.9\%$  	\\ 
	$m_1$				& 	$776$ 		& 	$636$		& 	$978$ 	& 	$963$		\\
	$m_2/m_1^2$			& 	$3.38$ 		& 	$4.69$ 	& 	$4.30$ 	& 	$6.72$  	\\ 
	Average size of the first queue	&  	$1398$		& 	$1136$ 	&  	$1710$	& 	$1779$	\\ \hline
\end{tabular}
\caption{Table of statistics for the stocks BNP Paribas and Total on the periods January-September 2012-2013, between 11 a.m. and 1 p.m. January 2012 is excluded for BNP Paribas because the tick size dropped below $0.005$. We give the proportion of midpoint changes which are triggered by trades, the remaining proportion being triggered by cancellations or passive limit orders. $m_1$ is the average volume of transactions that trigger price moves, and $m_2$ is the average squared volume for these transactions. The greater the ratio $m_2/m_1^2$, the more variance in the distribution of traded volumes.}
\label{table:BNPP_stats_2012-2013}
\end{table}

\subsection{Overview of the calibration process}

One specificity of the price model given by equation~\eqref{Propag_model} is that it is composed of two separate components:
\begin{itemize}
\item The point process $N$ for the trades that trigger the price moves, for the which time stamps $\tau$ and the marks (the price jumps  $\Delta \midprice_\tau$ and the executed volumes $\Delta V_\tau$) are modeled and estimated jointly,
\item The propagator model, which conditionally to the midpoint jumps due to trades, is a continuous-time linear regression model with a Gaussian noise process $\sigma W_t$.
\end{itemize}
Therefore, the trades are modeled using marked Hawkes processes, and conditionally to them, the price is Gaussian. This segmentation has at least three advantages. First, the calibration process is simpler since the two parts can be estimated independently, which significantly reduces the dimension of the problem. Second, the estimation results on each side are robust to the choices made in the other. For instance, if one wants to modify the Hawkes modeling for the trades, then our results for the propagator are still valid, and vice versa. Eventually, the results of Section~\ref{subsec_noarb} include some theoretical links between the Hawkes parameters and the propagator, and it seems more rigorous to confront these links to our calibration results when the two parts are estimated independently.

Our calibration protocol as a whole being somewhat sophisticated, we test its validity and robustness by running it on simulated data. In Sections~\ref{section:calib_results} and~\ref{section:eval_model}, we give the results of our analysis for these simulations as well as for real financial data.

\subsection{Estimation of the propagator}\label{section:estim_propag}

\subsubsection{Framework}\label{section:calib_res_framework}

In this section we explain how the propagator model introduced in Section~\ref{sec_model} can be adapted for practical applications, in particular for its calibration. This requires to consider the two following points:
\begin{itemize}
\item In practice, the price impact of transactions is not proportional to their volumes. It is typically of a few ticks, while the volumes span a wider range of values. Therefore, one must choose between \enquote{price resilience} and \enquote{volume resilience} as in Alfonsi et al.~\cite{AFS}. The first choice corresponds to modeling the mean-reversion property of market prices, the second describes how liquidity \enquote{regenerates} after a trade, and the two are only equivalent for linear price impact.
\item The evolution of the price between two transactions is very noisy, and the propagator model only explains a part of its variance. Therefore, we need to control the variance of the estimation to obtain satisfying calibration results.
\end{itemize}

For the first point, we choose to model price resilience, which is easier to measure in practice and has been considered more often in the literature. This boils down to replacing $\Delta N_\tau/q$ by the midprice jumps $\Delta \midprice_\tau$ in equation~\eqref{Propag_model}.
For the second point, an intuitive possibility consists in restraining the propagator regression to a finite time window $\regwin>0$, and to assume that the model predicts the price increment $P_t-P_{t-\regwin}$ for $t \geq \regwin$ instead of $P_t-P_0$. If the noise is an additive Brownian term $\sigma W_t$, this fixes the variance of the predicted variables to $\sigma^2 \regwin$ instead of $\sigma^2 t, \ t \in [\regwin,T]$. We obtain the modified price model
\begin{equation}
P_t \ = \  P_{t-\regwin} 
\ + \
\underset{t-\regwin < \tau \leq t}{\sum} \Delta \midprice_\tau \ G(t-\tau)
\ + \ \sigma (W_t-W_{t-\regwin}).
\label{eqn:practical_price_model}
\end{equation}
Of course, $\regwin$ must be such that $G(\regwin)-G(\infty)$ is small compared to $G(\infty)$ for the model to be a meaningful approximation of the original model~\eqref{Propag_model}, see Remark~\ref{rem_mkv}. This condition also allows to avoid bias in the estimation of the propagator $G$. We fix $\regwin=0.5$ hours (30 minutes) throughout the sequel of this paper, basing ourselves on preliminary observations that we do not detail here. Note that within the range $\regwin \in [0.1,1]$, the choice of this parameter has little impact on the results. One can verify \textit{a posteriori} that our estimations of $G$ are compatible with $G(0.5)-G(\infty) \ll G(\infty)$.

The predicted price increment between $t-\regwin$ and $t$ is given by
\begin{equation}
\hat P_t - P_{t-\regwin} \ = \ 
\underset{t-\regwin \leq \tau \leq t}{\sum} \Delta \midprice_\tau \ G(t-\tau)
\label{eq:estimated_price_model}
\end{equation}
where $P_{t-\regwin}$ is the real midpoint price at time $t-\regwin$, taken directly from the data.
Equation~\eqref{eqn:practical_price_model} becomes
\begin{equation}
P_t \ = \ \hat P_t + \sigma (W_t - W_{t-\regwin}).
\label{eqn:def_regr_model}
\end{equation}
Conditionally to $P_{t-\regwin}$ and to the process $\midprice$, one has $P_t \sim \mathcal{N}(\hat P_t, \sigma^2 \regwin)$. In this setting, the Maximum Likelihood Estimator of $G$ is equivalent to the Least Squares Estimator. We thus minimize numerically on the parameters of $G$ the quadratic error
\begin{equation}
\mathcal{E}(G) = \underset{\regwin<\theta<T}{\sum} [\hat P_\theta(G) - P_\theta]^2,
\label{eqn:quad_error}
\end{equation}
where the $\theta$'s are the occurrences of price jumps due to cancellations or passive limit orders.
To get a better understanding of the shape of the propagator, we first estimate $G$ in an \enquote{unconstrained} manner, i.e. as the linear interpolation of a discrete set of points.
Thus, we model $G$ as
\begin{equation}
G(t) = g_l \mathbf{1}_{[t_l,\regwin[}(t) + \overset{l-1}{\underset{i=0}{\sum}} \frac{(t_{i+1}-t) g_i + (t-t_i)g_{i+1}}{t_{i+1}-t_i} 
 \ \mathbf{1}_{[t_i,t_{i+1}[}(t),
\nonumber
\end{equation}
where $t_1,\cdots,t_l$ are fixed a priori and $(g_1,\cdots,g_l)$ is the parameter to estimate. We see that the resulting curve, which is given is Section~\ref{section:calib_results} for stock data, has an increasing short-range part, and switches to a decreasing mode after a few seconds. One has $G(0)=1$, but $G$ reaches a point above unity before it enters its decreasing regime. Let us recall that in an idealized model without bid-ask spread, Alfonsi et al.~\cite{ASS}  and Gatheral et al.~\cite{GSS} show that $G$ has to be decreasing and convex around zero to exclude PMS and some market instability. This is not the case on our dataset.  We interpret this as the fact that after a trade, the new bid-ask is generally formed around the impacted price.  Thus, during a few seconds, limit orders and cancellations tend to impact the midprice in the same direction as the trade.
This motivates us to distinguish the propagator $G(t)$ and the functional form of its long-range decay that we call the resilience, noted $R(t)$. This way, we can allow $R(0)\geq1$ and impose that $R$ is decreasing. One can then link $G$ and $R$ with a simple linear interpolation between $t=0$ and $t=\adjlag$, with $\adjlag>0$ the \enquote{adjustment lag} of the market
\begin{equation}
G(t) = \left[1+(R(\adjlag)-1) \ \frac{t}{\adjlag}\right] \indi{t\leq \adjlag} + R(t) \indi{t>\adjlag}.
\nonumber
\end{equation}
This choice has the merit that once $\adjlag$ is fixed, only the resilience curve needs to be estimated since $G$ is characterized by $R$.
Therefore, to estimate $R$ with an imposed decreasing functional form, we place ourselves in the following version of the price model
\begin{equation}
P_t = P_{t-\regwin}
+ \underset{t-\regwin \leq \tau < t-\adjlag}{\sum} \Delta \midprice_\tau \ R(t-\tau)
+ \underset{t-\adjlag \leq \tau \leq t}{\sum} \Delta \midprice_\tau \ \left[1+(R(\adjlag)-1) \ \frac{t-\tau}{\adjlag}\right]
+ \sigma (W_t-W_{t-\regwin}).
\nonumber
\end{equation}
We consider two types of parameterization for the resilience $R(t)$:
\begin{itemize}
\item The mono-exponential curve
\begin{equation}
R(t) = \resfact \ [1-\lambda(1-\exp(-\rho t))],
\label{eqn:monoexpo_resil}
\end{equation}
with three parameters $\resfact,\rho>0, \lambda \in [0,1]$. $\resfact$ is an amplification factor, $\rho$ is the resilience speed of the market, $\lambda$ is the transient part of the price impact of trades, and $\nu = 1-\lambda$ is the permanent part.
The mono-exponential curve is the type of resilience considered in the theoretical model of~\cite{AB_DynHawkes}.
\item The multi-exponential curve
\begin{equation}
R(t) = \resfact \ \left[1-\overset{p}{\underset{i=1}{\sum}}\lambda_i(1-\exp(-\rho_i t)) \right],
\label{eqn:multiexpo_resil}
\end{equation}
is a generalization of the previous one, with $2p+1$ parameters $\resfact,\rho_1,\cdots,\rho_p>0, \lambda_1,\cdots,\lambda_p \in [0,1]$, 
$\sum_i \lambda_i \leq 1$. For $1\leq i \leq p$, $\lambda_i$ is the proportion of transient impact that decays at speed $\rho_i$, and $\nu = 1- \sum_i\lambda_i$ is the proportion of permanent impact.
\end{itemize} 
For both parameterizations, we estimate a posteriori the volatility $\sigma$ of the Brownian noise with
\begin{equation}
\hat\sigma =
\sqrt{
\frac{\overset{n}{\underset{i=1}{\sum}} \left[P^i_T -P^i_0 - \underset{0 < \tau < T}{\sum} \Delta \midprice^i_\tau \ G(T-\tau)\right]^2}
{n\times T}
},
\label{eqn:def_sigma_noise}
\end{equation}
%
where $n$ is the number of days of the sample, and for $i \in \{1,\cdots,n\}$, $P^i$ and $\midprice^i$ are respectively
the real price and the midprice jump process for day $i$, and the $\tau$'s are the jump times of $M^i$.
Also, since the prediction model defined by~\eqref{eq:estimated_price_model} and~\eqref{eqn:def_regr_model} can be seen as a continuous-time linear regression, where the explained variables are the price increments $P_t - P_{t-\regwin}$ and the regressors are the past price jumps $\Delta \midprice_\tau$ triggered by trades, we can evaluate its quality of fit using a usual analysis of variance.
We define the $r^2$ value as
\begin{equation}
r^2 \ = \ 1 \ - \
\frac
{\overset{n}{\underset{i=1}{\sum}} \ 
\underset{\regwin<\theta<T}{\sum} [\hat P^i_\theta - P^i_\theta]^2}
{\overset{n}{\underset{i=1}{\sum}} \ 
 \ \underset{\regwin<\theta<T}{\sum}
\left[ P^i_\theta - P^i_{\theta-\regwin} -\overline{\Delta P} \right]^2},
\label{eqn:def_r2}
\end{equation}
where for $i \in \{1,\cdots,n\}$, $\hat P^i$ is
the predicted price process for day $i$, and
\begin{equation}
\overline{\Delta P}= \frac1{\sum_{i=1}^n \# \theta_i} \ \overset{n}{\underset{i=1}{\sum}} \ 
 \ \underset{\regwin<\theta<T}{\sum} \left( P^i_\theta - P^i_{\theta-\regwin} \right)
\nonumber
\end{equation}
 is the average price move between $\theta-\regwin$ and $\theta$, where the $\theta$'s are the times of price changes with no executed volumes and
and $\# \theta_i$ is the number of such price changes on day $i$.
Note that since there is no constant in the regression model~\eqref{eqn:def_regr_model}, the $r^2$ could theoretically be negative, but this is not the case in practice. The $r^2$ constitutes a useful comparison criterion between different estimated propagators, and we use it in Section~\ref{section:calib_results}.

Now that the global practical framework is set, the estimation protocol for $G$ needs to be detailed. This is the object of the following section.

\subsubsection{Estimation protocol}\label{section:estim_protocol_resil}

We use a multi-step estimation protocol, that mainly resorts to the minimization of the quadratic error $\mathcal{E}$ defined in~\eqref{eqn:quad_error}. When $G(t)$ is linear with respect to its parameters, $\mathcal{E}$ is quadratic and one step of Newton-Raphson's algorithm is enough to find the minimum (see Appendix~\ref{appendix:NR_propagator}). When the dependency in the parameters is non-linear, we first use grid minimizations to find a suitable starting point for the algorithm.

As a first step, we estimate the \enquote{unconstrained} propagator curve. Then, we estimate the resilience curve using the two parameterizations presented in Section~\ref{section:calib_res_framework}.

\textbf{Estimation of the unconstrained propagator curve}

We first estimate $G$ by the linear interpolation $\hat G$
\begin{equation}
\hat G(t) = g_l \mathbf{1}_{[t_l,T[}(t) + \overset{l-1}{\underset{i=0}{\sum}} \frac{(t_{i+1}-t) g_i + (t-t_i)g_{i+1}}{t_{i+1}-t_i} 
 \ \mathbf{1}_{[t_i,t_{i+1}[}(t).
\nonumber
\end{equation}
For $t_1,\cdots,t_l$ fixed a priori, $\hat G$ is linear with respect to $(g_1,\cdots,g_l)$. Thus, one step of Newton-Raphson's method (see Appendix~\ref{appendix:calib_UC}) determines the parameters that minimize the quadratic error $\mathcal{E}(\hat G)$.
To approximate the long-range propagator, we choose a uniform grid $t_i = i/l$ with $l=20$ on the interval $[0,0.2]$. On the other hand, for a zoom on the beginning of the curve, we concentrate the $t_i$'s near zero.

\textbf{Estimation of the multi-exponential resilience curve}

The simultaneous estimation of multiple $\rho_i$'s being too unstable, we choose to fix four components associated to four simple characteristic time scales (the $\rho_i$'s are expressed in inverse hours): $\rho_1 = 6$ (ten minutes), $\rho_2 = 60$ (one minute), $\rho_3 = 120$ (thirty seconds) and $\rho_4=360$ (ten seconds). We then assume that the vector $(\rho_1,\cdots,\rho_4)$ is rich enough to represent all the relevant time scales in our framework, and we focus on the weights $\lambda_1,\cdots,\lambda_4$ associated to each scale to characterize the decay of the curve.
The multi-exponential resilience given by equation~\eqref{eqn:multiexpo_resil} becomes
\begin{equation}
R(t) = \overline \nu + \overset{4}{\underset{i=1}{\sum}}\overline \lambda_i \exp(-\rho_i t),
\nonumber
\end{equation}
where we re-parameterize $\overline \nu = \gamma (1-\sum_{i=1}^4 \lambda_i)>0$ and $\overline \lambda_i = \gamma \lambda_i>0$. Reciprocally, one has $\resfact = \overline \nu + \sum_{i=1}^4 \overline \lambda_i$ and $\lambda_i = \overline \lambda_i / \resfact$. 
Since the $\rho_i$'s are fixed, the resilience curve $R(t)$ is linear w.r.t. the parameter $(\overline \nu, \overline \lambda_1, \cdots, \overline \lambda_4)$ that remains to be estimated, thus Newton-Raphson's algorithm (see Appendix~\ref{appendix:calib_ME}) converges with a single iteration. We then select the significant $\rho_i$'s as follows:

\begin{enumerate}

\item A first estimation yields a \enquote{full} parameter $(\overline \nu, \overline \lambda_1, \cdots, \overline \lambda_4)$. Some of the resulting $\overline \lambda_i$'s may be non-positive, which is incompatible with the model.

\item While there exists $i$ such that $\overline \lambda_i \leq 0$, we remove the $\rho_i$ corresponding to the minimal $\overline \lambda_i$, and we launch the algorithm again with one less parameter.

\item Eventually, we have selected one to four \enquote{significant} $\rho_i$'s, of associated weights $\overline \lambda_i$'s that are positive, and the estimation process is complete.

\end{enumerate}

Since each of these steps only take one iteration of Newton-Raphson's algorithm, the whole estimation protocol for the multi-exponential curve is quite fast. Therefore, in order to estimate the market adjustment lag $\adjlag$, we can conduct the estimation several times for $\adjlag$ on some discrete grid, and compare the regression $r^2$'s as defined by~\eqref{eqn:def_r2}. The result associated to the maximal $r^2$ gives the parameters $\resfact_\text{multi}$, $\lambda_\text{multi}$ and $\rho_\text{multi}$ for the multi-exponential resilience, along with the adjustment lag $\adjlag$.

\textbf{Estimation of the mono-exponential resilience curve}

The multi-exponential estimation presented above serves as a starting point for the following. The market adjustment lag $\adjlag$ is already estimated, along with the associated set of parameters $\resfact_\text{multi}, \lambda_\text{multi}, \rho_\text{multi}$ for the multi-exponential resilience curve. We set
\begin{equation}
\resfact = \resfact_\text{multi},
\quad
\lambda = \sum_i \lambda_\text{multi}^i,
\quad
\rho = \sum_i \frac{\lambda_\text{multi}^i}{\lambda} \rho_\text{multi}^i
\label{eqn:mono_param_start}
\end{equation}
as a starting parameter for the mono-exponential estimation.
As in the multi-exponential case, we re-parameterize~\eqref{eqn:monoexpo_resil} as
\begin{equation}
R(t) = \overline \nu  + \overline \lambda \exp(-\rho t),
\nonumber
\end{equation}
with $\overline \nu = \gamma (1-\lambda)>0$ and $\overline \lambda = \gamma \lambda>0$. We then proceed as follows
\begin{enumerate}

\item We use Newton-Raphson's algorithm to minimize the quadratic error on the whole parameter $(\overline \nu,\overline \lambda,\rho)$ (see Appendix~\ref{appendix:calib_ME_gene} for $p=1$ exponential component). If the starting point is convex and the algorithm converges to a satisfying level, we proceed directly to Step 6. Else, we go to Step 2.

\item Keeping $\rho$ fixed to its starting value~\eqref{eqn:mono_param_start}, the dependency of $R(t)$ on $\overline \nu$ and $\overline \lambda$ is linear. Thus, with one step of Newton-Raphson's algorithm, we get the optimal values of $\overline \nu$ and $\overline \lambda$ for the current value of $\rho$.

\item For $\resfact = \overline{\nu}+\overline{\lambda}$ fixed by Step 2, $\lambda$ initialized to $\overline \lambda / \resfact$ and $\rho$ as in~\eqref{eqn:mono_param_start},  we minimize the quadratic error $(\lambda,\rho) \mapsto \mathcal{E}(\lambda,\rho)$ on a local two-dimensional grid in the vicinity of the starting point. 

\item The pair that reaches the minimum of the error grid at Step 3 is again used as a starting point to Newton-Raphson's algorithm, to determine the optimal $(\lambda,\rho)$ for the current fixed value of $\resfact$, using the \enquote{unit} mono-exponential parameterization of Appendix~\ref{appendix:calib_ME_unit}.  We actualize $(\lambda,\rho)$ to this optimum, along with $\overline \nu = \gamma (1-\lambda)$ and  $\overline \lambda = \gamma \lambda$.

\item The parameter $(\overline \nu,\overline \lambda,\rho)$ is now in a region where the quadratic error is more likely to be convex. Therefore, we use this new starting point for an error minimization using Newton-Raphson's algorithm on the whole parameter.

\item We obtain the parameter $\resfact_\text{mono}, \lambda_\text{mono}, \rho_\text{mono}$ for the mono-exponential resilience curve.

\end{enumerate}

The above estimation protocol for the mono-exponential resilience curve may seem complicated: in particular, it is more subtle than the multi-exponential estimation. The reason for this is that we want here to determine the most significant characteristic time scale of the resilience through the parameter $\rho$. The dependency of the quadratic error $\mathcal{E}$ on this parameter being non-linear, nothing guarantees a priori that Newton-Raphson's algorithm (or more simply a gradient algorithm) has a convex starting point, which is a necessary condition to ensure its convergence. Hence we have to proceed more carefully and introduce several intermediary steps.

\subsection{Estimation of the Hawkes parameters}\label{section:calib_method_int}

\subsubsection{Framework}

Independently of the propagator, we also estimate the parameters of the Hawkes-based model presented in Section~\ref{sec_model} for the price jumps due to transactions. 
We choose the self-excitation functions $\phis$ and $\phic$ to be affine, i.e.
\begin{equation}
\phis(x) = \phicos^0 + \phicos^1 x
\quad , \quad
\phic(x) = \phicoc^0 + \phicoc^1 x.
\label{eqn:affine_exc}
\end{equation}
In the standard Hawkes framework, self-excitation in the order flow is not marked, i.e. only the constant terms $\phicos^0, \phicoc^0$ appear in $\phis$ and $\phic$. In spite of its simplicity, the affine structure allows us to underline the deviation from the standard Hawkes benchmark, and to detect an increasing part in the self-excitation function.

As pointed out in Section~\ref{section:calib_res_framework}, there are two possible interpretations for the marks associated to the jumps of $N$. Since each of these jumps corresponds to a price jump due to a transaction, they are all associated to two positive variables: the price impact on the one hand, and the traded volume on the other hand. Therefore, we estimate three sets of parameters for different versions of the Hawkes model (unit marks, volume marks, and price marks), each with a different practical interpretation of the intensity jump terms. Precisely, we replace $\varphi_{\textup{s}/\textup{c}}(\dd N_t^\pm)$ in~\eqref{def_kpi} and~\eqref{def_kmi} at the jump times~$t$ by either of the three possibilities
\begin{align}
\phi^0_\text{s/c,unit}, \  \phi^0_\text{s/c,vol.} + \phi^1_\text{s/c,vol.} |\Delta V_t| / m_1, \ \phi^0_\text{s/c,price} + \phi^1_\text{s/c,price} |\Delta M_t| / \overline m,
\label{three_phis}
\end{align}
where $m_1$ is the average executed volume and $\overline m$ is the average price impact.

\subsubsection{Estimation protocol}

Our estimation protocol for the Hawkes part of the model is then as follows: we first estimate the mono-exponential Hawkes model $K(u) = \exp(-\beta u)$, which allows us to estimate the Hawkes norm and its repartition in terms of self and cross-excitation, and to select the optimal mark type for the jumps. Then we estimate the multi-exponential Hawkes model $K(u) = \sum_{i=1}^p w_i \exp(-\beta_i u)$ with the $\beta_i$'s fixed a priori.

\textbf{Mono-exponential kernel}

Let us consider the mono-exponential Hawkes model of equation~\eqref{kappa_expo}, for which the Hawkes decay kernel is $K(u) = \exp(-\beta u), \ \beta>0$.
 We first focus on the parameters of the total intensity $\Sigma_t = \kappa^+_t + \kappa^-_t$ by aggregating all the price jumps due to trades, regardless of their signs. In the mono-exponential case, one has
\begin{equation}
\textup{d}\Sigma_t  = \ - \beta \ (\Sigma_t - 2\kappa_{\infty}) \ \textup{d}t
\ + \ \iota \ \dd J_t,
\nonumber
\end{equation}
where $\iota$ is the average excitation, so that the jumps of $J$ have an average of one.
We use a Generalized Method of Moments (GMM) to estimate $\beta, \kappa_\infty$ and $\iota$.
We divide the time window $[0,T]$ of length $T=2$ hours in $720$ bins of length $h=1/360$ (i.e. ten seconds). Then, we compute the number $\Delta \tilde{J}^i_l$ of price jumps due to trades in the time bin $[(l-1)h,lh], \ l \in\{1,\cdots,\lfloor T/h \rfloor\}$ on day $i \in \{ 1,\cdots, n\}$, for each time bin and each day. If $l$ is the row index and $i$ is the column index, we obtain a $\lfloor T/h \rfloor\times n$ matrix of which the entries are the positive numbers $\Delta \tilde{J}^i_l$. We normalize this dataset by dividing each column by its mean value and multiplying the whole matrix by the original global mean value, so that the global mean is unchanged and each column has the same mean.
We first compute the empirical mean $\overline{\Delta \tilde{J}}$ and variance $\mathcal{V}$ of the discrete process $\Delta \tilde{J}$
\begin{equation}
\overline{\Delta \tilde{J}} = \frac1{n \times \lfloor T/h \rfloor} 
\overset{n}{\underset{i=1}{\sum}} \overset{\lfloor T/h \rfloor}{\underset{l=1}{\sum}} \Delta \tilde{J}^i_l,
\quad
\mathcal{V} = \frac1{n \times \lfloor T/h\rfloor -1} \
\overset{n}{\underset{i=1}{\sum}} \overset{\lfloor T/h \rfloor}{\underset{l=1}{\sum}} 
\left[ \Delta \tilde{J}^i_l - \overline{\Delta \tilde{J}} \right]^2.
\nonumber
\end{equation}
The average jump intensity $2\overline{\kappa}$ of the total jump process is obtained with the formula
$2\overline{\kappa} = \overline{\Delta \tilde{J}}/h$.
Besides, the empirical auto-correlation function of $\Delta \tilde{J}$ is given by
\begin{equation}
\forall k \in \{ 1,\cdots, k_\text{max}\},
\quad
\widehat{\cov}(k) = 
\frac1{\mathcal{V}} \times
 \left\{ 
\frac1{n \times (\lfloor T/h \rfloor-k)} \left[ 
\overset{n}{\underset{i=1}{\sum}} \ \overset{\lfloor T/h \rfloor}{\underset{l=k+1}{\sum}}
\Delta \tilde{J}^i_l \ \Delta \tilde{J}^i_{l-k} \right]
- \overline{\Delta \tilde{J}}^2
 \right\},
\label{eqn:emp_autocorrel}
\end{equation}
where $k_\text{max} = 36$ is the maximum lag (so that the maximum range $k_\text{max} h = 0.1$ equals six minutes). 
Using the results of Da Fonseca and Zaatour~\cite{DFZ} for mono-exponential Hawkes processes, we have that $\widehat{\cov}(k)$ decays as $\exp(-(\beta-\iota) k)$. Therefore, the exponential fit of the empirical curve $\widehat{\cov}(k)$ yields an estimate of $d := \beta-\iota$. Then, we also get from~\cite{DFZ}
\begin{equation}
\mathcal{V} = 2\overline{\kappa} \ \left\{ \frac {\beta^2 h} {d^2} + \left(1-\frac{\beta^2}{d^2}\right) \frac{1-\exp(-d h)}{d}\right\}.
\nonumber
\end{equation}
This relation can be inverted to obtain an estimate for $\beta$: if we note $z_h = (1-\exp(-d h))/d$, we get
\begin{equation}
\beta \ = \ d \ \sqrt{\frac{\mathcal{V}/(2\overline{\kappa}) - z_h}{h - z_h}}.
\nonumber
\end{equation}
Then, $\iota = \beta - d$ and $\kappa_\infty = (1-\iota/\beta) \ \overline{\kappa}$ can be deduced from the above equation. We also obtain the mono-exponential branching ratio
\begin{equation}
\text{BR}_\text{mono} = \iota/\beta.
\nonumber
\end{equation}
 Keeping $\beta, \iota$ and $\kappa_\infty$ fixed to these GMM estimates, we now turn to the bi-dimensional intensity model~\eqref{kappa_expo}. We use Maximum Likelihood Estimation (see Appendix~\ref{section:calib_intensity}) on one-dimensional grids to determine the self and cross-excitation parameters:

\begin{enumerate}

\item We determine the proportion $u \in [0,1]$ such that $\ios = u \ \iota, \ \ioc = (1-u) \ \iota$ maximize the likelihood of the two-dimensional intensity $(\kappa^+,\kappa^-)$, where $\ios$ and $\ioc$ are respectively the average self-excitation and cross-excitation parameters.

\item For volume marks and price marks separately, we determine the proportion $u_\text{s} \in [0,1]$ such that 
$\phicos^0 = u_\text{s} \ \ios, \ \phicos^1 = (1-u_\text{s}) \ \ios$ maximize the likelihood of $(\kappa^+,\kappa^-)$, where $\phicos^0, \phicos^1$ are defined in equation~\eqref{eqn:affine_exc}. Similarly, we determine the optimal proportion $u_\text{c}$ for 
$\phicoc^0 = u_\text{c} \ \ioc, \ \phicoc^1 = (1-u_\text{c}) \ \ioc$. For $\ios$ and $\ioc$ fixed, we obtain the optimal constant and linear parts for self and cross-excitation, for the two possible types of marks.

\item The likelihoods obtained for the three models are then compared to determine which of the unit / volumes / price marks yield the best model.

\end{enumerate}

Eventually, we obtain estimates for all the parameters $\beta_\text{mono}, {\kappa_\infty}_\text{mono}, {\phicos}^0_\text{mono}, {\phicos}^1_\text{mono}, {\phicoc^0}_\text{mono}, {\phicoc}^1_\text{mono}$ of the mono-exponential Hawkes model, along with the optimal type of marks.

\textbf{Multi-exponential kernel}

We turn to the multi-exponential Hawkes model $K(u) = \sum_{i=1}^p w_i \exp(-\beta_i u)$.
As in the case of the estimation of the multi-exponential resilience in Section~\ref{section:estim_protocol_resil}, we fix four $\beta_i$'s associated to four simple characteristic time scales. In fact, we choose the same time scales as for the resilience:
$\beta_1=6,\beta_2=60,\beta_3=120$ and $\beta_4=360$. We then calibrate the $w_i$'s associated to each $\beta_i$, and these weights tune the shape of the Hawkes kernel.

The results of the mono-exponential estimation are used to select the type of marks (unit, volume or price) and to get a starting point for $\kappa_\infty, \phicos^0, \phicos^1, \phicoc^0,\phicoc^1$ and the branching ratio $\text{BR}$. The starting point for the $w_i$'s is chosen to be uniformly distributed
\begin{equation}
w_i = \frac{\text{BR}_\text{mono}}{\sum_{i=1}^4 \frac{1/4}{\beta_i}} \times \frac14,
\nonumber
\end{equation}
with a scaling that matches the initial branching ratio. Then, we maximize the likelihood of the model on the parameter $(\kappa_\infty,w_1,w_2,w_3,w_4)$ using Newton-Raphson's algorithm, as explained in Appendix~\ref{section:calib_intensity}. We use the same selection method as for the multi-exponential resilience estimation of Section~\ref{section:estim_protocol_resil}: if at least one of the $w_i$'s is non-positive, we delete the $\beta_i$ associated to the minimal $w_i$ and launch the algorithm again, with one less parameter. Finally, we multiply $(\phicos^0, \phicos^1, \phicoc^0,\phicoc^1)$ by the sum of the remaining $w_i$'s, and scale the latter to one. Without changing the overall model, this imposes $K(0)=1$ for the Hawkes decay kernel $K$. We obtain the parameters $\beta_\text{multi}, w_\text{multi}, {\kappa_\infty}_\text{multi}, {\phicos}^0_\text{multi}, {\phicos}^1_\text{multi}, {\phicoc^0}_\text{multi}, {\phicoc}^1_\text{multi}$ for the multi-exponential Hawkes model.

\section{Calibration results}\label{section:calib_results}

\subsection{Description of the results}\label{section:description_results}

This section is dedicated to the presentation of our calibration results. The calibration method of Section~\ref{sec_cal} is first applied to simulated data to test its validity, and then to actual financial data from French stocks. We also provide some qualitative comments.
For each simulated dataset and each stock, the results are summarized in tables, plus a few graphs for BNP Paribas. The content of the tables is explained below.

\textbf{Adjustment lag table:}
This table gives the regressions $r^2$'s of the multi-exponential resilience curve, for several values of the market adjustment lag $\adjlag$. It is used to select the optimal value of $\adjlag$ on a discrete grid.

\textbf{Resilience table:}
The resilience table gives the estimation results for the propagator. We give the selected adjustment lag $\adjlag$ and the estimated parameters for the two types of resilience curve
\begin{align}
R_\text{mono}(t) &= \resfact_\text{mono} \ [1-\lambda_\text{mono}(1-\exp(-\rho_\text{mono} t))],
\nonumber \\
R_\text{multi}(t) &= \resfact_\text{multi} \ \left[1-\sum \lambda_\text{multi}^j(1-\exp(-\rho_\text{multi}^j t))\right],
\nonumber
\end{align}
along with the estimated volatility $\sigma$ of the noise and the regression $r^2$, defined respectively by equations~\eqref{eqn:def_sigma_noise} and~\eqref{eqn:def_r2}.

\textbf{Marks table:}
In this table, we give the maximized log-likelihoods per point $\mathcal{L}_\text{unit}$, $\mathcal{L}_\text{vol.}$ and $\mathcal{L}_\text{price}$ for each type of mark (unit, volumes and price jumps), in the mono-exponential Hawkes model. It serves as a selection criterion for the optimal type of mark.

\textbf{Intensity table:}
This table gives the estimated parameters for the Hawkes model described in Section~\ref{section:markov_spec}, for both the mono-exponentiel decay kernel $K(u) = \exp(-\beta u)$ and the multi-exponential one $K(u) = \sum_{i=1}^p w_i \exp(-\beta_i u)$. We also give the maximized log-likelihoods per point $\mathcal{L}_\text{mono}$ and $\mathcal{L}_\text{multi}$, which can be compared to one another or between datasets to quantify the quality of fit of the Hawkes model. Eventually, we give the branching ratio $\text{BR}$ and the directional branching ratio $\text{DBR}$ defined by equation~\eqref{eqn:DBR_BR}, that are obtained with the multi-exponential parameterization.

\subsection{Simulated data}\label{section:simu_calib}

We first give in Tables~\ref{table:SIMU_A1_resil_int} and~\ref{table:SIMU_A2_resil_int} the results of our calibration protocol on two datasets simulated with the price model~\eqref{Propag_model}. In each table, the first column gives the \enquote{real} simulation parameters and the second gives the estimated ones. Both datasets are composed of $150$ independent realizations of the price process on two-hour windows, and we choose simulation parameters close to what is found further for stock data in order to obtain relevant benchmarks. Note that Simulation 1 features a non-zero Brownian volatility, whereas Simulation 2 is generated by the \enquote{pure} propagator model without noise.

\begin{table}[h]
\center
\parbox{.40\linewidth}{
\begin{tabular}{|c||c|c|}\hline
	\text{Year}					& 	\text{Simu.}		& 	\text{Calib.}		\\ \hline
	$\adjlag \text{ (sec)}$			&  	$4$			& 	$4$	 		\\ \hline
	$\resfact_\text{multi}$			&  	$2.70$		& 	$2.35$		\\
	$\rho_\text{multi}$				& 	$60/360$ 		& 	$6/60/360$	 	\\
	$\lambda_\text{multi}$			& 	$0.50/0.10$ 		& 	$0.13/0.35/0.11$ 	\\
	$\nu_\text{multi}$				& 	$0.40$ 		& 	$0.41$		 \\
	$\sigma_\text{multi}$			& 	$0.1000$ 		& 	$0.1917$ 		\\
	$r^2_\text{multi}$				& 	$-$			& 	$9.554\%$	 	\\ \hline
	$\resfact_\text{mono}$			&  	$-$			& 	$2.38$ 		\\
	$\rho_\text{mono}$			& 	$-$ 			& 	$68.2$ 		\\
	$\lambda_\text{mono}$			& 	$-$ 			& 	$0.55$		\\
	$\sigma_\text{mono}$			& 	$-$ 			& 	$0.1923$ 		\\
	$r^2_\text{mono}$			& 	$-$	 		& 	$9.519\%$ 		\\ \hline
\end{tabular}
}
\quad
\parbox{.45\linewidth}{
\begin{tabular}{|c||c|c|}\hline
	\text{Year}				& 	\text{Simu.}			& 	\text{Calib.}				\\ \hline	
	\text{Marks type}			& 	\text{Volume}		& 	\text{Volume}			\\ \hline	
	$\beta_\text{multi}$		&  	$60/360$			& 	$60/360$ 			\\	
	$w_\text{multi}$			&  	$0.100/0.900$		& 	$0.102/0.898$ 		\\	
	${\kappa_\infty}_\text{multi}$	&  	$15.0$			& 	$15.2$ 				\\ 
	${\phicos}_\text{multi} $		& 	$110.5/19.5$		& 	$109.8/20.9$ 			\\ 
	${\phicoc}_\text{multi} $		& 	$66.5/3.5$	 		& 	$59.7/9.7$ 				\\
	$\mathcal{L}_\text{multi}$	& 	$-$	 			& 	$3.1659$ 				\\ \hline
	$\beta_\text{mono}$		&  	$-$				& 	$153.0$ 				\\		
	${\kappa_\infty}_\text{mono}$	&  	$-$				& 	$16.6$ 				\\
	${\phicos}_\text{mono}$		& 	$-$	 			& 	$68.7/13.1$ 				\\ 
	${\phicoc}_\text{mono}$		& 	$-$		 		& 	$37.4/6.1$ 				\\ 
	$\mathcal{L}_\text{mono}$	& 	$-$ 				& 	$3.1560$ 				\\ \hline
	\text{BR}				& 	$0.833$ 			& 	$0.839$ 				\\
	\text{DBR}				& 	$0.250$ 			& 	$0.257$ 				\\ \hline
\end{tabular}
}
\caption{Calibration of the resilience (left) and intensity (right) for Simulation 1. For the $\phi$'s, the first entry is the constant term and the second one is the linear term.}
\label{table:SIMU_A1_resil_int}
\end{table}

\begin{table}[h]
\center
\parbox{.40\linewidth}{
\begin{tabular}{|c||c|c|}\hline
	\text{Year}					& 	\text{Simu.}		& 	\text{Calib.}		\\ \hline
	$\adjlag \text{ (sec)}$			&  	$2$			& 	$2$	 		\\ \hline
	$\resfact_\text{multi}$			&  	$-$			& 	$3.05$		\\
	$\rho_\text{multi}$				& 	$-$ 			& 	$6/120$	 	\\
	$\lambda_\text{multi}$			& 	$-$ 			& 	$0.0005/0.6850$ 	\\
	$\nu_\text{multi}$				& 	$-$	 		& 	$0.31$		 \\
	$\sigma_\text{multi}$			& 	$-$ 			& 	$0.0055$ 		\\
	$r^2_\text{multi}$				& 	$-$			& 	$96.92\%$	 	\\ \hline
	$\resfact_\text{mono}$			&  	$3.20$		& 	$3.06$ 		\\
	$\rho_\text{mono}$			& 	$130$ 		& 	$121.3$ 		\\
	$\lambda_\text{mono}$			& 	$0.70$ 		& 	$0.69$		\\
	$\sigma_\text{mono}$			& 	$0.0000$ 		& 	$0.0055$ 		\\
	$r^2_\text{mono}$			& 	$-$	 		& 	$96.92\%$ 		\\ \hline
\end{tabular}
}
\quad
\parbox{.45\linewidth}{
\begin{tabular}{|c||c|c|}\hline
	\text{Year}				& 	\text{Simu.}			& 	\text{Calib.}				\\ \hline	
	\text{Marks type}			& 	\text{Volume}		& 	\text{Volume}			\\ \hline	
	$\beta_\text{multi}$		&  	$120/360$			& 	$6/120/360$ 			\\	
	$w_\text{multi}$			&  	$0.050/0.950$		& 	$0.0007/0.0505/0.9488$ 		\\	
	${\kappa_\infty}_\text{multi}$	&  	$40.0$			& 	$39.1$ 				\\ 
	${\phicos}_\text{multi} $		& 	$84.0/36.0$			& 	$72.8/40.9$ 			\\ 
	${\phicoc}_\text{multi} $		& 	$45.0/5.0$	 		& 	$47.4/7.7$ 				\\
	$\mathcal{L}_\text{multi}$	& 	$-$	 			& 	$2.7218$ 				\\ \hline
	$\beta_\text{mono}$		&  	$-$				& 	$82.2$ 				\\		
	${\kappa_\infty}_\text{mono}$	&  	$-$				& 	$19.3$ 				\\
	${\phicos}_\text{mono}$		& 	$-$	 			& 	$27.3/15.4$ 				\\ 
	${\phicoc}_\text{mono}$		& 	$-$		 		& 	$17.8/2.9$ 				\\ 
	$\mathcal{L}_\text{mono}$	& 	$-$ 				& 	$2.6740$ 				\\ \hline
	\text{BR}				& 	$0.519$ 			& 	$0.535$ 				\\
	\text{DBR}				& 	$0.214$ 			& 	$0.186$ 				\\ \hline
\end{tabular}
}
\caption{Calibration of the resilience (left) and intensity (right) for Simulation 2. For the $\phi$'s, the first entry is the constant term and the second one is the linear term.}
\label{table:SIMU_A2_resil_int}
\end{table}

Overall, the accuracy of the estimation is satisfying. The estimated Hawkes parameters are very close to the real ones, although the dimensionality is high. Importantly, the branching ratios and directional branching ratios are all determined accurately, within a precision of $\pm 0.03$ on our experiments. Concerning the propagator, the results are more noisy for Simulation 1, which is not surprising since it includes some Brownian noise. Still, the proportion of transient impact is nearly exact and the dominant time scale is well determined. Simulation 2 is generated with a mono-exponential propagator, and the resilience speed $\rho_\text{mono}$ is slightly underestimated; however this parameter is less stable than the $\lambda$'s and the accuracy that we obtain seems reasonable. In this second case, the values that we find for the volatility and the regression $r^2$ are satisfyingly close to $0$ and $100\%$ respectively.

\subsection{BNP Paribas}\label{section:calib_results_BNPP}

Tables~\ref{table:BNPP_reactime}, \ref{table:BNPP_marks_type} and \ref{table:BNPP_resil_int_2012-2013} and Figures~\ref{fig:LR_propag_BNPP}, \ref{fig:zoom_propag_BNPP} and \ref{fig:Int_BNPP} present our estimation results for the French stock BNP Paribas on the periods February-September 2012 and January-September 2013.

\begin{table}[h]
\center
\begin{tabular}{|c||c|c|c|c|}\hline
	$\adjlag \text{ (sec)}$	 	& 	0			&	2		&	4		&	6		\\
	$r^2_\text{multi} (2012)$	  	&	$24.572\%$		& 	$24.675\%$	&	$24.677\%$	&	$24.672\%$ \\ 
	$r^2_\text{multi} (2013)$	  	&	$10.607\%$		& 	$10.674\%$	&	$10.668\%$	&	$10.649\%$ \\ \hline
\end{tabular}
\caption{Regression $r^2$ for the multi-exponential resilience curve, evaluated for several market adjustment lags $\adjlag = 0,2,4,6$ seconds, for the stock BNP Paribas.}
\label{table:BNPP_reactime}
\end{table}

\begin{table}[h]
\center
\begin{tabular}{|c||c|c|c|}\hline
	\text{Marks type}				& 	\text{Unit}		&	\text{Volume}	&	\text{Price jump}		\\
	$\mathcal{L}_\text{mono} (2012)$	&	$2.6804$		& 	$2.6826$		&	$2.6791$			\\ 
	$\mathcal{L}_\text{mono} (2013)$ 	&	$2.5772$		& 	$2.5794$		&	$2.5750$			\\ \hline
\end{tabular}
\caption{Log-likelihood per point for the mono-exponential Hawkes model, evaluated for several types of marks: unit, volumes and price jumps (see eq.~\eqref{three_phis}), for the stock BNP Paribas.}
\label{table:BNPP_marks_type}
\end{table}

\begin{table}[h]
\center
\parbox{.45\linewidth}{
\begin{tabular}{|c||c|c|}\hline
	\text{Year}					& 	2012			& 	2013			\\ \hline
	$\adjlag \text{ (sec)}$			&  	$4$			& 	$2$	 		\\ \hline
	$\resfact_\text{multi}$			&  	$2.69$		& 	$2.99$		\\
	$\rho_\text{multi}$				& 	$60$ 			& 	$60/360$	 	\\
	$\lambda_\text{multi}$			& 	$0.61$ 		& 	$0.30/0.53$ 		\\
	$\nu_\text{multi}$				& 	$0.39$ 		& 	$0.17$		 \\
	$\sigma_\text{multi}$			& 	$0.2253$ 		& 	$0.2153$ 		\\
	$r^2_\text{multi}$				& 	$24.677\%$		& 	$10.674\%$ 	\\ \hline
	$\resfact_\text{mono}$			&  	$2.70$		& 	$2.56$ 		\\
	$\rho_\text{mono}$			& 	$60.8$ 		& 	$116.5$ 		\\
	$\lambda_\text{mono}$			& 	$0.62$ 		& 	$0.80$ 		\\
	$\sigma_\text{mono}$			& 	$0.2253$ 		& 	$0.2153$ 		\\
	$r^2_\text{mono}$			& 	$24.678\%$ 	& 	$10.688\%$ 	\\ \hline
\end{tabular}
}
\quad
\parbox{.45\linewidth}{
\begin{tabular}{|c||c|c|}\hline
	\text{Year}				& 	2012				& 	2013					\\ \hline	
	\text{Marks type}			& 	\text{Volume}		& 	\text{Volume}			\\ \hline	
	$\beta_\text{multi}$		&  	$6/360$			& 	$6/360$	 			\\	
	$w_\text{multi}$			&  	$0.010/0.990$		& 	$0.011/0.989$	 		\\	
	${\kappa_\infty}_\text{multi}$	&  	$15.1$			& 	$12.1$				\\ 
	${\phicos}_\text{multi} $		& 	$112.8/18.4$		& 	$115.4/15.7$ 			\\ 
	${\phicoc}_\text{multi} $		& 	$50.4/2.1$	 		& 	$46.4/0.9$ 				\\
	$\mathcal{L}_\text{multi}$	& 	$2.7720$ 			& 	$2.6708$ 				\\ \hline
	$\beta_\text{mono}$		&  	$73.0$			& 	$114.1$ 				\\		
	${\kappa_\infty}_\text{mono}$	&  	$13.9$			& 	$14.0$ 				\\
	${\phicos}_\text{mono}$		& 	$38.3/6.2$ 			& 	$58.5/8.0$ 				\\ 
	${\phicoc}_\text{mono}$		& 	$17.1/0.7$		 	& 	$23.5/0.5$ 				\\ 
	$\mathcal{L}_\text{mono}$	& 	$2.6826$ 			& 	$2.5794$ 				\\ \hline
	\text{BR}				& 	$0.820$ 			& 	$0.810$ 				\\
	\text{DBR}				& 	$0.351$ 			& 	$0.380$ 				\\ \hline
\end{tabular}
}
\caption{Calibration of the resilience (left) and intensity (right) for the stock BNP Paribas for the periods February-September 2012 and January-September 2013, between 11 a.m. and 1 p.m. For the $\phi$'s, the first entry is the constant term and the second one is the linear term.}
\label{table:BNPP_resil_int_2012-2013}
\end{table}

\begin{figure}[h]
	\centering
\subfigure[2012]{
      \includegraphics[height=6cm,width=6cm]{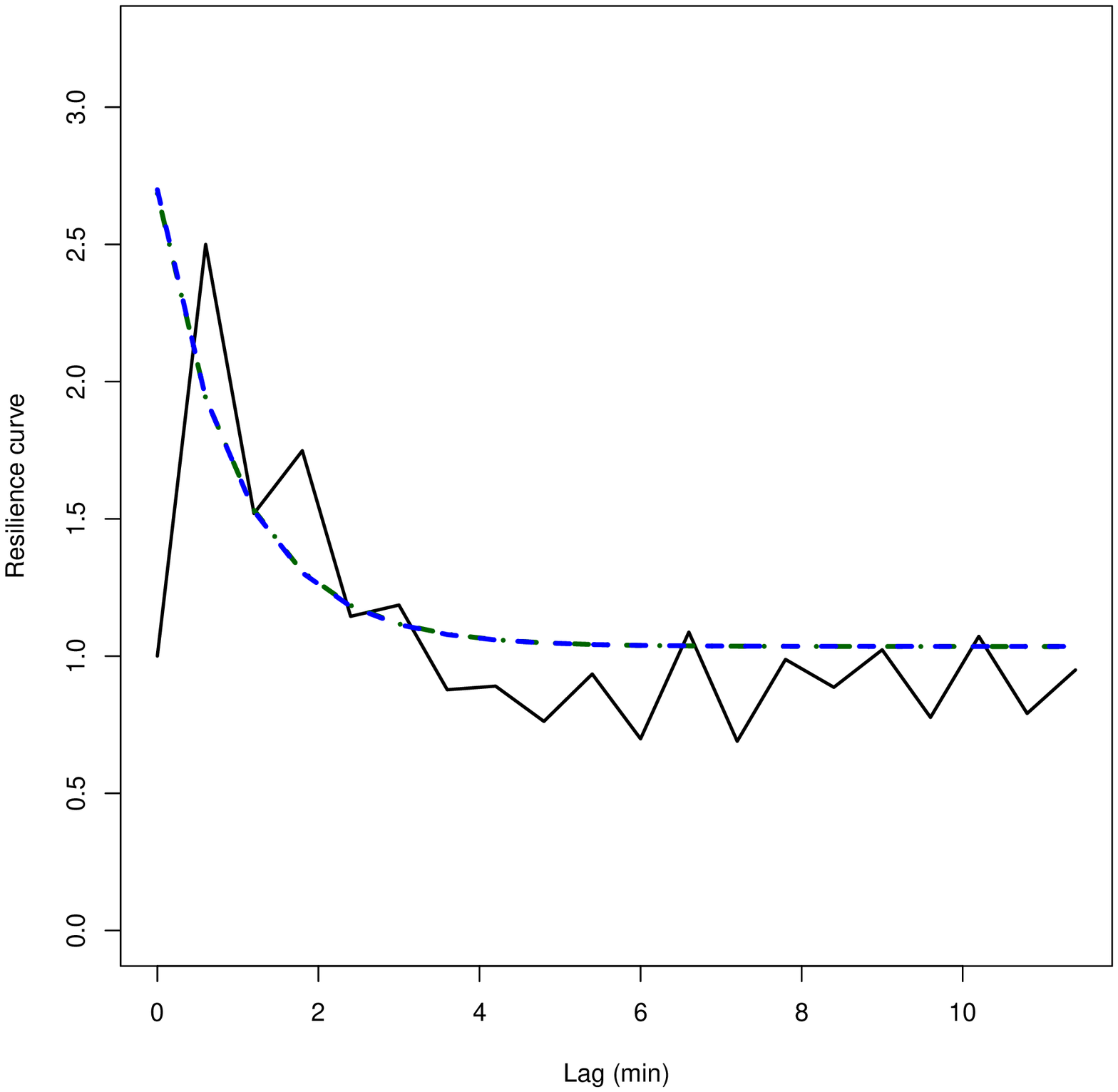}
	\label{fig:LR_propag_BNPP_2012}
    }
\subfigure[2013]{
    \includegraphics[height=6cm,width=6cm]{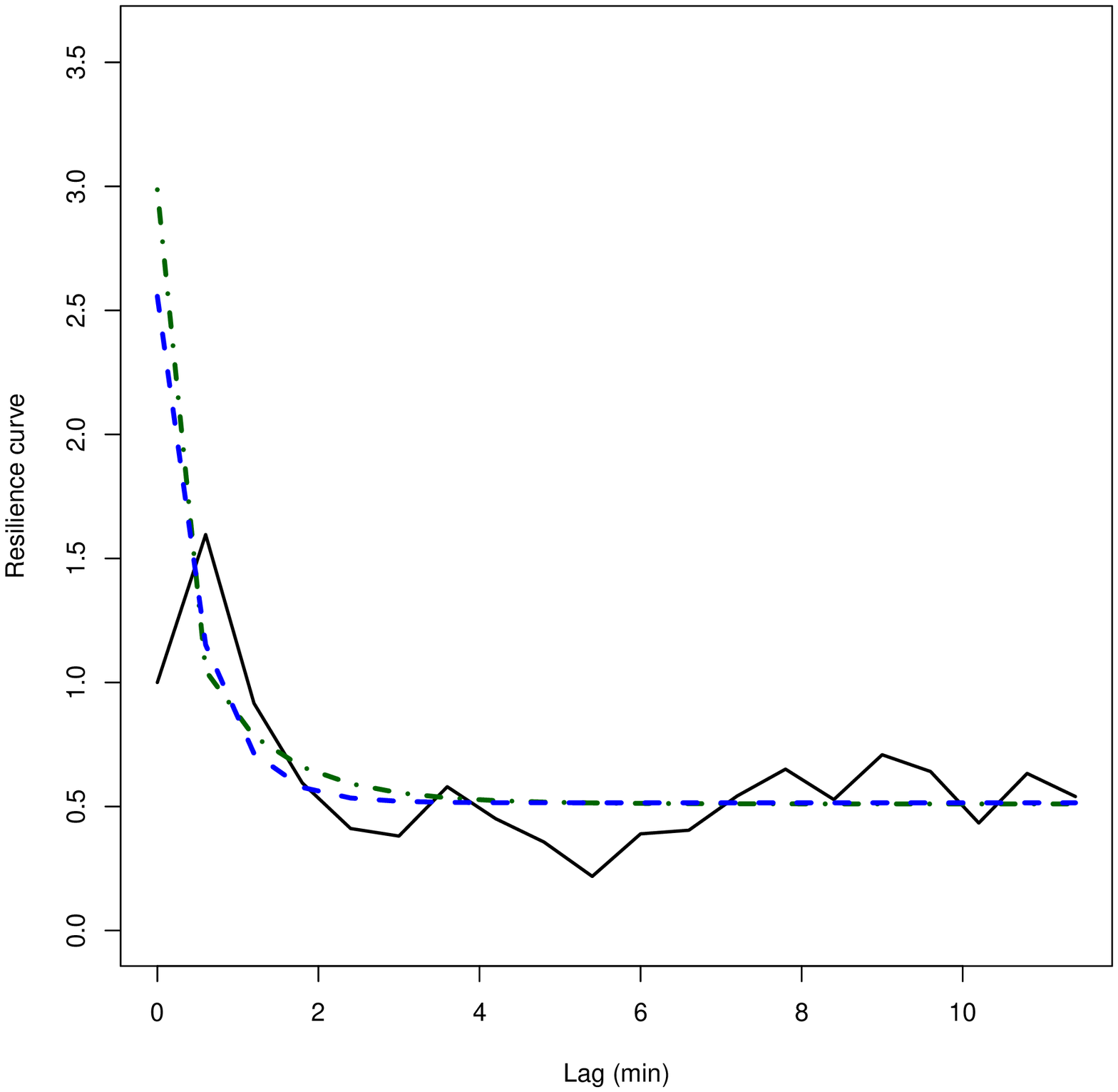}
	\label{fig:LR_propag_BNPP_2013}
    }
\caption{The estimated propagator for BNP Paribas. The plain line is the unconstrained propagator, the (blue) dashed line is the mono-exponential resilience curve, and the (green) dot-dashed line is the multi-exponential resilience curve.}
    \label{fig:LR_propag_BNPP}
\end{figure}

\begin{figure}[h]
	\centering
\subfigure[2012]{
      \includegraphics[height=6cm,width=6cm]{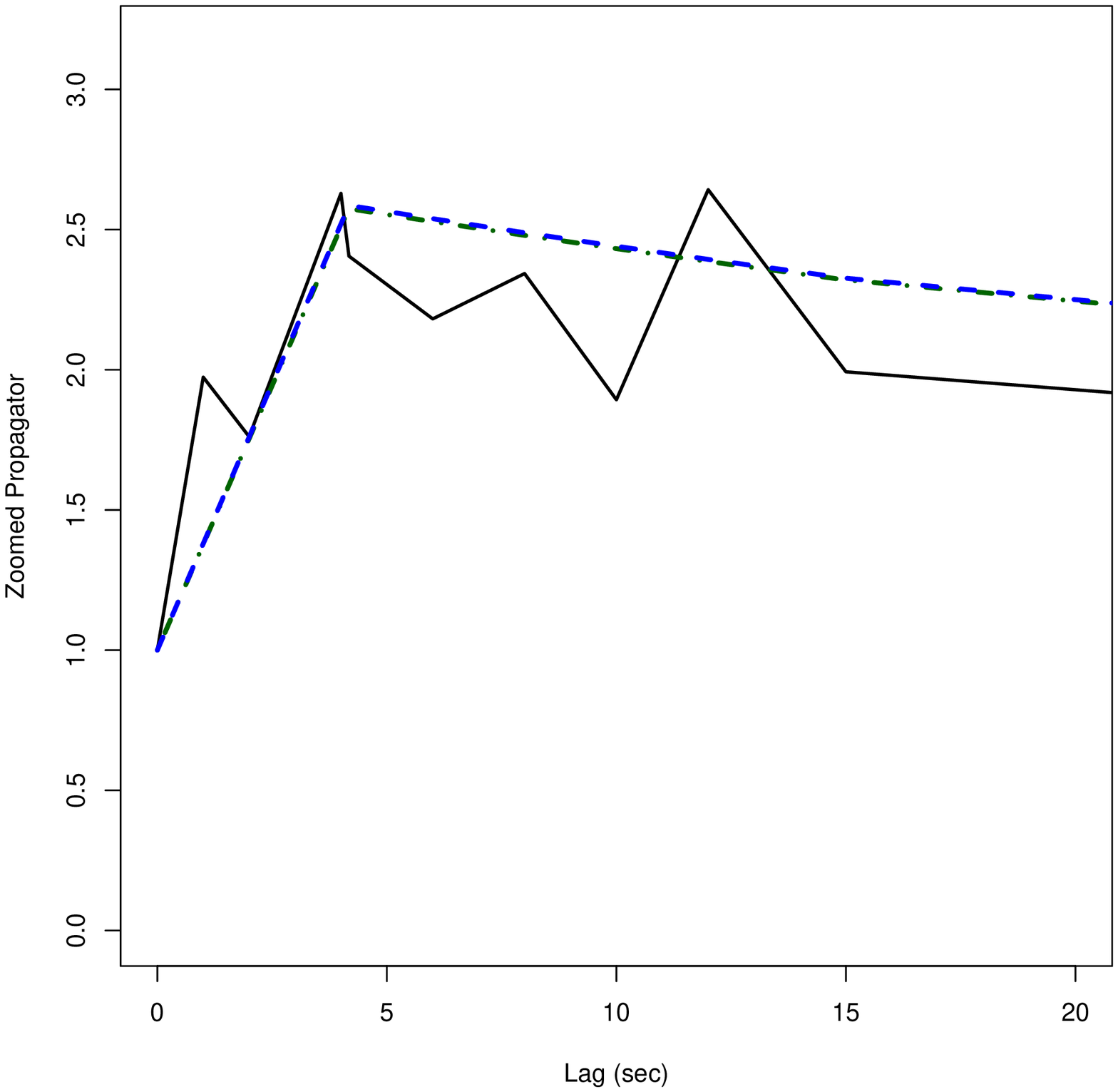}
	\label{fig:zoom_propag_BNPP_2012}
    }
\subfigure[2013]{
    \includegraphics[height=6cm,width=6cm]{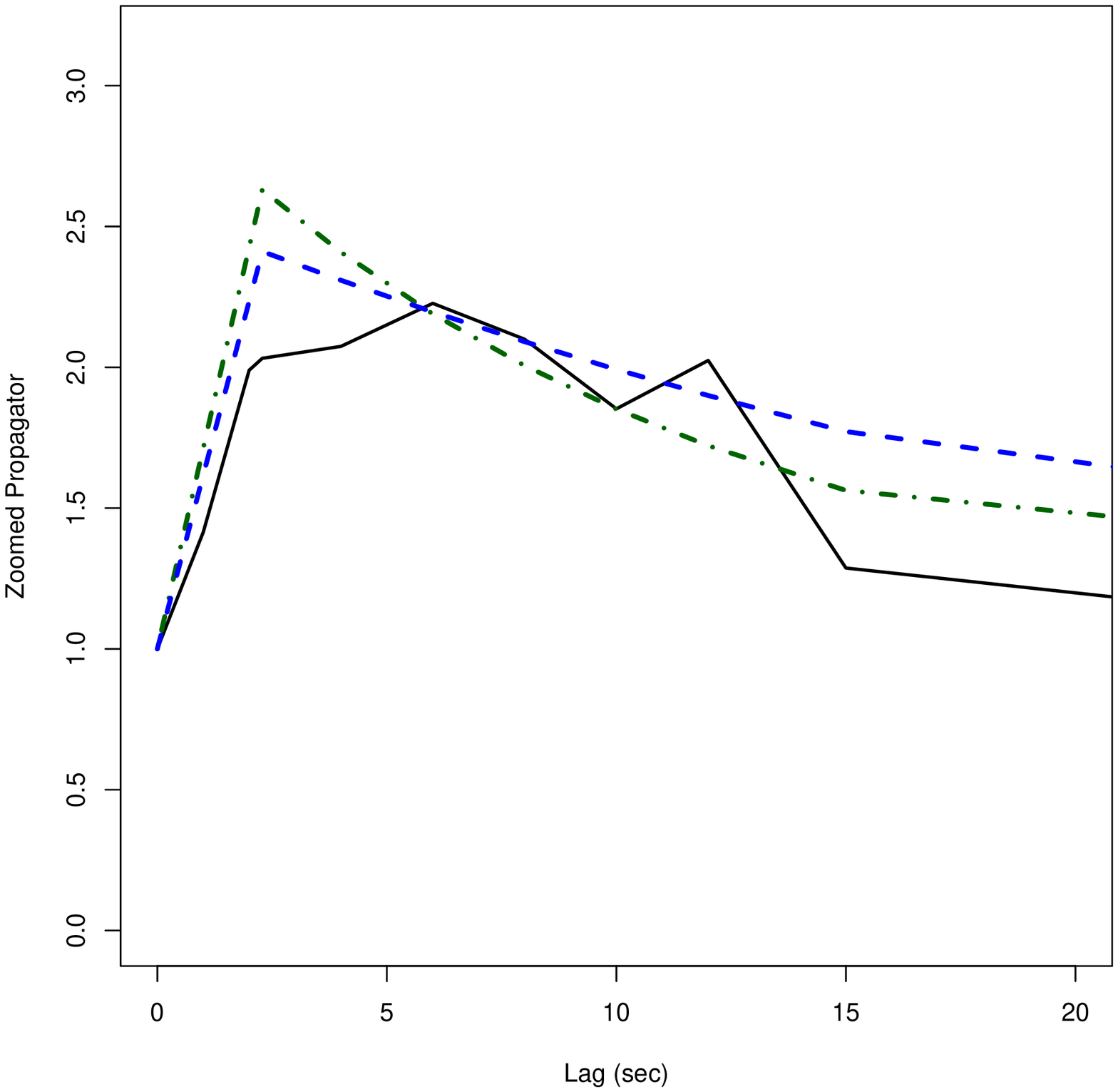}
	\label{fig:zoom_propag_BNPP_2013}
    }
\caption{Zoom on the first twenty seconds of the propagator curve for BNP Paribas. The plain line is the unconstrained curve, the (blue) dashed line is the mono-exponential curve, and the (green) dot-dashed line is the multi-exponential curve. The propagator is increasing during a few seconds, before the resilience effect kicks in.}
    \label{fig:zoom_propag_BNPP}
\end{figure}

\begin{figure}[h]
	\centering
\subfigure[2012]{
      \includegraphics[height=6cm,width=6cm]{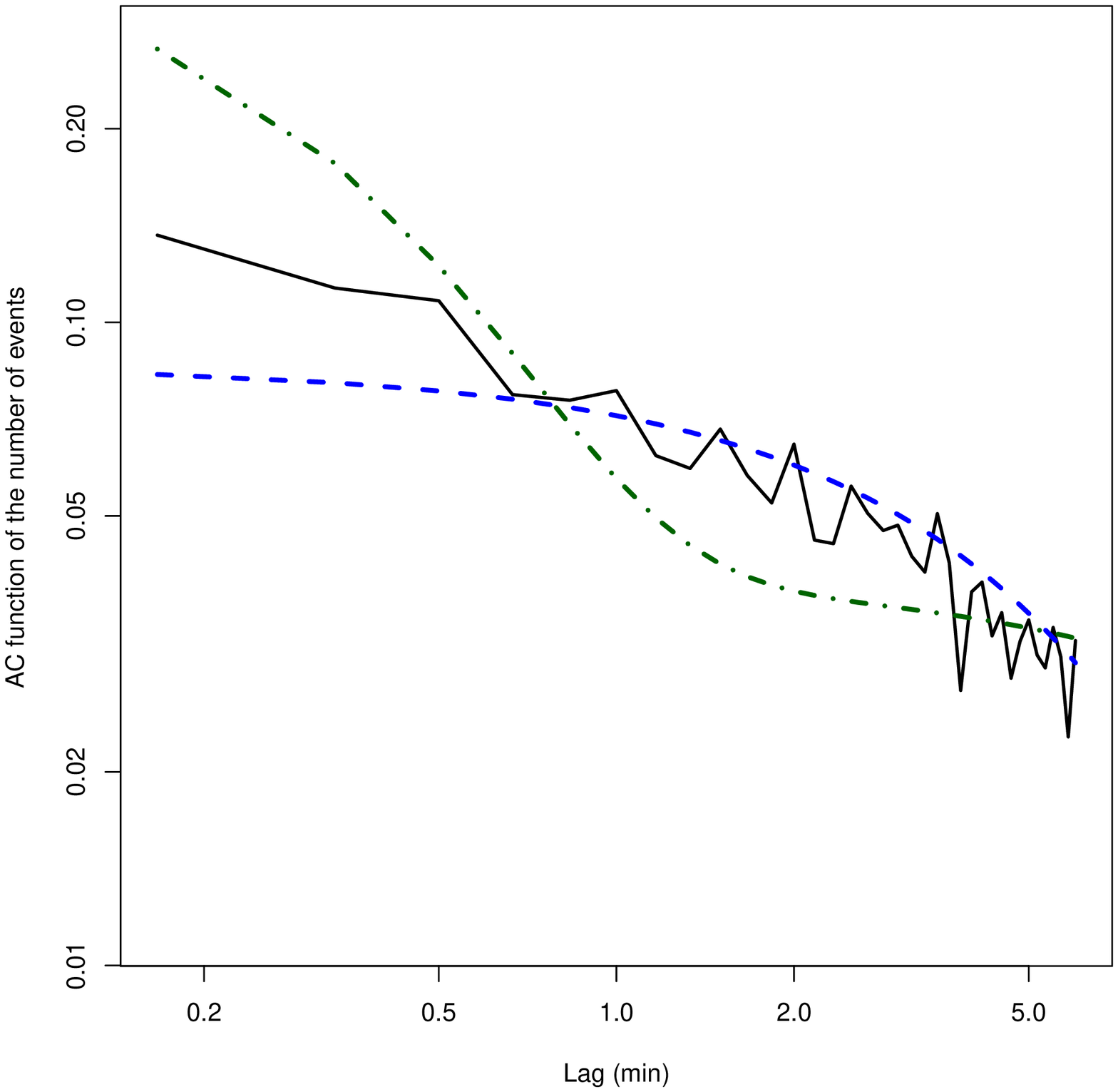}
	\label{fig:Int_BNPP_2012}
    }
\subfigure[2013]{
    \includegraphics[height=6cm,width=6cm]{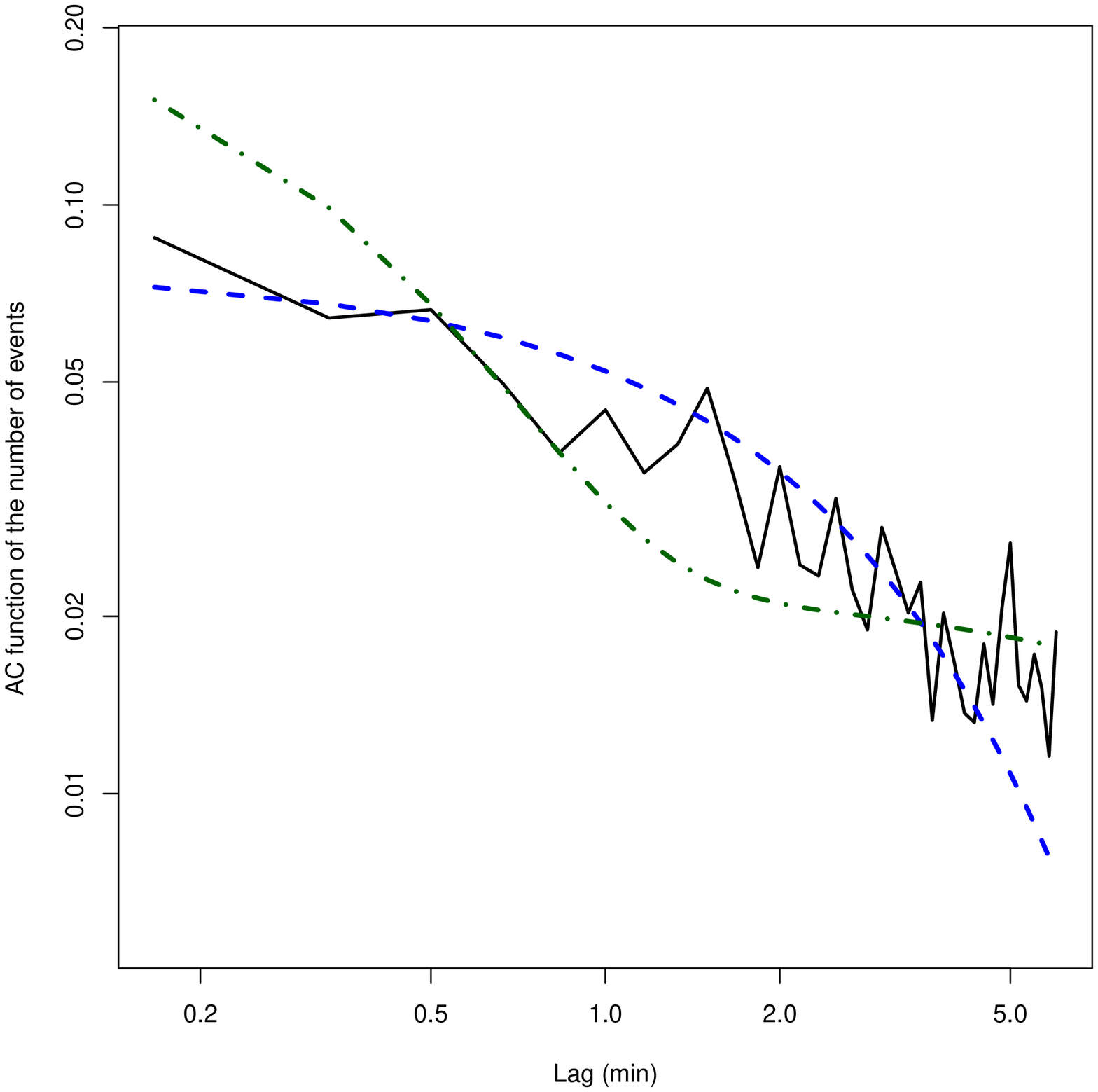}
	\label{fig:Int_BNPP_2013}
    }
	\caption{Auto-correlation function of the number of midpoint moves triggered by trades (plain line), in log-log scale, for BNP Paribas. The (blue) dashed line is the auto-correlation generated by the mono-exponential Hawkes model, the (green) dot-dashed line is generated by the multi-exponential Hawkes model.}
    \label{fig:Int_BNPP}
\end{figure}

Let us first look at the estimation results for the propagator. Table~\ref{table:BNPP_reactime} and Figure~\ref{fig:zoom_propag_BNPP} show that the adjustment lag $\adjlag$ defined in Section~\ref{section:calib_res_framework} is positive and thus that the propagator is increasing near zero. The estimation yields $\adjlag = 4$ sec. for 2012 and $\adjlag=2$ sec. for 2013, and the increasing part lasts indeed longer on Figure~\ref{fig:zoom_propag_BNPP_2012} than on Figure~\ref{fig:zoom_propag_BNPP_2013}. The parameter $\gamma$ given in Table~\ref{table:BNPP_resil_int_2012-2013} tunes the maximum value reached by the propagator at the end of the increasing phase. We find a result between two and three. This means that on average, after a large trade, not only does the bid-ask close around the impacted price (which would yield\footnote{To be more precise, let us consider for example a buy order that increases the ask of one tick. Then, the midprice jumps of one half tick. If the bid price follows shortly the ask and increases of one tick, this moves again the mid of one half tick upward, which gives $\gamma=2$.} $\resfact\approx2$), but cancellations at the new best queue also push the price in the same direction as the trade.

After its short increasing part, the propagator switches to its resilience mode described by Table~\ref{table:BNPP_resil_int_2012-2013} and Figure~\ref{fig:LR_propag_BNPP}. The unconstrained resilience curve is quite smooth, and one can observe on Figure~\ref{fig:LR_propag_BNPP} that it decays to a non-zero proportion of permanent impact ($\approx 40\%$ for 2012 and $\approx 20\%$ for 2013). Also, the results given in Table~\ref{table:BNPP_resil_int_2012-2013} indicate that the mono-exponential fit for the resilience is good on this dataset. For 2012, only the speed $\rho=60$ (i.e. a characteristic time scale of one minute) is selected in the multi-exponential estimation. On the other hand, for 2013, there are two selected speeds (corresponding to one minute and ten seconds), but the mono-exponential fit with $\rho_\text{mono} =116.5$ (approximately thirty seconds) yields a higher regression $r^2$. These dominant characteristic time scales motivate the use of the particular case considered for the optimal strategy in Section~\ref{section:opt_strat}.

We now focus on the estimation results for the Hawkes parameters. Table~\ref{table:BNPP_marks_type} justifies the selection of volume marks: indeed, they yield a higher likelihood per point than unit marks and price marks. Unit marks are the benchmark model for Hawkes processes, but they fail to reproduce the fact that large orders trigger more activity on the market. Indeed, we see on Table~\ref{table:BNPP_resil_int_2012-2013} that the self-excitation parameter $\phicos$ and the cross-excitation parameter $\phicoc$ have non-negligible linear parts ($10-15\%$ for self-excitation and $2-5\%$ for cross-excitation). As for price marks, we think that they give less information than volume marks since price jumps cluster on a few values (one or two ticks in most cases), while the distribution of volumes is much wider.

Hawkes parameters seem to be quite stable, especially in the multi-exponential case where the estimation results are very similar for 2012 and 2013. Two decay speeds are selected for the intensity, and these are the two extreme ones: the long range $\beta = 6$ (10 minutes) and the short range $\beta=360$ (10 seconds). The importance of each time scale $\beta_i$ can be measured by the proportion of the norm that it accounts for, given by
\begin{equation}
\frac{w_i/\beta_i}{\sum_j w_j/\beta_j}.
\nonumber
\end{equation}
Here, the long-range component $\beta=6$ accounts for $\approx 40\%$ of the norm, and the short-range one $\beta=360$ for the remaining $60\%$. Therefore, both decay speeds are important, which is also reflected by the significant increase from the log-likelihood per point $\mathcal{L}_\text{mono}$ of the mono-exponential model to $\mathcal{L}_\text{multi}$ for the multi-exponential one. One can deduce that contrary to the propagator, the Hawkes kernel includes at least two exponential components.

Figure~\ref{fig:Int_BNPP} gives a visual comparison between the data, the mono-exponential Hawkes model and the multi-exponential one through the auto-correlation of the number of events. The formula for the empirical auto-correlation function $\widehat{\cov}(k)$  is given by equation~\eqref{eqn:emp_autocorrel}.
Using equations~\eqref{eqn:autocov_Hawkes_def} and~\eqref{eqn:multi_autocorrel}, we have that if $h>0$ is small and $\tau>0$, $\widehat{\cov}(\tau/h)$ approximates the auto-correlation function $\cov(\tau)/\cov(0)$ of the total intensity process $\Sigma_t$. For a multi-exponential Hawkes kernel, one has
\begin{equation}
\widehat{\cov}(\tau/h) \approx \frac{\cov(\tau)}{\cov(0)} = \sum_{j=1}^p \frac{a_j}{\sum_k a_k} \exp(-b_j |\tau|),
\nonumber
\end{equation}
where the coefficients $a_1, \cdots, a_p, b_1, \cdots, b_p > 0$ are determined as in Proposition~\ref{prop_autocov}.
One can see on Figure~\ref{fig:Int_BNPP} that the mono-exponential model fits the end of the curve rather well but that its initial decay is too slow. On the other hand, the multi-exponential model does show a transition between two decay speeds, and captures the short-range behavior of the curve better. Still, the accuracy of the fit is not very satisfactory and it seems that the functional form of the auto-correlation is more subtle than a multi-exponential one.

Finally, we confront our calibration results to the conditions derived in Section~\ref{subsec_noarb} for the absence of Price Manipulation Strategies in the model. It is complicated in practice to quantify the deviation of our set of parameters to the equilibrium using equation~\eqref{cond_MIHM_ext}. On the other hand, equation~\eqref{eqn:DBR_transient} gives a simpler criterion: the directional branching ratio DBR and the proportion $1-\nu$ of transient impact should be equal for PMS to be ruled out. Here, the standard branching ratio BR $\approx 80\%$ is high, but the directional branching ratio DBR $\approx 40\%$ is quite low, which is due to a non-negligible part of cross-excitation in the order flow. It implies that the equilibrium condition is violated since $1-\nu\approx60\%$ for 2012 and $1-\nu\approx80\%$ for 2013. Since $1-\nu>\text{DBR}$ holds in both cases, we find that the price process is mean-reverting on average, rather than diffusive. This should lead to the existence of PMS in practice, which is the object of Section~\ref{section:eval_model}.

\subsection{Total}

Tables~\ref{table:TOTF_reactime}, \ref{table:TOTF_marks_type} and \ref{table:TOTF_resil_int_2012-2013} present our estimation results for the French stock Total on the periods January-September 2012 and January-September 2013.

\begin{table}[h]
\center
\begin{tabular}{|c||c|c|c|c|}\hline
	$\adjlag \text{ (sec)}$	 	& 	0			&	2		&	4		&	6		\\
	$r^2_\text{multi} (2012)$	  	&	$23.093\%$		& 	$23.166\%$	&	$23.137\%$	&	$23.108\%$ \\ 
	$r^2_\text{multi} (2013)$	  	&	$11.604\%$		& 	$11.613\%$	&	$11.608\%$	&	$11.606\%$ \\ \hline
\end{tabular}
\caption{Regression $r^2$ for the multi-exponential resilience curve, evaluated for several market adjustment lags $\adjlag = 0,2,4,6$ seconds, for the stock Total.}
\label{table:TOTF_reactime}
\end{table}

\begin{table}[h]
\center
\begin{tabular}{|c||c|c|c|}\hline
	\text{Marks type}				& 	\text{Unit}		&	\text{Volume}	&	\text{Price jump}		\\
	$\mathcal{L}_\text{mono} (2012)$	&	$2.2981$		& 	$2.3034$		&	$2.2965$			\\ 
	$\mathcal{L}_\text{mono} (2013)$ 	&	$2.2065$		& 	$2.2127$		&	$2.2063$			\\ \hline
\end{tabular}
\caption{Log-likelihood per point for the mono-exponential Hawkes model, evaluated for several types of marks: unit, volumes and price jumps (see eq.~\eqref{three_phis}), for the stock Total.}
\label{table:TOTF_marks_type}
\end{table}

\begin{table}[h]
\center
\parbox{.45\linewidth}{
\begin{tabular}{|c||c|c|}\hline
	\text{Year}					& 	2012			& 	2013			\\ \hline
	$\adjlag \text{ (sec)}$			&  	$2$			& 	$2$	 		\\ \hline
	$\resfact_\text{multi}$			&  	$3.72$		& 	$2.21$		\\
	$\rho_\text{multi}$				& 	$60/360$ 		& 	$6/120/360$ 	\\
	$\lambda_\text{multi}$			& 	$0.29/0.55$ 		& 	$0.004/0.651/0.268$ \\
	$\nu_\text{multi}$				& 	$0.16$ 		& 	$0.08$		 \\
	$\sigma_\text{multi}$			& 	$0.1400$ 		& 	$0.1124$ 		\\
	$r^2_\text{multi}$				& 	$23.166\%$		& 	$11.613\%$ 	\\ \hline
	$\resfact_\text{mono}$			&  	$3.84$		& 	$2.65$ 		\\
	$\rho_\text{mono}$			& 	$187.2$ 		& 	$191.3$ 		\\
	$\lambda_\text{mono}$			& 	$0.84$ 		& 	$0.93$ 		\\
	$\sigma_\text{mono}$			& 	$0.1399$ 		& 	$0.1123$ 		\\
	$r^2_\text{mono}$			& 	$23.132\%$ 	& 	$11.586\%$ 	\\ \hline
\end{tabular}
}
\quad
\parbox{.45\linewidth}{
\begin{tabular}{|c||c|c|}\hline
	\text{Year}				& 	2012				& 	2013					\\ \hline	
	\text{Marks type}			& 	\text{Volume}		& 	\text{Volume}			\\ \hline	
	$\beta_\text{multi}$		&  	$120/360$			& 	$6/60/360$	 			\\	
	$w_\text{multi}$			&  	$0.052/0.948$		& 	$0.010/0.035/0.955$ 		\\	
	${\kappa_\infty}_\text{multi}$	&  	$21.0$			& 	$9.7$					\\ 
	${\phicos}_\text{multi} $		& 	$98.7/21.7$			& 	$84.5/18.5$	 			\\ 
	${\phicoc}_\text{multi} $		& 	$44.3/3.9$	 		& 	$36.5/0.7$ 				\\
	$\mathcal{L}_\text{multi}$	& 	$2.3801$ 			& 	$2.2842$ 				\\ \hline
	$\beta_\text{mono}$		&  	$93.0$			& 	$109.1$ 				\\		
	${\kappa_\infty}_\text{mono}$	&  	$9.2$				& 	$9.0$	 				\\
	${\phicos}_\text{mono}$		& 	$43.5/9.6$ 			& 	$47.4/10.4$ 				\\ 
	${\phicoc}_\text{mono}$		& 	$19.5/1.7$		 	& 	$20.4/0.4$ 				\\ 
	$\mathcal{L}_\text{mono}$	& 	$2.3034$ 			& 	$2.2127$ 				\\ \hline
	\text{BR}				& 	$0.517$ 			& 	$0.688$ 				\\
	\text{DBR}				& 	$0.222$ 			& 	$0.323$ 				\\ \hline
\end{tabular}
}
\caption{Calibration of the resilience (left) and intensity (right) for the stock Total for the periods January-September 2012 and January-September 2013, between 11 a.m. and 1 p.m. For the $\phi$'s, the first entry is the constant term and the second one is the linear term.}
\label{table:TOTF_resil_int_2012-2013}
\end{table}

The qualitative interpretation of the results is similar to that of Section~\ref{section:calib_results_BNPP}. Yet, one should note the following points that are observable on Table~\ref{table:TOTF_resil_int_2012-2013}. First, we notice that there is no significant difference between the mono and multi-exponential propagator. Here, contrary to the BNP Paribas case, the fit is slightly better with two time scales. Second, the branching ratio BR $\approx60\%$ and the directional branching ratio DBR$\approx30\%$ are smaller for Total, whereas the proportion $1-\nu$ of transient impact ($84\%$ for 2012 and $92\%$ for 2013) is higher, which means that the price has an even stronger mean-reversion tendency.

\clearpage

\section{Test of some Price Manipulation Strategies}\label{section:eval_model}

In this section, we apply the optimal strategy derived in~\cite{AB_DynHawkes} and Theorem~\ref{thm_opt_strat} to our dataset, with the parameters obtained by our calibration protocol. Essentially, we run the strategy each day with a zero initial and final position. If the model is relevant, this should give some profit on average. This backtest serves as a practical evaluation of our calibration results, and of the model itself.

\subsection{Scaling and discretization of the optimal strategy}

The simplest and most natural way is to use the optimal strategy~\eqref{opt_strat} is to consider a discrete subset $\Theta$ of $[0,T]$ (possibly made of stopping times) and to trade for each time~$t\in\Theta$ the quantity
\begin{align}
\xi_{t,T}^s &=
- \ \frac{[1+\rho (T-t)] q s D_t
+ X_t}
{2+\rho (T-t)}
\nonumber \\
& \ + \ \frac{m_1}{2\rho} \times \left[(1,\cdots,1) \ . \
 \left\{ I_p + \frac{\rho (T-t)}{2+\rho (T-t)} \times [\zeta((T-t) H)+ \nu \rho (T-t) \ \omega((T-t) H)] \right\} \ . \ s \delta_t \right],
\label{eqn:optimal_trade}
\end{align}
so that~\eqref{opt_strat} holds in~$t+$ if $s=1$. Here, $\delta_t =   \left({\kappa^+_t}^{(i)}-{\kappa^-_t}^{(i)}\right)_i$ is the vector of intensity imbalances and we calculate $D$ by using the following formula
\begin{equation}
D_t = \sum_{\tau \leq t} \Delta \midprice_\tau \ [G(t-\tau)-G(\infty)].
\nonumber
\end{equation}
%


In order to tune the leveraging of the strategy and its discreteness on the market, we introduce a scaling factor $s \in [0,1]$ that multiplies $\delta_t$ and $D_t$. By doing so, we multiply by $s$ the deviation of the whole strategy from the standard Obizhaeva and Wang~\cite{OW} liquidation scheme. The latter is static since it assumes that the observed price process is always a martingale. The limit $s=0$ thus corresponds to the static strategy, whereas $s=1$ is the optimal strategy given by Theorem~\ref{thm_opt_strat}, which may be very aggressive in standard market conditions. In fact, using the optimal strategy with $s=1$ may lead to buy and sell repeatedly quantities that exceed the size of the first queues, which is not realistic.

\subsection{Methodology}

To backtest the strategy in practice, we choose to update our position when we observe midprice moves. Let us define
\begin{equation}
\Theta = \{ \theta \in (\regwin,T),  \ \theta-\tau(\theta)>\adjlag \},
\nonumber
\end{equation}
where the $\theta$'s correspond to the times of price jumps due to cancellations and passive limit orders, $\tau(\theta)$ is the time of the last price jump due to a trade before $\theta$, $\regwin$ is the regression window  defined in Section~\ref{section:estim_propag} and $\adjlag$ is the market adjustment lag.
The position of the strategy at time $t \in [0,T]$ is given by
\begin{equation}
X_t^s = \underset{\theta \in \Theta}{\sum} \xi_{\theta,T}^s.
\nonumber
\end{equation}
At time $T$, we close the position with the transaction
\begin{equation}
\Delta X^s_T = -X^s_T.
\nonumber
\end{equation}
The time horizon is still $T=2$ hours, where $t=0$ corresponds to 11 a.m. and $t=T$ to 1 p.m.
We choose to apply the strategy on $[\regwin,T]$ instead of $[0,T]$, so that the values of $\delta_t$ and $D_t$ for $t\geq \regwin$ can be accurately computed.
Moreover, for each time $\tau \in (\regwin,T)$ where the price jumps because of a transaction, we do not trade on the time interval $[\tau,\tau+\adjlag]$. As a matter of fact, the market adjustment lag $\adjlag$ corresponds approximately to the time needed for the bid-ask to close after a trade that empties the best bid or the best ask. It would be meaningless to trade at the midprice (or even at the midprice $\pm1$ half-tick) before the bid-ask is closed, and we would artificially boost the performance of the strategy if we allowed it. However, this constraint is not needed for simulated data, for which we set $\Theta = \{ \theta \in (\regwin,T) \}$.

We assume that the scaling $s$ is small enough for the effective impact of the strategy on the market price to be negligible. Although approximative, this assumption allows us to backtest the strategy assuming that we can trade at the observed price.

In the sequel of this section, we apply the optimal strategy for the mono and multi-exponential Hawkes decay kernels and for several stocks. We summarize the results in one table and a few graphs for each stock. We note $Y_i$ is the profit made by the strategy on day $i \in \{1,\cdots, n\}$, $\overline{Y_n} = \frac1n \sum_{i=1}^n Y_i$ the empirical mean and $S^2_n = \frac1{n-1}\sum_{i=1}^n [Y_i-\overline{Y_n}]^2$ the empirical variance of daily profits. The values given in the table are
\begin{itemize}
\item The annualized Sharpe ratio of the strategy
\begin{equation}
\text{Sharpe} = \sqrt{n} \times \frac{\overline{Y_n}}{\sqrt{S^2_n}}.
\nonumber
\end{equation}
\item The empirical positivity probability, skew and kurtosis of daily gains
\begin{equation}
\text{Proba.} = \frac1n \sum_{i=1}^n \indi{Y_i>0},
\quad
\text{Skew} = \frac{\frac1n\sum_{i=1}^n [Y_i-\overline{Y_n}]^3}{S^3_n},
\quad
\text{Kurto.} = \frac{\frac1n\sum_{i=1}^n [Y_i-\overline{Y_n}]^4}{S^4_n}.
\nonumber
\end{equation}
\end{itemize}
The choice of the scaling $s$ has no impact on these results, since all the values above are invariant to the multiplication of the strategy by a positive constant.
Thus, only the units of the graphs are changed by the scaling, and we fix $s=0.001$. With this choice, the volumes of individual transactions never exceed $5\%$ of the average volume of the best bid/ask queue, which makes our toy backtest with no impact reasonable. 

For each stock and each period, we also evaluate of the \enquote{Poisson strategy} that one obtains if trades are modeled with two independent compound Poisson processes, which is equivalent to imposing $\kappa^+_t \equiv \overline{\kappa}, \ \kappa^-_t \equiv \overline{\kappa}$ and thus $\kappa^+_t - \kappa^-_t \equiv 0$. More precisely, we trade for $t\in\Theta$ (the same time grid as for the Hawkes model) the quantity
$$\xi^s_{t,T}=
- \ \frac{[1+\rho (T-t)] q s D_t
+ X_t}
{2+\rho (T-t)}.   $$
This strategy is entirely based on mean-reversion, and the trend-following part disappears. For the Hawkes and the Poisson strategies, we give in the tables the impact of a bid-ask cost of one half-tick on the results. This corresponds to a more realistic implementation of the strategy (which should trade at the best and not at the midpoint) and we see that this is sufficient to prevent Price Manipulation Strategies in most cases (the Sharpe ratio becomes close to zero or even negative). As a benchmark, we also present in Table~\ref{table:SIMU_A1_strat_opt} and~\ref{table:SIMU_A2_strat_opt} the results of these strategies on simulated data. These give an idea of the profits that the strategies could reach in theory.

Our findings are the following. On simulated data, the profits made by the strategies are evident and still significant with a half-tick penalty. On real data, the Sharpe ratios remain positive for all the tests, which indicates that the model is not out of scope and captures some characteristics of the real market flow. However, these ratios are lower than for simulated data and may become negative when we take the bid-ask spread into account.
Said differently, market participants who use mean-reverting and trend-following strategies already exploit most of the arbitrage opportunities described by our model, and the backtest of our optimal strategy in realistic market conditions does not yield significant gains.
 Somehow, this justifies the theoretical assumption to consider a market without PMS when dealing with both market impact and the bid-ask spread. 
Now, let us compare the different strategies used in Tables~\ref{table:BNPP_strat_opt_2012-2013} and~\ref{table:TOTF_strat_opt_2012-2013}. The results are rather similar for the three strategies and none of them seem to outperform the others. Intuitively, this means that the main component of the strategy is the mean-reverting one (which is common to the Poisson and Hawkes strategies), while the trend-following one has a minor contribution. This is confirmed by the statistical facts in Table~\ref{table:BNPP_resil_int_2012-2013} and~\ref{table:TOTF_resil_int_2012-2013} where the directional branching ratio DBR is much lower than the proportion of transient impact $\lambda_\text{mono} = 1-\nu$.


\subsection{Simulated data}

Tables~\ref{table:SIMU_A1_strat_opt} and~\ref{table:SIMU_A2_strat_opt} present the results of the optimal  strategy applied to simulated data. The simulation parameters are the same as in Section~\ref{section:simu_calib} (see Tables~\ref{table:SIMU_A1_resil_int} and~\ref{table:SIMU_A2_resil_int}), and both datasets are composed of $150$ independent two-hour windows. In Tables~\ref{table:SIMU_A1_strat_opt} and~\ref{table:SIMU_A2_strat_opt}, the two first columns contain the results of the strategy computed with the real simulation parameters for the Hawkes model, and the third and fourth columns contain the results for estimated Hawkes parameters. In both cases, the resilience is the estimated mono-exponential curve, since the optimal strategy is known explicitly only in that case.

\begin{table}[h]
\center
\begin{tabular}{|c||c|c||c|c|}\hline
	\text{Year}		& \text{Simu.}	& $+ \text{bid-ask}$& 	\text{Calib.}	& $+ \text{bid-ask}$	\\ \hline
	Sharpe (Multi)	&  	$6.759$	&$3.225$		& 	$6.764$ 	& $3.176$			\\ 
	Proba. (Multi)	& 	$74.0\%$ 	&$63.3\%$		& 	$74.0\%$	& $63.3\%$			\\ 
	Skew (Multi)		&  	$0.55$	&$0.23$		& 	$0.57$ 	& $0.24$			\\ 
	Kurtosis (Multi)	&  	$4.19$	&$4.03$		& 	$4.22$ 	& $4.05$			\\ \hline \hline
	Sharpe (Mono)	&  	$-$		&$-$			& 	$6.308$ 	& $3.371$			\\ 
	Proba. (Mono)	& 	$-$		&$-$			& 	$74.0\%$	& $62.7\%$			\\ 
	Skew (Mono)		&  	$-$		&$-$			& 	$0.47$ 	& $0.20$			\\ 
	Kurto. (Mono)	&  	$-$		&$-$			& 	$4.11$ 	& $3.97$			\\ \hline \hline
	Sharpe (Poisson)	&  	$-$		&$-$			& 	$6.630$ 	& $3.735$			\\ 
	Proba. (Poisson)	& 	$-$		&$-$			& 	$73.3\%$	& $64.0\%$ 			\\ 
	Skew (Poisson)	&  	$-$		&$-$			& 	$0.43$ 	& $0.18$			\\ 
	Kurto. (Poisson)	&  	$-$		&$-$			& 	$3.88$ 	& $3.80$			\\ \hline
\end{tabular}
\caption{Results statistics of the optimal  strategy applied on the data of Simulation 1 (simulation parameters of Table~\ref{table:SIMU_A1_resil_int}).}
\label{table:SIMU_A1_strat_opt}
\end{table}

\begin{table}[h]
\center
\begin{tabular}{|c||c|c||c|c|}\hline
	\text{Year}		& \text{Simu.}	& $+ \text{bid-ask}$& 	\text{Calib.}	& $+ \text{bid-ask}$	\\ \hline
	Sharpe (Multi)	&  	$33.268$	&$27.095$		& 	$32.302$ 	& $25.769$			\\ 
	Proba. (Multi)	& 	$100.0\%$ 	&$100.0\%$		& 	$100.0\%$	& $99.3\%$		\\ 
	Skew (Multi)		&  	$0.50$	&$0.51$		& 	$0.52$ 	& $0.54$			\\ 
	Kurtosis (Multi)	&  	$3.22$	&$3.35$		& 	$3.25$ 	& $3.40$			\\ \hline \hline
	Sharpe (Mono)	&  	$-$		&$-$			& 	$34.940$ 	& $28.605$			\\ 
	Proba. (Mono)	& 	$-$		&$-$			& 	$100.0\%$	& $100.0\%$		\\ 
	Skew (Mono)		&  	$-$		&$-$			& 	$0.45$ 	& $0.46$			\\ 
	Kurto. (Mono)	&  	$-$		&$-$			& 	$3.19$ 	& $3.31$			\\ \hline \hline
	Sharpe (Poisson)	&  	$-$		&$-$			& 	$34.986$ 	& $28.681$			\\ 
	Proba. (Poisson)	& 	$-$		&$-$			& 	$100.0\%$	& $100.0\%$ 		\\ 
	Skew (Poisson)	&  	$-$		&$-$			& 	$0.44$ 	& $0.45$			\\ 
	Kurto. (Poisson)	&  	$-$		&$-$			& 	$3.12$ 	& $3.25$			\\ \hline
\end{tabular}
\caption{Results statistics of the optimal  strategy applied on the data of Simulation 2 (simulation parameters of Table~\ref{table:SIMU_A2_resil_int}).}
\label{table:SIMU_A2_strat_opt}
\end{table}


\subsection{BNP Paribas}

\begin{table}[h]
\center
\begin{tabular}{|c||c|c||c|c||c|c|}\hline
	\text{Year}		& IS 2012& $+ \text{bid-ask}$& 	IS 2013& $+ \text{bid-ask}$&  	OS 2013& $+ \text{bid-ask}$	\\ \hline
	Sharpe (Multi)	&  	$1.382$	&$-0.675$	& 	$2.454$ 	& $0.725$	&	$2.248$	& $0.418$	\\ 
	Proba. (Multi)	& 	$65.9\%$ 	&$56.5\%$	& 	$61.3\%$	& $47.1\%$	&	$58.1\%$	& $48.2\%$	\\ 
	Skew (Multi)		&  	$-2.02$	&$-2.40$	& 	$3.65$ 	& $3.34$	&	$4.48$	& $4.14$	\\ 
	Kurtosis (Multi)	&  	$19.02$	&$19.94$	& 	$29.40$ 	& $27.71$	&	$36.96$	& $34.65$	\\ \hline \hline
	Sharpe (Mono)	&  	$1.263$	&$-0.713$	& 	$2.536$ 	& $0.771$	&	$2.430$	& $0.563$	\\ 
	Proba. (Mono)	& 	$62.9\%$	&$57.1\%$	& 	$62.3\%$	& $48.2\%$	&	$58.1\%$	& $49.7\%$	\\ 
	Skew (Mono)		&  	$-1.89$	&$-2.30$	& 	$2.94$ 	& $2.61$	&	$3.56$	&$3.21$	\\ 
	Kurto. (Mono)	&  	$16.64$	&$17.68$	& 	$23.27$ 	& $21.90$	&	$26.74$	& $24.87$	\\ \hline \hline
	Sharpe (Poisson)	&  	$1.056$	&$-0.849$	& 	$2.5888$ 	& $0.8077$	&	$2.513$	& $0.630$	\\ 
	Proba. (Poisson)	& 	$65.3\%$	&$55.9\%$	& 	$61.3\%$	& $49.7\%$	&	$60.2\%$	& $49.2\%$	\\ 
	Skew (Poisson)	&  	$-2.72$	&$-3.07$	& 	$3.09$ 	& $2.76$	&	$3.94$	& $3.58$	\\ 
	Kurto. (Poisson)	&  	$23.46$	&$24.68$	& 	$24.41$ 	& $22.82$	&	$31.13$	& $28.86$	\\ \hline
\end{tabular}
\caption{Results statistics of the optimal  strategy applied on BNP Paribas on the periods February-September 2012 and January-September 2013, every day between 11.30a.m. and 1p.m. The first two columns are In-Sample results, i.e. the data used to calibrate the model is the same as the evaluation data. The third column gives Out-of-Sample results, i.e. we calibrate the model on the 2012 data to apply the strategy on the 2013 data.}
\label{table:BNPP_strat_opt_2012-2013}
\end{table}


\begin{figure}[H]
	\centering
\subfigure[Trading at the midprice]{
    \includegraphics[height=7cm,width=7cm]{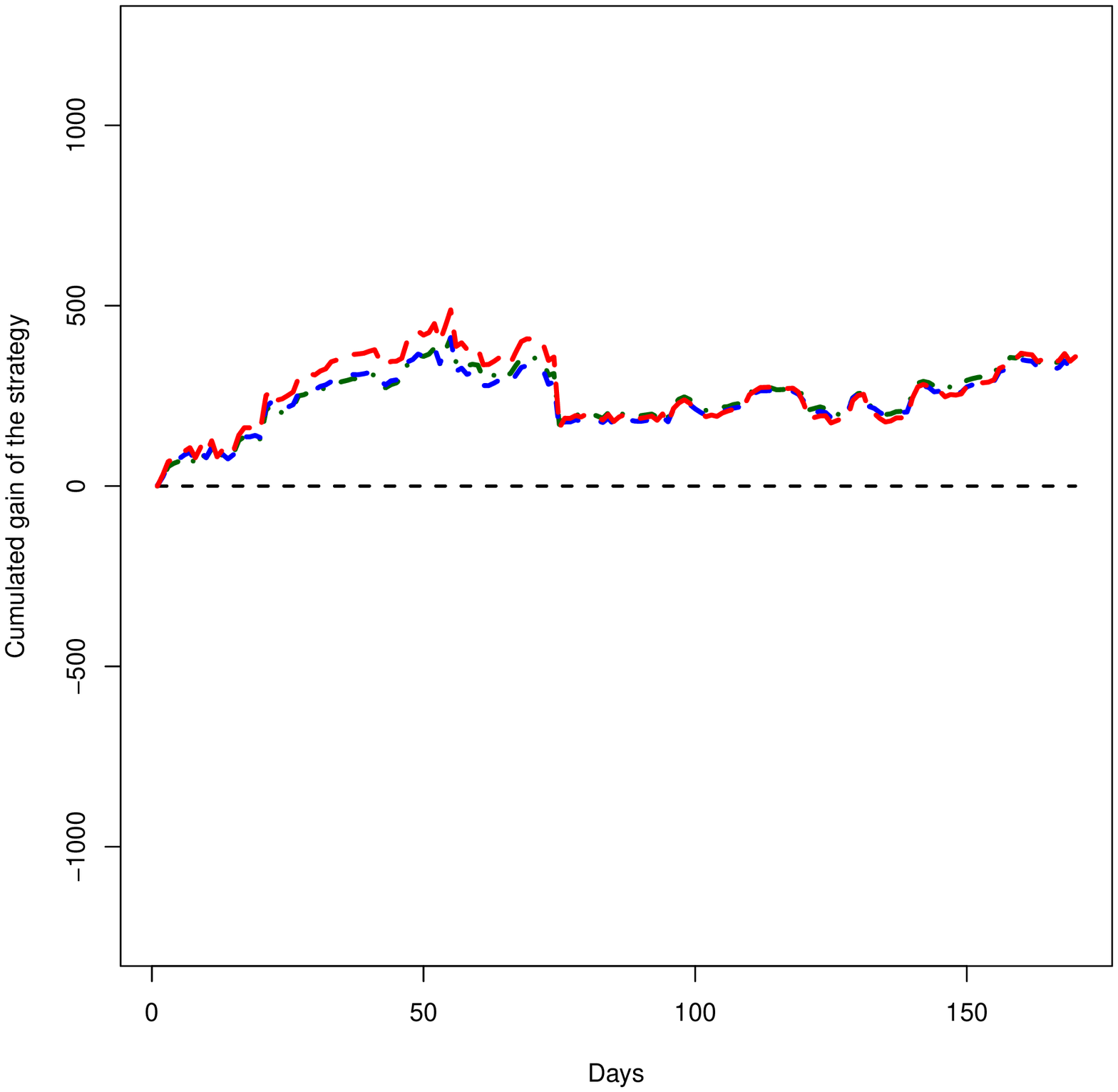}
	\label{fig:CumGain_tick0_BNPP_2012}
    }
\subfigure[One half-tick penalty]{
    \includegraphics[height=7cm,width=7cm]{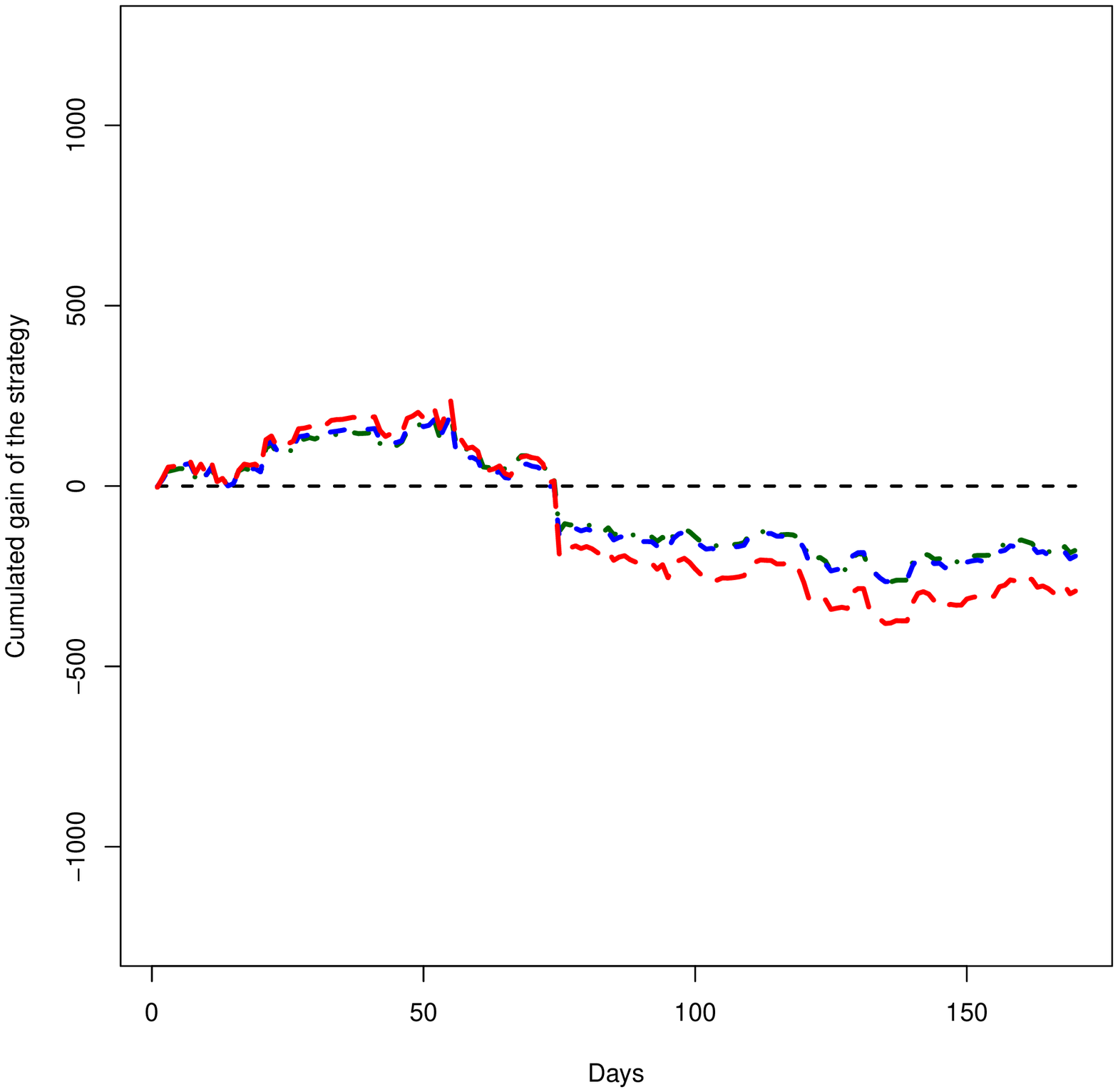}
	\label{fig:CumGain_tick1_BNPP_2012}
    }
	\caption{Cumulated gains of the strategy applied on BNP Paribas on the period February-September 2012, every day between 11.30a.m. and 1p.m. The (red) long-dashed line is the performance of the Poisson model, the (blue) dashed line is the mono-exponential Hawkes model, and the (green) dot-dashed line is the multi-exponential Hawkes model. Left: we allow the strategy to trade at the midprice. Right: we apply a posteriori a linear cost penalty of one half-tick to account for the bid-ask spread.}
    \label{fig:Eval_BNPP_2012}
\end{figure}

\begin{figure}[H]
	\centering
\subfigure[Mono-exponential Hawkes]{
    \includegraphics[height=7cm,width=7cm]{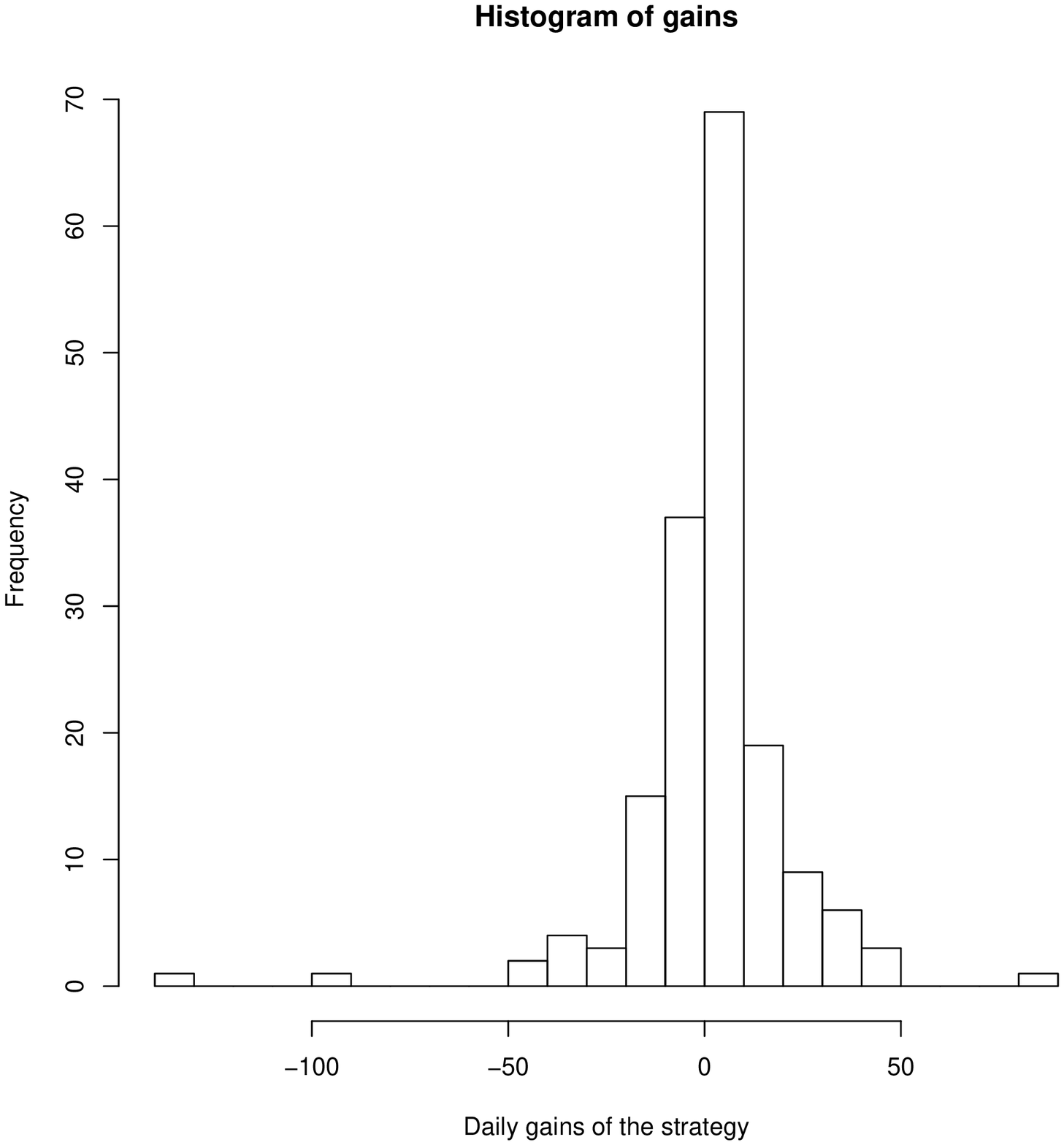}
    \label{fig:Histo_mono_BNPP_2012}
}
\subfigure[Multi-exponential Hawkes]{
    \includegraphics[height=7cm,width=7cm]{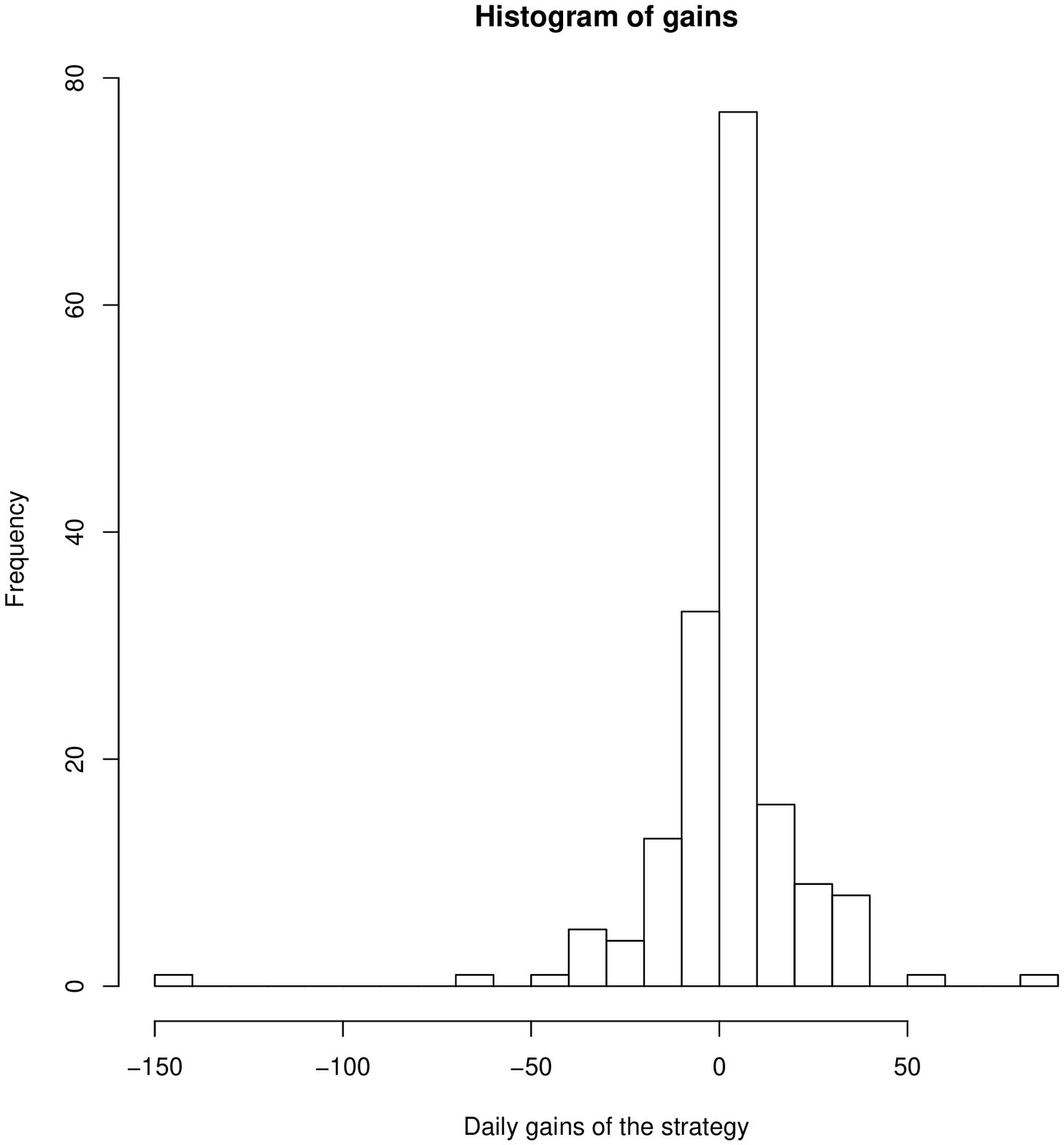}
    \label{fig:Histo_multi_BNPP_2012}
}
\caption{Histogram of the daily gains of the strategy applied on BNP Paribas on the period February-September 2012, between 11.30a.m. and 1p.m. Left: Mono-exponential Hawkes model. Right: Multi-exponential Hawkes model.}
    \label{fig:Histo_BNPP_2012}
\end{figure}



\begin{figure}[H]
	\centering
\subfigure[Trading at the midprice]{
    \includegraphics[height=7cm,width=7cm]{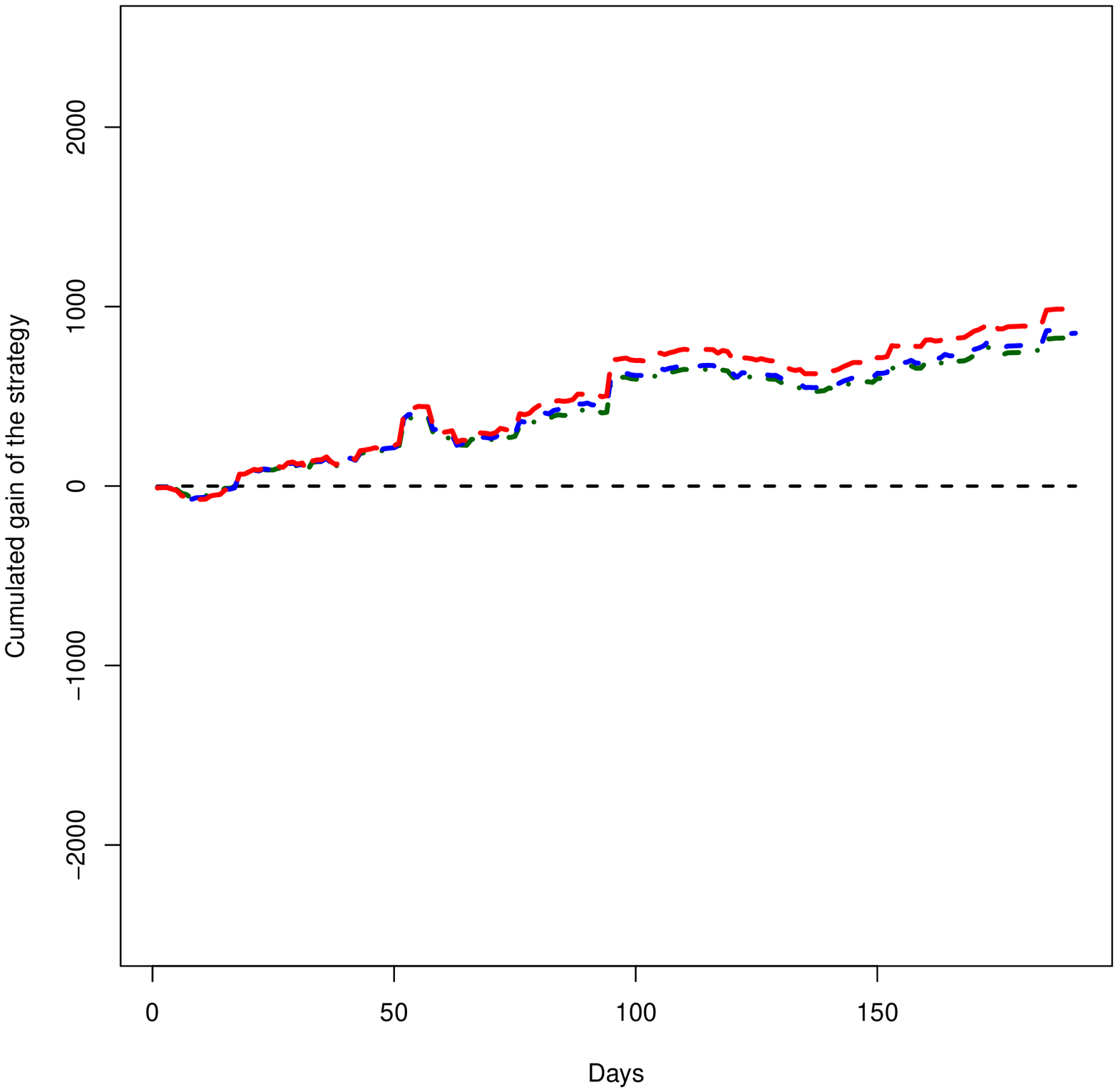}
	\label{fig:CumGain_tick0_BNPP_2013}
    }
\subfigure[One half-tick penalty]{
    \includegraphics[height=7cm,width=7cm]{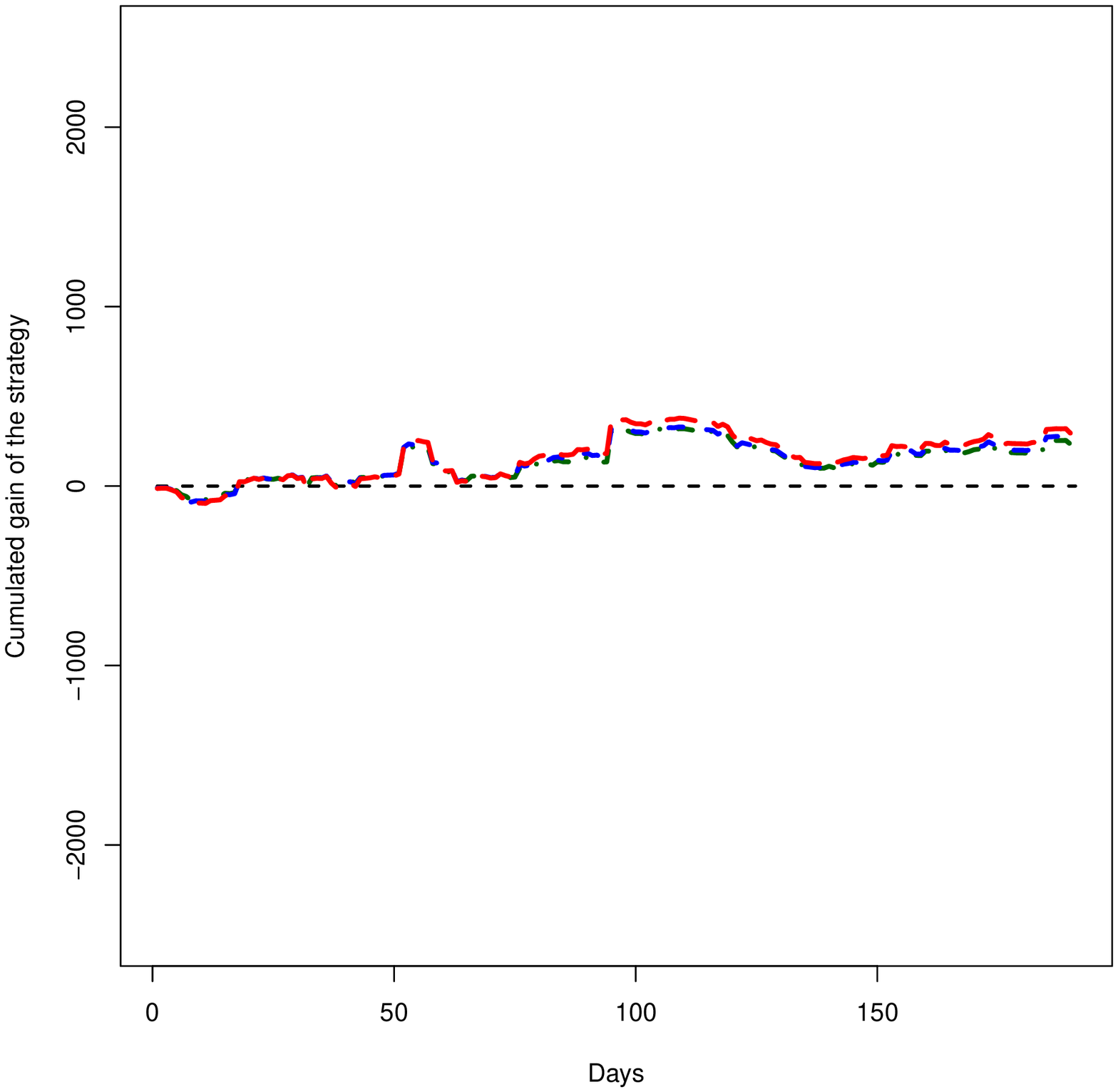}
	\label{fig:CumGain_tick1_BNPP_2013}
    }
	\caption{Cumulated gains of the strategy applied on BNP Paribas on the period January-September 2013, every day between 11.30a.m. and 1p.m. The (red) long-dashed line is the performance of the Poisson model, the (blue) dashed line is the mono-exponential Hawkes model, and the (green) dot-dashed line is the multi-exponential Hawkes model. Left: we allow the strategy to trade at the midprice. Right: we apply a posteriori a linear cost penalty of one half-tick to account for the bid-ask spread.}
    \label{fig:Eval_BNPP_2013}
\end{figure}

\begin{figure}[H]
	\centering
\subfigure[Mono-exponential Hawkes]{
    \includegraphics[height=7cm,width=7cm]{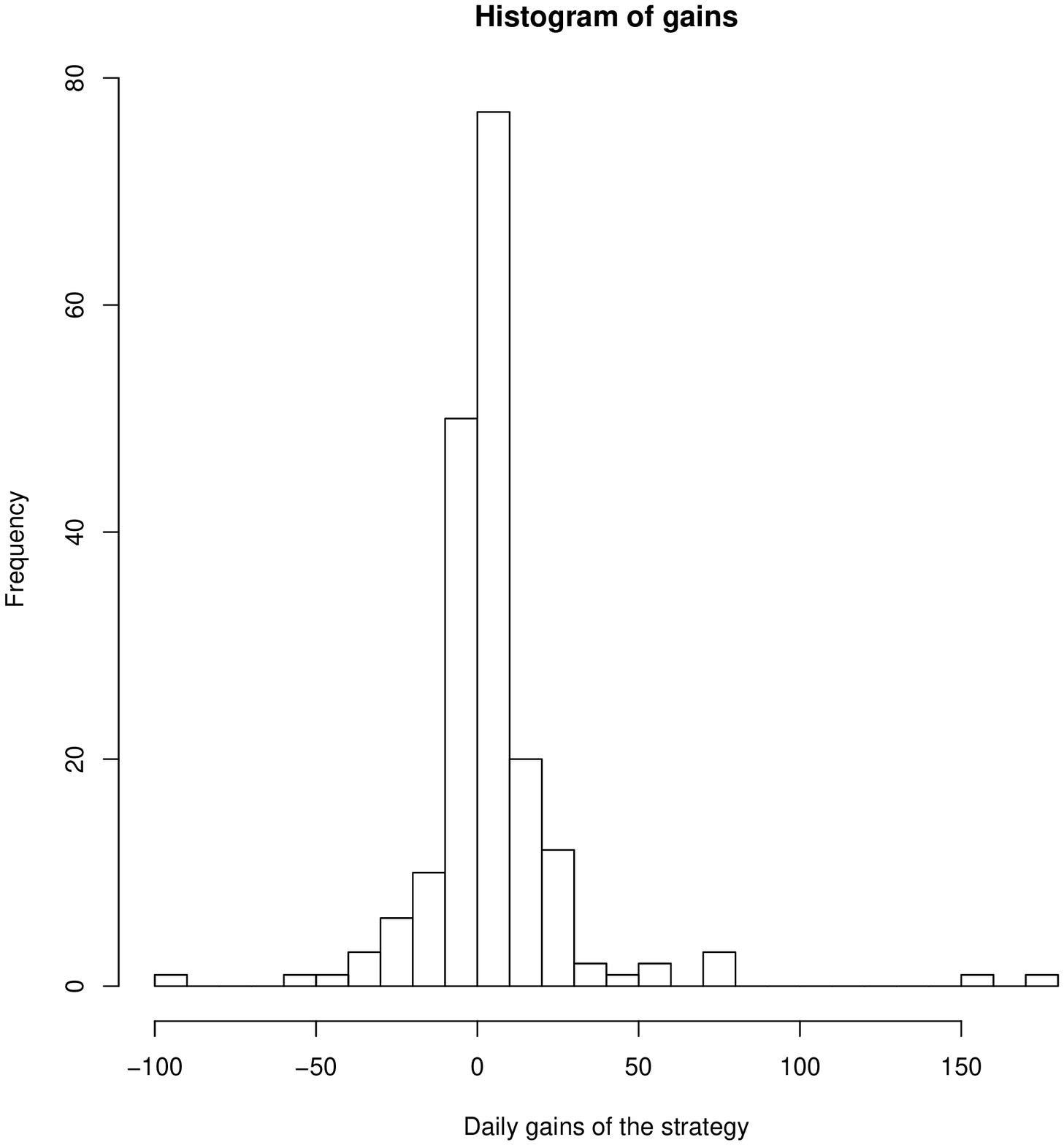}
    \label{fig:Histo_mono_BNPP_2013}
}
\subfigure[Multi-exponential Hawkes]{
    \includegraphics[height=7cm,width=7cm]{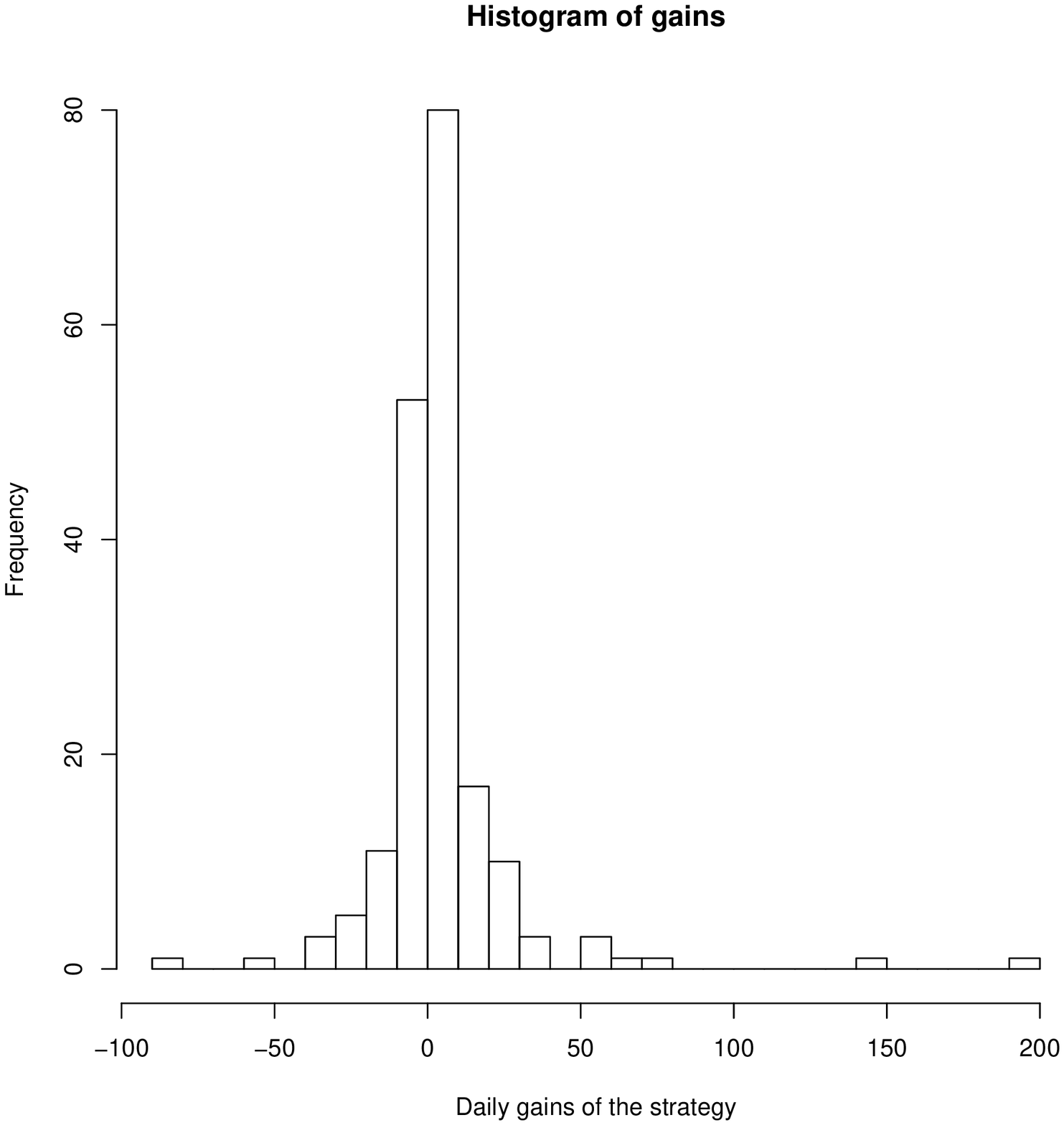}
    \label{fig:Histo_multi_BNPP_2013}
}
\caption{Histogram of the daily gains of the strategy applied on BNP Paribas on the period January-September 2013, between 11.30a.m. and 1p.m. Left: Mono-exponential Hawkes model. Right: Multi-exponential Hawkes model.}
    \label{fig:Histo_BNPP_2013}
\end{figure}


\clearpage

\subsection{Total}
\begin{table}[h]
\center
\begin{tabular}{|c||c|c||c|c||c|c|}\hline
	\text{Year}		& IS 2012& $+ \text{bid-ask}$& 	IS 2013& $+ \text{bid-ask}$&  	OS 2013& $+ \text{bid-ask}$	\\ \hline
	Sharpe (Multi)	&  	$0.067$	&$-0.763$	& 	$2.697$ 	& $1.016$	&	$2.794$	& $1.224$	\\ 
	Proba. (Multi)	& 	$57.8\%$ 	&$44.3\%$	& 	$66.0\%$	& $51.8\%$	&	$65.4\%$	& $51.8\%$	\\ 
	Skew (Multi)		&  	$-9.34$	&$-9.62$	& 	$6.38$ 	& $6.37$	&	$5.94$	& $5.97$	\\ 
	Kurtosis (Multi)	&  	$114.76$	&$117.75$	& 	$62.86$ 	& $65.85$	&	$53.84$	& $57.93$	\\ \hline \hline
	Sharpe (Mono)	&  	$0.126$	&$-0.770$	& 	$2.795$ 	& $1.191$	&	$2.760$	& $1.099$	\\ 
	Proba. (Mono)	& 	$59.4\%$	&$44.8\%$	& 	$66.0\%$	& $52.4\%$	&	$65.4\%$	& $52.4\%$	\\ 
	Skew (Mono)		&  	$-9.52$	&$-9.82$	& 	$6.01$ 	& $6.02$	&	$6.18$	&$6.18$	\\ 
	Kurto. (Mono)	&  	$118.29$	&$121.77$	& 	$55.54$ 	& $59.30$	&	$59.20$	& $62.65$	\\ \hline \hline
	Sharpe (Poisson)	&  	$0.001$	&$-0.810$	& 	$2.807$ 	& $1.259$	&	$2.790$	& $1.224$	\\ 
	Proba. (Poisson)	& 	$57.8\%$	&$43.8\%$	& 	$65.4\%$	& $50.8\%$	&	$65.4\%$	& $50.8\%$	\\ 
	Skew (Poisson)	&  	$-9.33$	&$-9.59$	& 	$5.96$ 	& $6.00$	&	$6.04$	& $6.08$	\\ 
	Kurto. (Poisson)	&  	$114.39$	&$116.97$	& 	$53.37$ 	& $57.35$	&	$54.90$	& $58.87$	\\ \hline
\end{tabular}
\caption{Results statistics of the optimal  strategy applied on Total on the period January-September 2012-2013, every day between 11.30a.m. and 1p.m. The first two columns are In-Sample results, i.e. the data used to calibrate the model is the same as the evaluation data. The third column gives Out-of-Sample results, i.e. we calibrate the model on the 2012 data to apply the strategy on the 2013 data.}
\label{table:TOTF_strat_opt_2012-2013}
\end{table}

\clearpage

\section{Conclusion}

In this paper we extend the theoretical model of~\cite{AB_DynHawkes} by allowing more general forms for the propagator and the Hawkes kernel.
Moreover, we derive the conditions that exclude Price Manipulation Strategies in the sense of Huberman and Stanzl~\cite{HS} in the case where both the propagator and the Hawkes part have a multi-exponential decay. This allows us to deduce some interesting links between the propagator and the Hawkes kernel for general completely monotone kernels. Besides, when the price propagator is mono-exponential and the Hawkes kernel is multi-exponential, we can still obtain the optimal strategy as a closed formula. This has some practical interest since the propagator seems to be well approximated by an exponential, while the Hawkes decay kernel clearly includes several characteristic time scales.

We also introduce a calibration protocol for the model, that we apply to tick-by-tick data from French stocks. The results show that the model explains a significant part of the variance of prices. The long-range propagator is a smoothly decaying curve, but the short-range part is increasing during a few seconds (which we think corresponds to the time that the bid-ask needs to close after a large trade). Concerning the estimation of the Hawkes process modeling the flow of trades, we obtain excitation parameters that significantly differ from zero, which shows in particular that the flow is not Poissonian. Also, we find that the main driver of the excitation between trades is volumes rather than price moves. The martingale conditions that prevent PMS are violated in practice, in particular the directional branching ratio is smaller than the proportion of transient price impact. Therefore, in our dataset, the price has a notable mean-reverting tendency.

A series of backtests shows that the optimal strategy used for round trips is profitable on average, therefore the model does offer a relevant prediction for midprice moves. However, a level of transaction costs compatible with the width of the bid-ask spread makes the profits close to zero. This confirms the natural idea that the absence of Price Manipulation Strategies at this frequency stems from both market impact and bid-ask costs.

We eventually draw some applications and perspectives on our study. A first straightforward application is to use the calibrated model for optimal execution, by using the block trades~\eqref{eqn:optimal_trade} on a given (possibly random) time grid~$\Theta$. Contrary to most existing models, this strategy takes the flow of trades into account. Another possible use of this model is to detect the instants when it is interesting to trade. In fact, equation~\eqref{eqn:dA2_Hawkes} gives the (theoretical) instantaneous cost of non-trading. One may decide to trade for example only if this cost is above some threshold, or optimize the trade-off between this cost and transaction costs. Such strategies could be interesting in practice, but need to be thoroughly investigated on market data. Let us now consider some possible extensions of our work. First, it would be interesting to handle a calibration of the model on an entire day instead of a two-hour window. This is certainly difficult due to intra-day variations of trading activity between the open and the close. Second, it would be nice to incorporate in our model transaction costs such as the bid-ask spread. A less ambitious goal would be at least to modify our optimal execution strategy to reduce transaction costs in a clever way, maybe by using equation~\eqref{eqn:dA2_Hawkes} as mentioned above.

\clearpage

\bibliography{ref_Hawkes_calib}
\bibliographystyle{plain}

\clearpage

\appendix

\section{Estimation of the propagator using Newton-Raphson's algorithm}\label{appendix:NR_propagator}

As explained in Section~\ref{section:estim_propag}, we resort to Newton-Raphson's algorithm to minimize the quadratic error
\begin{equation}
\mathcal{E}(\hat G) = \underset{\regwin<\theta<T}{\sum} [\hat P_\theta - P_\theta]^2
\nonumber
\end{equation}
which quantifies the distance between the observed midpoint price $P_t$ and the predicted price
\begin{equation}
\hat P_t \ = \ P_{t-\regwin}
\ + \ \underset{t-\regwin \leq \tau \leq t}{\sum} \Delta \midprice_\tau \ \hat G(t-\tau).
\nonumber
\end{equation}
Let us assume that $\pi \in \R^l$, $l \geq 1$, is a parameterization of $\hat G$, i.e. $\hat G = \hat G(\pi)$ is determined by $\pi$, and so is the error
$\mathcal{E}(\hat G) = \mathcal{E}(\pi)$.
For a starting point $\pi_0$, the principle of the algorithm is to approximate $G$ by the sequence $\hat G(\pi_n)$ such that
\begin{equation}
\forall n\in \N, \quad \pi_{n+1} \ = \ \pi_n 
\ - \ \big[\nabla^2 \mathcal{E}(\pi_n)\big]^{-1} . \nabla \mathcal{E}(\pi_n)
\nonumber
\end{equation}
where $\nabla \mathcal{E}(\pi) \in \R^l$ is the gradient of the error $\mathcal{E}$ and $\nabla^2 \mathcal{E}(\pi) \in \R^{l\times l}$ is its Hessian matrix, w.r.t. the parameter $\pi$. The convergence of the method is only guaranteed if the starting point $\pi_0$ is \enquote{good enough}, and if $\nabla^2 \mathcal{E}(\pi_n)$ is positive definite for all $n \in \N$.

To apply this method, one needs to compute the gradient $\nabla \mathcal{E}(\pi)$ and the Hessian matrix 
$\nabla^2 \mathcal{E}(\pi)$ of the error $\mathcal{E}$ for each parameterization $\pi$ of $\hat G$. One has
\begin{align}
\nabla \mathcal{E}(\pi) &= 2 \underset{\regwin<\theta<T}{\sum}
 [\hat P_\theta(\pi) - P_\theta] \times \nabla \hat P_\theta(\pi),
\nonumber \\
\nabla^2 \mathcal{E}(\pi) &= 2 \underset{\regwin<\theta<T}{\sum}
 \left\{ [\hat P_\theta(\pi) - P_\theta] \times \nabla^2 \hat P_\theta(\pi)
\ + \ \nabla \hat P_\theta(\pi) . \left(\nabla \hat P_\theta(\pi)\right)\tp
\right\}.
\nonumber
\end{align}
The problem boils down to computing $\nabla \hat P_\theta(\pi)$ and $\nabla^2 \hat P_\theta(\pi)$, which can themselves be expressed as
\begin{align}
\nabla \hat P_\theta(\pi) &=
\underset{t-\regwin \leq \tau \leq t}{\sum} \Delta \midprice_\tau \ \nabla \hat G(t-\tau),
\nonumber \\
\nabla^2 \hat P_\theta(\pi) &=
\underset{t-\regwin \leq \tau \leq t}{\sum} \Delta \midprice_\tau \ \nabla^2 \hat G(t-\tau),
\nonumber
\end{align}
where we drop the dependency of $\hat G$ in $\pi$ for clarity. Therefore, only the gradient $\nabla \hat G$ and the Hessian $\nabla^2 \hat G$ of the estimated propagator $\hat G$ need to be specifically derived for each parameterization, which is the object of the sequel.

An important particular case is when $\hat G$ is linear w.r.t. $\pi$. In that case, $\nabla^2 \hat G \equiv 0$, thus $\nabla^2 \hat P_\theta(\pi) \equiv 0$ and 
\begin{equation}
\nabla^2 \mathcal{E}(\pi) = 2 \underset{\regwin<\theta<T}{\sum}
\nabla \hat P_\theta(\pi) . \left(\nabla \hat P_\theta(\pi)\right)\tp
\nonumber
\end{equation}
is positive definite for any $\pi$. Also, in that case, $\nabla \hat G$ does not depend on the current values of the parameter $\pi$, and
\begin{equation}
\pi_1 \ = \ \pi_0
\ - \ \big[\nabla^2 \mathcal{E}(\pi_0)\big]^{-1} . \nabla \mathcal{E}(\pi_0)
\nonumber
\end{equation}
is the minimizer of the error $\mathcal{E}(\pi)$ for any $\pi_0$. Therefore, when the propagator is parameterized linearly, the starting point of the algorithm has no importance and one step is enough to find the optimum.

\subsection{Unconstrained propagator}\label{appendix:calib_UC}

We consider the unconstrained propagator
\begin{equation}
\hat{G}(t) = g_l \mathbf{1}_{[t_l,\regwin[}(t) + \overset{l-1}{\underset{i=0}{\sum}} \frac{(t_{i+1}-t) g_i + (t-t_i)g_{i+1}}{t_{i+1}-t_i} 
 \ \mathbf{1}_{[t_i,t_{i+1}[}(t),
\nonumber
\end{equation}
with $l \geq 2$, $0=t_0 < t_1 < \cdots < t_l$ fixed discretization times, $g_0 = 1$ and $\pi = (g_1,\cdots,g_l) \in [0,+\infty)^l$ the $l$-dimensional parameter to estimate. The dependence of $\hat G$ w.r.t. $\pi$ is linear, and we only need to compute the gradient:
\begin{align}
\frac{\partial \hat{G}(t)}{\partial g_i} & = \frac{t_{i+1}-t}{t_{i+1}-t_i} \mathbbm{1}_{[t_i,t_{i+1}[}(t)
+ \frac{t-t_{i-1}}{t_i-t_{i-1}} \mathbbm{1}_{[t_{i-1},t_i[}(t) 
\quad \text{ for } 1 \leq i \leq l-1,
\nonumber \\
\frac{\partial \hat{G}(t)}{\partial g_l} & = \mathbbm{1}_{[t_l,\regwin[}(t)
+ \frac{t-t_{l-1}}{t_l-t_{l-1}} \mathbbm{1}_{[t_{l-1},t_l[}(t).
\nonumber
\end{align}

\subsection{Multi-exponential curve}\label{appendix:calib_ME}

In this section we consider the multi-exponential resilience curve
\begin{equation}
\hat R(t) = \nu + \underset{i=1}{\overset{p}{\sum}} \lambda_i \exp(-\rho_i t),
\nonumber
\end{equation}
and the propagator
\begin{equation}
\hat G(t) = \left[1+(\hat R(\adjlag)-1) \ \frac{t}{\adjlag}\right] \indi{t\leq \adjlag} + \hat R(t) \indi{t>\adjlag},
\nonumber
\end{equation}
determined by $\hat R$ for $\adjlag \geq 0$ fixed \textit{a priori}. The dependence of $\hat G$ is linear w.r.t. the parameters if and only if the $\rho_i$'s are fixed.

\subsubsection{Unit Multi-exponential curve}\label{appendix:calib_ME_unit}

The \enquote{unit} multi-exponential resilience curve is the case where $\nu = 1-\sum_{i=1}^p \lambda_i$ is imposed.
This yields
\begin{equation}
\hat R(t) = 1-\overset{p}{\underset{i=1}{\sum}}\lambda_i(1-\exp(-\rho_i t)),
\nonumber
\end{equation}
and the parameter $\pi = (\lambda_1, \cdots, \lambda_p,\rho_1, \cdots, \rho_p)$ is $2p$-dimensional. One has for $i,j \in \{1,\cdots,p\}$,
\begin{align}
&\frac{\partial \hat R(t)}{\partial \lambda_i} = - \left\{ 1 - \exp(-\rho_i t) \right\},
&\frac{\partial \hat R(t)}{\partial \rho_i} = - t \ \lambda_i \exp(-\rho_i t),
\nonumber \\
&\frac{\partial^2 \hat R(t)}{\partial \rho_i^2} = t^2 \ \lambda_i \exp(-\rho_i t),
&\frac{\partial^2 \hat R(t)}{\partial \rho_i \partial \lambda_i} = -  t \ \exp(-\rho_i t),
\nonumber \\
&\frac{\partial^2 \hat R(t)}{\partial \lambda_i \partial \lambda_j} = 0,
&\frac{\partial^2 \hat R(t)}{\partial \rho_i \partial \rho_j} = 0,
\quad \frac{\partial^2 \hat R(t)}{\partial \lambda_i \partial \rho_j} = 0
\quad \text{ if } i\neq j.
\nonumber
\end{align}

\subsubsection{General Multi-exponential curve}\label{appendix:calib_ME_gene}

If we relax the condition $\nu = 1-\sum_{i=1}^p \lambda_i$ so that $\hat R(0)$ can be greater than unity, we obtain
\begin{equation}
\hat R(t) = \overline \nu + \underset{i=1}{\overset{p}{\sum}} \overline \lambda_i \exp(-\rho_i t),
\nonumber
\end{equation}
with $\overline{\nu} \geq 0$, $\overline{\lambda}_i \geq 0$. The parameter $\pi = (\overline \nu, \overline \lambda_1, \cdots, \overline \lambda_p,\rho_1, \cdots, \rho_p)$ is then $(2p+1)$-dimensional. The gradient and Hessian are given by
\begin{align}
&\frac{\partial \hat R(t)}{\partial \overline \nu} = 1, \
&\frac{\partial^2 \hat R(t)}{\partial \overline \nu^2} = 0, \
\frac{\partial^2 \hat R(t)}{\partial \overline \nu \partial \overline \lambda_i} = 0, \
\frac{\partial^2 \hat R(t)}{\partial \overline \nu \partial \rho_i} = 0,
\nonumber \\
&\frac{\partial \hat R(t)}{\partial \overline \lambda_i} = \exp(-\rho_i t),
&\frac{\partial \hat R(t)}{\partial \rho_i} = - t \ \overline \lambda_i \exp(-\rho_i t),
\nonumber \\
&\frac{\partial^2 \hat R(t)}{\partial \rho_i^2} = t^2 \ \overline \lambda_i \exp(-\rho_i t),
&\frac{\partial^2 \hat R(t)}{\partial \rho_i \partial \overline \lambda_i} = -  t \ \exp(-\rho_i t),
\nonumber \\
&\frac{\partial^2 \hat R(t)}{\partial \overline \lambda_i \partial \overline \lambda_j} = 0,
&\frac{\partial^2 \hat R(t)}{\partial \rho_i \partial \rho_j} = 0,
\quad \frac{\partial^2 \hat R(t)}{\partial \overline \lambda_i \partial \rho_j} = 0
\quad \text{ if } i\neq j.
\nonumber
\end{align}

\section{Maximum Likelihood Estimation for the Hawkes intensity}\label{section:calib_intensity}

The estimation of the Hawkes parameters, as presented in Section~\ref{section:calib_method_int}, resorts to Maximum Likelihood Estimation. The use of the MLE for Hawkes processes is well known, see for instance Ozaki~\cite{Ozaki}, and has been recently considered by Da Fonseca and Zaatour~\cite{DFZ} in a similar financial framework. In this section, we give the formula of the log-likelihood for Hawkes processes, and we derive its gradient and Hessian matrix which are necessary to use Newton-Raphson's algorithm.

We define the jump processes $J^+_t = \sum_{0<\tau<t} \indi{\Delta N_t>0}$ and $J^-_t =\sum_{0<\tau<t} \indi{\Delta N_t<0}$, i.e. $J^+$ (resp. $J^-$) makes a unit jump when $N^+$ (resp. $N^-$) jumps.
Say that we observe the realization of the process on the time interval $[0,T]$, and that we want to maximize its log-likelihood on $[t_0,T]$, with $t_0 \in [0,T)$.
Conditionally to  $(\kappa^\pm_t)_{t \in [0,T]}$, the log-likelihood of a trajectory $(J^\pm_t)_{t \in [t_0,T]}$ on the time interval $[t_0,T]$  is
(see~\cite{DVJ}, Section III Proposition 7.2)
\begin{equation}\label{likelihood_Hawkes2}
\ln \mathcal{L}(J^\pm|\kappa^\pm) = \int_{t_0}^T \ln(\kappa^\pm_{t^-}) \ \textup{d}J^\pm_t
 \ - \ \int_{t_0}^T \kappa^\pm_t \ \textup{d}t
\ + \ T.
\end{equation}
Moreover, conditionally to  $(\kappa^+_t, \kappa^-_t)_{t \in [0,T]}$, the global log-likelihood of the model is
\begin{equation}\label{likelihood_Hawkes}
\ln \mathcal{L}(J|\kappa) = \ln \mathcal{L}(J^+|\kappa^+) + \ln \mathcal{L}(J^-|\kappa^-).
\end{equation}
We now compute $\ln \mathcal{L}(J^+|\kappa^+)$.
Since we do not know the history of the process before time $t=0$, it is impossible to compute $\kappa^+_t$ exactly using equation~\eqref{kappa_gen} since it requires to know all the jumps. However, a reasonable approximation is to choose $t_0 \in (0,T)$ such that
\begin{equation}
\forall u \geq t_0, \ K(u) \ll 1,
\nonumber
\end{equation}
which yields
\begin{equation}
\kappa^+_t
\approx
\kappa_\infty
+ \sum_{0<\tau<t}K(t-\tau)\left[\indi{\Delta N_t>0} \phis(\Delta N_t /m_1) +\indi{\Delta N_t<0} \phic(-\Delta N_t /m_1) \right]
\label{kappa_approx_t0}
\end{equation}
for $t \in [t_0,T]$. Let us assume in the sequel of this section that $t_0$ is such that~\eqref{kappa_approx_t0} can be considered as an equality.

We define $\tau_0=0$ and $\tau_i, \ i \geq 1$ the ordered combined jump times of $N^+$ and $N^-$ on $[0,T]$, and $\chi(t) = \max\{i\geq0, \tau_i\leq t\}$ for $t \in [0,T]$. We also define for $i\geq1$
\begin{equation}
\theta^+_i = \phis(\Delta N^+_{\tau_i}/m_1) k^+_i + \phic(\Delta N^-_{\tau_i}/m_1) k^-_i,
\nonumber
\end{equation}
where $k^+_i = 1$ if $\tau_i$ is a jump time of $N^+$, $k^+_i = 0$ otherwise, and $k^-_i$ is defined similarly with $N^-$. One has for $t \in [t_0,T]$
\begin{equation}
\kappa^+_t = \kappa_\infty
+ \sum_{j=1}^{\chi(t)} \theta^+_j K(t-\tau_j).
\nonumber
\end{equation}
%
Distinguishing the jumps before and after~$t_0$, we get
\begin{equation}
\int_{t_0}^T \kappa^+_t \ \textup{d}t
\ = \
\kappa_\infty (T-t_0)
\ + \ \sum_{j=1}^{\chi(t_0)} \theta^+_j \left[ \underline{K}(T-\tau_j) - \underline{K}(t_0-\tau_j) \right]
\ + \ \sum_{j=\chi(t_0)+1}^{\chi(T)} \theta^+_j \left[ \underline{K}(T-\tau_j) - \underline{K}(0) \right],
\label{integ_kpp}
\end{equation}
where $\underline{K}$ is the antiderivative of $K$.  Let us turn to the other term of the log-likelihood.
We set $A^+_1 = 0$ and for $i \geq 2$
\begin{equation}
A^+_i = \overset{i-1}{\underset{j=1}{\sum}} \theta^+_j K(\tau_i-\tau_j),
\nonumber
\end{equation}
and we have
\begin{equation}
\int_{t_0}^T \ln(\kappa^+_{t^-}) \ \textup{d}J^+_t
\ = \ \overset{\chi(T)}{\underset{i=\chi(t_0)+1}{\sum}} 
k_i^+ \ln\big(\kappa_\infty + A^+_i \big).
\label{likeli_jump_part}
\end{equation}

We have the explicit expression of the log-likelihood $\ln \mathcal{L}(J^+|\kappa^+)$ from~\eqref{likelihood_Hawkes}, \eqref{likelihood_Hawkes2}, \eqref{integ_kpp} and~\eqref{likeli_jump_part}. Thus, it can be evaluated on a discrete set of points, for instance to estimate one or several parameters with a grid search. Now, to maximize the likelihood using Newton-Raphson's algorithm, one must also determine the gradient and Hessian matrix of $\ln \mathcal{L}(J^+|\kappa^+)$.

For given parameterizations of $\phis, \phic$ and $K$, we note $\pi$ an arbitrary parameter, and we have
\begin{align}
\frac{\partial \ln \mathcal{L}(J^+|\kappa^+)}{\partial \kappa_\infty}
&= \ \overset{\chi(T)}{\underset{i=\chi(t_0)+1}{\sum}}\frac{k_i^+}{\kappa_\infty + A^+_i}
\ - \ (T-t_0),
\nonumber \\
\frac{\partial \ln \mathcal{L}(J^+|\kappa^+)}{\partial \pi}
&= \ \overset{\chi(T)}{\underset{i=\chi(t_0)+1}{\sum}} \frac{k_i^+ \partial_\pi A_i^+}{\kappa_\infty + A^+_i}
\nonumber \\
& \qquad - \  \sum_{j=1}^{\chi(t_0)} \partial_\pi \{ \theta^+_j \left[ \underline{K}(T-\tau_j) - \underline{K}(t_0-\tau_j) \right]\}
\ - \  \sum_{j=\chi(t_0)+1}^{\chi(T)} \partial_\pi \{ \theta^+_j \left[ \underline{K}(T-\tau_j) - \underline{K}(0) \right]\},
\nonumber
\end{align}
which yields the gradient of the log-likelihood. For the Hessian matrix, let us note $\pi, \pi'$ two parameters (distinct or not) of $\phis, \phic$ or $K$. We have
\begin{align}
\frac{\partial^2 \ln \mathcal{L}(J^+|\kappa^+)}{\partial \kappa_\infty^2}
&= \ - \ \overset{\chi(T)}{\underset{i=\chi(t_0)+1}{\sum}} \frac{k_i^+}{[\kappa_\infty + A^+_i]^2},
\qquad
\frac{\partial^2 \ln \mathcal{L}(J^+|\kappa^+)}{\partial \kappa_\infty \partial \pi}
\ = \ - \ \overset{\chi(T)}{\underset{i=\chi(t_0)+1}{\sum}} \frac{k_i^+ \partial_\pi A_i^+}{[\kappa_\infty + A^+_i]^2},
\nonumber \\
\frac{\partial^2 \ln \mathcal{L}(J^+|\kappa^+)}{\partial \pi \partial \pi'}
&= \overset{\chi(T)}{\underset{i=\chi(t_0)+1}{\sum}} k_i^+
\left( \frac{\partial^2_{\pi \pi'} A_i^+}{\kappa_\infty + A^+_i}
- \frac{\partial_\pi A_i^+ \ \partial_{\pi'} A_i^+}{[\kappa_\infty + A^+_i]^2}
\right)
\nonumber \\
& \qquad - \  \sum_{j=1}^{\chi(t_0)} \partial^2_{\pi \pi'} \{ \theta^+_j \left[ \underline{K}(T-\tau_j) - \underline{K}(t_0-\tau_j) \right]\}
\ - \  \sum_{j=\chi(t_0)+1}^{\chi(T)} \partial^2_{\pi \pi'} \{ \theta^+_j \left[ \underline{K}(T-\tau_j) - \underline{K}(0) \right]\}.
\nonumber
\end{align}
As soon as $\underline{K}$ is known and $\phis, \phic, K, \underline{K}$ are twice differentiable w.r.t. the parameterization, it is straightforward to deduce the analytical expressions of the gradient and Hessian matrix of the log-likelihood from the preceding equations.

\section{Optimal execution with a multi-exponential Hawkes kernel}\label{appendix:opt_strat_multi}

\subsection{Proof of Theorem~\ref{thm_MIHMext}}\label{appendix:proof_MIHMext}
First, let us remark that $\E\left[\int_0^T W_tdX_t -W_TX_T \right]=0$, and we can assume without loss of generality that $\sigma=0$. 
We decompose the price process as follows. We introduce $dS^N_t=\frac\nu q   \dd N_t$, $\dd D_t^{N,i} = -\rho_i  D_t^{N,i}  \dd t  +  \frac{\lambda_i} q  \dd N_t$,  $dS^X_t= \frac\nu q    \dd X_t$ and $\dd D_t^{X,i} = -\rho_i  D_t^{X,i}  \dd t  +  \frac{\lambda_i} q   \dd X_t$, with $S^N_0=S_0$, $D_0^{N,i}=D^i_0$, $S^X_0=D_0^{X,i}=0$. We have
$$P_t=P^X_t+P^N_t, \text{ with } P^N_t=S^N_t+\sum_{i=1}^p D^{N,i}_t, \ P^X_t=S^X_t+\sum_{i=1}^p D^{X,i}_t.$$
Then, we can write the cost~\eqref{costX} as 
$$C(X)=\int_{[0,T)} P^N_u \textup{d}X_u -  P^N_T X_T+ \bar{C}(X),$$
where $\bar{C}(X)=\int_{[0,T)} P^X_u \textup{d}X_u +  \frac1{2q} \underset{\tau \in \mathcal{D}_X \cap [0,T)}{\sum} (\Delta X_\tau)^2   -  P^X_T X_T  +  \frac1{2q} \ X_T^2$. We note that $\bar{C}(X)$ is a deterministic function of~$X$ and is precisely the cost function considered in~\cite{AS_SICON}. Besides, it satisfies $\bar{C}(cX)=c^2\bar{C}(X)$ for $c\in \R$. By the same argument as in the proof of Theorem~2.1 in~\cite{AB_DynHawkes}, we get that there is no PMS if, and only if $P_t$ is a $(\cF_t)$-martingale when $X_t=0$ for any~$t$.

We now consider that  $X \equiv 0$ and write the martingale condition for~$P$ under the Hawkes model~\eqref{def_kpi}, \eqref{def_kmi} and~\eqref{price_kappa_mkv}. We have 
\begin{equation}
\dd P_t = \dd S_t + \dd D_t +\sigma \ \dd W_t
\ = \
\frac1q \ \dd N_t - \sum_{i=1}^p \rho_i D_t^i \ \dd t + \sigma \ \dd W_t
\ = \
\frac1q \ \dd \tilde{N}_t  + \sigma \ \dd W_t +  \ \dd t \sum_{i=1}^p A_t^i,
\nonumber
\end{equation}
where
\begin{equation}
A_t^i = \frac{m_1}q \delta_t^i - \rho_i D_t^i, \ \delta^i_t={\kappa^+_t}^{(i)} - {\kappa^-_t}^{(i)},
\nonumber
\end{equation}
and $\tilde{N}_t = N_t - m_1 \int_0^t \delta_u \dd u$ is a martingale. Then, $(P_t)$ is a martingale if and only if almost surely and $\dd t$-almost everywhere, $\sum_{i=1}^p A_t^i = 0 $. We have
\begin{equation}
\dd A_t^i = - \rho_i A_t^i \ \dd t + \frac{m_1}q w_i \ \dd I_t - \frac{\lambda_i \rho_i} q \ \dd N_t,
\nonumber
\end{equation}
with 
\begin{equation}\label{def_I}I_t = \int_0^t \left[(\phis-\phic)(\textup{d}N^+_u/m_1) - (\phis-\phic)(\textup{d}N^-_u/m_1)\right].
\end{equation}
In particular,  $\dd A_t^i = - \rho_i A_t^i \dd t$ between two consecutive jumps $\tau$ and $\tau'$ of~$N$. Therefore, we have $\sum_{i=1}^p A_t^i=\sum_{i=1}^p A_\tau^ie^{-\rho_i (t-\tau)}=0$ for $t\in[\tau,\tau')$ and therefore 
$A_\tau^i = 0$ for all $i$ (the equality for $t=\tau+k(\tau'-\tau)/p, k\in\{0,\dots,p-1\}$ gives a Vandermonde system). Thus, we necessarily have $A_t^i=0$ for $t\ge 0$ for any $i$. Then, 
 $\dd A_t^i=0$ gives
\begin{equation}
\frac{m_1}q w_i \ [(\phis-\phic)(\dd N^+_t/m_1) - (\phis-\phic)(\dd N^-_t/m_1)] \ = \ \frac{\lambda_i \rho_i} q \ [\dd N^+_t - \dd N^-_t]
\nonumber
\end{equation}
for all $t \geq 0$ and all $i \in \{1,\cdots,p\}$. Thus, $\phis-\phic$ must be linear on the support of the law $\mu$ of the jumps of $N^\pm$, and besides, we must have
$\forall i, \quad (\ios-\ioc) w_i = \lambda_i \rho_i$. This precisely gives~\eqref{cond_MIHM_ext}. Conversely, it is clear that~\eqref{cond_MIHM_ext} ensures that $P$ is a martingale by the same calculations. 

\subsection{Proof of Theorem~\ref{thm_opt_strat}}\label{appendix:proof_opt_strat}

As in Section~\eqref{appendix:proof_MIHMext}, we assume without loss of generality  that $\sigma=0$. We first introduce some notations to present the main results on the optimal execution. We define $\delta^i_t={\kappa^+_t}^{(i)} - {\kappa^-_t}^{(i)}$ and $\Sigma^i_t={\kappa^+_t}^{(i)} + {\kappa^-_t}^{(i)}$. From~\eqref{def_kpi}, \eqref{def_kmi} and~\eqref{expo_mixture_forK}, we have

\begin{equation}
\textup{d}\delta_t^{i}  \ = \ - \beta_i \ \delta_t^{i} \ \textup{d}t
\ + \ w_i \ \text{d}I_t
\quad , \quad
\textup{d}\Sigma_t^{i}  \ = \ - \beta_i \ (\Sigma_t^{i} - 2 \kappa_{\infty}/p) \ \textup{d}t
\ + \ w_i \ \text{d}\overline{I}_t,
\label{dyn_deltasigma}
\end{equation}
for all $i \in \{1,\cdots,p\}$, where $\overline{I}_t = \int_0^t \left[(\phis+\phic)(\textup{d}N^+_u/m_1) + (\phis+\phic)(\textup{d}N^-_u/m_1)\right]$ and $I_t$ is defined by~\eqref{def_I}.
%
%

We now proceed exactly as in~\cite{AB_DynHawkes}, Appendix~B, and only give here the main lines and use similar notations. We assume without loss of generality $q=1$. For $t\in[0,T]$, $x,d,z\in \R$ and $\delta,\Sigma \in \R^p$, we denote by $\mathcal{C}(t,x,d,z,\delta,\Sigma)$ the minimal cost to liquidate $X_t=x$ over the time interval $[t,T]$ when $D_t=d$, $S_t=z$, $\delta_t=\delta$ and $\Sigma_t=\Sigma$. We look for a function that has the following form
\begin{eqnarray}
\mathcal{C}(t,x,d,z,\delta,\Sigma) & = & a(T-t) (d-(1-\epsibis)x)^2 
\ + \ \frac12 (z - \epsibis x)^2 \ + \ (d-(1-\epsibis)x)(z - \epsibis x) \ - \ \frac{(d+z)^2}2 
\nonumber \\
& & \quad + \  (d-(1-\epsibis)x) \ \sum_{i=1}^p b_i(T-t) \ \delta_i
\ + \ \sum_{i=1}^p \sum_{i=1}^p c_{i,j}(T-t) \ \delta_i \delta_j
\nonumber \\
& & \quad + \ \sum_{i=1}^p e_i(T-t) \ \Sigma_i \ + \ g(T-t),
\label{eqn:cost_form_Hawkes}
\end{eqnarray}
with $a,b_i,c_{i,j},e_i,g: \mathbb{R_+} \rightarrow \mathbb{R}$ continuously differentiable functions. We have the limit condition $\mathcal{C}(T,x,d,z,\delta,\Sigma) \ = \ -(d+z)x + x^2/2 = \frac12(d+z-x)^2 - (d+z)^2/2$, which is the cost of a trade of signed volume $-x$. We thus have 
$$a(0) = \frac12, \ b_i(0) = c_{i,j}(0) = e(0) = g(0) = 0.$$
For an arbitrary strategy~$X$, we define $\Pi_t(X)=\int_0^t P_u dX_u +\frac{1}{2} \sum_{0\le \tau < t} (\Delta X_\tau)^2+\mathcal{C}(t,X_t,D_t,S_t\delta_t,\Sigma_t)$. This is the cost of the strategy which is equal to $X$ up to time $t$ and is then optimal. Therefore, $(\Pi_t(X),t\in[0,T])$ has to be a submartingale and is a martingale if, and only if, $X$ is optimal. We define
\begin{align}
\textup{d}A^X_t  = & \left[ Z(t,X_t,D_t,S_t,\delta_t,\Sigma_t)     +   \partial_t \mathcal{C}  
    - \rho D_t \partial_d \mathcal{C}       -  \sum_{i=1}^p \beta_i  \delta_t^{i}  \partial_{\delta_i} \mathcal{C}  
 -  \sum_{i=1}^p \beta_i  (\Sigma_t^{i}-2\kappa_\infty/p)  \partial_{\Sigma_i} \mathcal{C}  \right] \textup{d}t, \label{eqn:dA1_Hawkes}
\end{align}
where the derivatives of~$\mathcal{C}$ are taken in $(t,X_t,D_t,S_t\delta_t,\Sigma_t)$ and  
$
Z(t,x,d,z,\delta,\Sigma) := 
$
\begin{align}
\left(\frac12 \sum_{i=1}^p [\Sigma_i + \delta_i]\right) & 
\mathbb{E} \big[ \mathcal{C}(t,x,d + (1-\nu) V, z+\nu V,\delta+\varphi_{s-c}(\frac{V}{m_1})w,\Sigma+\varphi_{s+c}(\frac{V}{m_1})w) 
-  \mathcal{C}(t,x,d,z,\delta,\Sigma) \big]
\nonumber \\
+ \ \left( \frac12 \sum_{i=1}^p [\Sigma_i - \delta_i] \right) &  
\mathbb{E} \big[ \mathcal{C}(t,x,d - (1-\nu)V, z-\nu V,\delta-\varphi_{s-c}(\frac{V}{m_1})w,\Sigma+\varphi_{s+c}(\frac{V}{m_1})w) 
-  \mathcal{C}(t,x,d,z,\delta,\Sigma) \big],
\nonumber
\end{align}
with $V \sim \mu$, $\varphi_{s-c}=\phis-\phic$ and $\varphi_{s+c}=\phis+\phic$. The process $A^X_t$ is continuous and such that $\Pi_t(X)-A_t^X$ is a martingale. Given the quadratic nature of the problem, we search a process $A^X$ of the form
\begin{equation}
\textup{d}A^X_t \ = \ \frac\rho{1-\epsibis} \textup{d}t \times
 \Big[
j(T-t) (D_t-(1-\epsibis)X_t) \ - \ D_t \ + \ \sum_{i=1}^p k_i(T-t) \ \delta_t^{i}
\Big]^2.
\label{eqn:dA2_Hawkes}
\end{equation}
We now introduce the variables $y=d - (1-\epsibis)x$ and $\xi = z - \epsibis x$ and work with $(y,d,\xi,\delta,\Sigma)$ instead of $(x,d,z,\delta,\Sigma)$. From~\eqref{eqn:cost_form_Hawkes} and the definition of~$Z$, we have
\begin{align}
& \partial_t \mathcal{C}(t,x,d,z,\delta,\Sigma)  =  - \dot a \ y^2
\ - \ y \sum \dot b_i \ \delta_i \ - \ \sum  \sum \dot c_{i,j} \ \delta_i \delta_j  \ - \ \sum \dot e_i \ \Sigma_i \ - \ \dot g
\nonumber , \\
& - \rho d \ \partial_d \mathcal{C}(t,x,d,z,\delta,\Sigma)  =  - \left(2 \rho a + \frac{\rho \epsibis}{1-\epsibis}\right) \ d y
\ + \ \frac\rho{1-\epsibis} \ d^2 \ - \ \rho d  \sum b_i \delta_i
\nonumber ,\\
& - \beta_i \delta_i \ \partial_{\delta_i} \mathcal{C}(t,x,d,z,\delta,\Sigma) 
 =  - \beta_i b_i \ \delta_i y \ - \ \beta_i \delta_i \left[ 2 c_{i,i} \delta_i + \sum_{j\neq i}  c_{i,j} \delta_j \right]
\nonumber ,\\
& - \beta_i (\Sigma_i - 2 \kappa_\infty/p) \ \partial_{\Sigma_i} \mathcal{C}(t,x,d,z,\delta,\Sigma)  = 
- \beta_i e_i \ \Sigma_i \ + \ 2 \beta_i \kappa_\infty e_i /p,
\nonumber \\
& Z(t,x,d,z,\delta,\Sigma)  =  \left( m_1 \times \left[2 (1-\nu) a + \nu +\frac\epsibis{1-\epsibis}\right] 
\ + \ \sum_{k=1}^p \alpha_k b_k \right) \ y \sum_{i=1}^p \delta_i 
\ - \ \frac{m_1}{1-\epsibis} \ d \sum_{i=1}^p \delta_i 
\nonumber \\
& \hspace{3cm} + \sum_{i=1}^p \sum_{j=1}^p \left[ (1-\nu) m_1 b_i \ + \ 2 \sum_{k=1}^p c_{i,k} \alpha_k \right] \ \delta_i \delta_j
\nonumber \\
& \ + \ \sum_{i=1}^p \Bigg( m_2 \times \left[ (1-\nu)^2 a + \nu(1-\nu/2) - \frac12 \right] + (1-\nu) \sum_{k=1}^p \tilde{\alpha}_k b_k
+  \hat{\alpha} \sum_{k=1}^p \sum_{l=1}^p c_{k,l} w_kw_l 
+ \sum_{k=1}^p (\alpha_k +2 w_k \ioc ) e_k \Bigg) \ \Sigma_i
\nonumber ,
\end{align}
with $\tilde{\alpha}=\E[V\times(\phis-\phic)(V/m_1)]$, $ \hat{\alpha}=\E[(\phis-\phic)^2(V/m_1)]$. We now  identify each term of equations (\ref{eqn:dA1_Hawkes}) and (\ref{eqn:dA2_Hawkes}).

\textbf{(Eq. $dy$):} $\quad
 - \left(2 \rho a + \frac{\rho \epsibis}{1-\epsibis}\right) = - \frac{2\rho}{1-\epsibis} j$, \hspace{1cm}  \textbf{(Eq. $y^2$):} $
 - \dot a \ = \ \frac\rho{1-\epsibis}j^2$.

These two equations are the same as in~\cite{AB_DynHawkes} and  give
\begin{equation}\label{def_j}
j(u) = \frac1{2+\rho u}
\text{ and  } a(u) = \frac1{1-\epsibis} \left( \frac1{2+\rho u} - \frac\epsibis2 \right).
\end{equation}

\textbf{(Eq. $\delta_i y$):} $\quad
 - \ \dot b_i \ - \ \beta_i b_i \ + \ \sum_{k=1}^p \alpha_k b_k 
\ + \ m_1 \times \left[2 (1-\nu) a + \nu +\frac\epsibis{1-\epsibis}\right]
\ = \ \frac{2 \rho}{1-\epsibis} j k_i
$.

\textbf{(Eq. $\delta_i d$):} $\quad
- \ \rho b_i \ - \ \frac{m_1}{1-\epsibis} \ = \ -\frac{2\rho}{1-\epsibis} k_i
$,

which yields
$
k_i(u) \ = \ \frac{1-\epsibis}2 \ b_i(u) \ + \ \frac{m_1}{2\rho}
$.
Plugging this in (Eq. $\delta_i y$), we have
$
\dot b_i = -\beta_i b_i + \sum_{k=1}^p \alpha_k b_k
- \frac{2\rho}{1-\epsibis} j \left( \frac{1-\epsibis}2 b_i + \frac{m_1}{2\rho}\right)
+ m_1 \left[2 (1-\nu) a + \nu +\frac\epsibis{1-\epsibis}\right]
$, and since $j/(1-\epsibis) \ = \ a + \epsibis/[2(1-\epsibis)]$,
we have
$
\dot b_i(u) \ = \
-\beta_i b_i(u) + \sum_{k=1}^p \alpha_k b_k(u) - \frac\rho{2+\rho u} b_i(u)
\ + \ \frac{m_1}{1-\epsibis} \times \frac{1+\nu \rho u}{2+\rho u}
$. We rewrite it as

\begin{eqnarray}
\dot b(u) & = &
\left[ - H - \frac\rho{2+\rho u} I_p \right] b(u)
\ + \ \frac{m_1}{1-\epsibis} \times \frac{1+\nu \rho u}{2+\rho u} 
\times (1,\cdots,1)\tp
\label{eqn:b_general}, \\
\nonumber
\end{eqnarray}
where $I_p \in \R^{p\times p}$ is the identity matrix and $H \in \R^{p\times p}$ is given by~\eqref{def_H}. To solve equation (\ref{eqn:b_general}), we search a solution of the form $
%
b(u) =   \frac1{2+\rho u} \times [\exp(-u H) \ . \ \tilde{b}(u)]
%
$
for $u \geq 0$.
This yields
\begin{equation}
\frac1{2+\rho u} \times [\exp(-u H) \ . \ \dot{\tilde{b}}(u)]  \ = \ 
\frac{m_1}{1-\epsibis} \times \frac{1+\nu \rho u}{2+\rho u} \times (1,\cdots,1)\tp,
\nonumber
\end{equation}
thus
\begin{equation}
\dot{\tilde{b}}(u)  \ = \ 
\frac{m_1}{1-\epsibis} \times (1+\nu \rho u) \times [\exp(u H) \ . \ (1,\cdots,1)\tp].
\nonumber
\end{equation}
From the definition~\eqref{eqn:def_omega_zeta}, we have $
\exp(-u H) . \left[\int_0^u (1+\nu \rho s) \times \exp(s H)  \text{d}s \right] \ = \ u \zeta(u H) + \nu \rho u^2 \omega(u H)$ for $u\ge 0$. Since $\tilde{b}(0) = 2 b(0) = 0$, we obtain
\begin{equation}
b(u) \ = \ \frac{m_1 u}{1-\epsibis} \times \frac1{2+\rho u} \times [\{\zeta(u H)+ \nu \rho u \ \omega(u H)\} \ . \ (1,\cdots,1)\tp].
\label{eqn:b_simplified_aneqb}
\end{equation}
%
%
Equation (Eq: $\delta_id$) then gives the vector function $k(u)$
\begin{equation}
k(u) \ = \ \frac{m_1}{2\rho} \times \left\{ I_p
+ \frac{\rho u}{2+\rho u} \times [\zeta(u H)+ \nu \rho u \ \omega(u H)]
\right\} \ . \ (1,\cdots,1)\tp.
\label{eqn:k_aneqb}
\end{equation}
Thus, the functions $j$ and $k$ involved in~\eqref{eqn:dA2_Hawkes} are explicit, which guarantees that the optimal strategy is obtained as a closed formula. 

The remaining functions $c_{i,j}$, $e_i$ and $g$ do not play any role to determine the optimal strategy. By identifying the terms in $\delta_i\delta_j$, $\Sigma_i$ and the constant term, we check that they solve a system of linear ODEs. They are thus uniquely determined and well-defined on $\R_+$, and the cost function~$\mathcal{C}$ is well-defined. Thes ODEs are also important to run the verification argument, i.e. to check that $\mathcal{C}$ is indeed the optimal cost function and that the strategy $X^*$ described below is the optimal one.

We now determine the strategy $X^*$ such that $\Pi(X^*)$ is a martingale, or equivalently such that $A^{X^*}$ is constant. Equations (\ref{eqn:dA2_Hawkes}) and (\ref{def_j}) yield
\begin{eqnarray}
\textup{d}A^X_t & = & \frac\rho{1-\epsibis} \textup{d}t \times
 \left[
\frac{D_t-(1-\epsibis)X_t}{2+\rho(T-t)} \ - \ D_t \ + \ \sum_{i=1}^p k_i(T-t) \ \delta_t^{i}
\right]^2
\nonumber \\
& = & \frac{\rho/(1-\epsibis)}{[2+\rho(T-t)]^2} \ \textup{d}t \times
 \bigg[
(1-\epsibis) X_t \ + \ [1+\rho(T-t)] \ D_t
\ - \ [2+\rho(T-t)] \  \sum_{i=1}^p k_i(T-t) \ \delta_t^{i}
\bigg]^2.
\nonumber
\end{eqnarray}
Thus, $A^{X^*}$ is constant on $(0,T)$ if, and only if
\begin{equation}
\text{a.s.} \ , \ \textup{d}t \ \text{-a.e. on} \ (0,T) \ , \quad 
(1-\epsibis) X^*_t = - \ [1+\rho(T-t)] \ D_t
\ + \ [2+\rho(T-t)] \  \sum_{i=1}^p k_i(T-t) \ \delta_t^{i}.
\label{eqn:martingale_cond_Hawkes}
\end{equation}
 This equation characterizes the optimal strategy.
In particular, we obtain its initial jump $\Delta X_0^*$ at time $t=0$
\begin{equation}\label{initial_trade}
(1-\epsibis) \Delta X_0^* \ = \ 
- \frac{[1+\rho T] q D_0
+ x_0}
{2+\rho T}
\ + \ \frac{m_1}{2\rho} \times \left[(1,\cdots,1) \ . \
 \left\{ I_p + \frac{\rho T}{2+\rho T} \times [\zeta(T H)+ \nu \rho T \ \omega(T H)] \right\} \ . \ \delta_0 \right].
\nonumber
\end{equation}
where $\delta_0 = (\delta_0^{1}, \cdots, \delta_0^{p})\tp \in \R^p$.

\end{document}